\newtheorem{definition}{Definition}
\newtheorem{theorem}{Theorem}
\newtheorem{corollary}{Corollary}
\newcommand{\eref}[1]{(\ref{#1})}
\newcommand{\sref}[1]{Section~\ref{#1}}
\newcommand{\fref}[1]{Figure~\ref{#1}}
\newcommand{\cref}[1]{Constraint~\ref{#1}}
\newcommand{\ignore}[1]{}
\begin{document}

\title{A Game-Theoretic Framework for Decentralized Cooperative Data Exchange using Network Coding}
\author{
   \authorblockN{Ahmed Douik, \textit{Student Member, IEEE}, Sameh Sorour, \textit{Member, IEEE}, Hamidou Tembine, \textit{Senior Member, IEEE},\\ Tareq Y. Al-Naffouri, \textit{Member, IEEE}, and Mohamed-Slim Alouini, \textit{Fellow, IEEE}}
   {\thanks { 
Ahmed Douik and Mohamed-Slim Alouini are with Computer, Electrical and Mathematical Sciences and Engineering (CEMSE) Division at King Abdullah University of Science and Technology (KAUST), Thuwal, Makkah Province, Saudi Arabia, email: \{ahmed.douik,slim.alouini\}@kaust.edu.sa

Sameh Sorour is with the Electrical Engineering Department, King Fahd University of Petroleum and Minerals (KFUPM), Dhahran, Eastern Province, Saudi Arabia, email:samehsorour@kfupm.edu.sa

Hamidou Tembine is with the SRI Center for Uncertainty Quantification (CEMSE) Division at King Abdullah University of Science and Technology (KAUST), Thuwal, Makkah Province, Saudi Arabia, email: tembine@ieee.org

Tareq Y. Al-Naffouri is with both the CEMSE Division at King Abdullah University of Science and Technology (KAUST), Thuwal, Makkah Province, Saudi Arabia, and the Electrial Engineering Department at King Fahd University of Petroleum and Minerals (KFUPM), Dhahran, Eastern Province, Saudi Arabia, e-mail: tareq.alnaffouri@kaust.edu.sa.

This work is an extended version of work \cite{gameconf} submitted to Globecom, Austin, Texas, USA, 2014.
}}
    }

\maketitle

\IEEEoverridecommandlockouts

\begin{abstract}
In this paper, we introduce a game theoretic framework for studying the problem of minimizing the delay of instantly decodable network coding (IDNC) for cooperative data exchange (CDE) in decentralized wireless network. In this configuration, clients cooperate with each other to recover the erased packets without a central controller. Game theory is employed herein as a tool for improving the distributed solution by overcoming the need for a central controller or additional signaling in the system. We model the session by self-interested players in a non-cooperative potential game. The utility functions are designed such that increasing individual payoff results in a collective behavior achieving both a desirable system performance in a shared network environment and the Nash bargaining solution. Three games are developed: the first aims to reduce the completion time, the second to reduce the maximum decoding delay and the third the sum decoding delay. We improve these formulations to include punishment policy upon collision occurrence and achieve the Nash bargaining solution. Through extensive simulations, our framework is tested against the best performance that could be found in the conventional point-to-multipoint (PMP) recovery process in numerous cases: first we simulate the problem with complete information. We, then, simulate with incomplete information and finally we test it in lossy feedback scenario. Numerical results show that our formulation with complete information largely outperforms the conventional PMP scheme in most situations and achieves a lower delay. They also show that the completion time formulation with incomplete information also outperforms the conventional PMP.
\end{abstract}

\begin{keywords}
Cooperative data exchange, instantly decodable network coding, non-cooperative games, potential game, Nash equilibrium.
\end{keywords}

\section{Introduction}

\subsection{Network Coding}

Since its introduction in \cite{850663}, NC was shown to be a promising technique to significantly improve the throughput and delays of packet recovery especially in wireless erasure networks, due to the broadcast nature of their transmissions. These merits are essential for real time applications requiring reliable transmissions and fast recovery over erasure channels, such as multimedia streaming \cite{5753573}.

Two important classes of NC for such applications can be distinguished in the literature: the Random Network Coding (RNC) \cite{4015713,1228459} and the Opportunistic Network Coding (ONC) \cite{ref9,2383273}. RNC is implemented by combining packets with independent, random and non zero coefficients \cite{4015713}. Despite its attractive benefits such as optimality in number of transmissions for broadcast applications and ability to recover even without feedback \cite{1228459}, RNC is not suitable for the applications of our interest since it does not allow progressive decoding of the frame, is not optimal for multipoint to multipoint communications (multicast) and require expensive computation at the clients to decode the frame. In ONC, packet combinations are selected according to the received/lost packet state of each client \cite{2383273}. ONC was show to be a graceful solution for packet recovery for wireless network \cite{49413587}.

One ONC subclass that suits most of the aforementioned applications is the instantly decodable network coding (IDNC) since it provides instant and progressive decoding of packets. IDNC can be implemented using binary XOR to encode and decode packets. Furthermore, no buffer is needed at the clients to store non instantly decodable packets for future decoding possibilities. Thanks to its merits, IDNC was an intensive subject of research \cite{ref2,ref3,ref6,xiao1,letterarxiv,xiao2,ref12,ref18,5753573,6620795,6655395,6725590,6766433,6120247}. In \cite{ref2,ref3}, the sum decoding delay for PMP configuration was studied for perfect feedback then extended to limited feedback scenario in \cite{refsameh,refahmed,ref17,letterarxiv}. The maximum decoding delay was introduced in \cite{vtc} as a more reliable delay metric in IDNC, and algorithms to minimize this delay for PMP configuration have been proposed. In \cite{ref4}, the authors studied the problem of minimizing the completion time in IDNC. The completion time was proofed to be related to the decoding delay in \cite{confarxiv} and can be better controlled with it. 

\subsection{Motivation and Related Work}

In all aforementioned works, the base station of a point-to-multipoint network (such as cellular, Wi-Fi and WiMAX and roadside to vehicle networks) was assumed to be responsible for the recovery of erased packets. This can pose a threat on the resources of such base stations and their abilities to deliver the required huge data rates especially in future wireless standards. This problem becomes more severe in roadside to vehicle networks since the vehicles usually bypass the roadside senders very fast and thus cannot rely on it for packet recovery but rather on completing the missing packets among themselves. One alternative to this problem is the notion of cooperative data exchange (CDE) introduced in \cite{6404743}. In this configuration, clients can cooperate to exchange data by sending IDNC recovery packets to each other over short range and more reliable communication channels, thus allowing the base station to serve other clients. This CDE model is also important for fast and reliable data communications over ad-hoc networks, such vehicular and sensor networks. Consequently, it is very important to study the minimization of delays and number of transmissions in such IDNC-based CDE systems.

Unlike conventional point-to-multipoint scenario, the IDNC based CDE systems require not only decisions on packet combinations but also on which client to send in every transmission in order to achieve a certain quality for one of the network metrics. Recently Aboutorab and al. \cite{6620795} considered the problem of minimizing the sum decoding delay for CDE in a centralized fashion. By centralized, we mean that a central unit (such as the base station in the cellular example) takes the decisions on which client to send and which packet combination in each transmission.

The aforementioned work considered a perfect and prompt feedback from all the players. This assumption is too idealistic given the impairments on feedback link \cite{refjournal} of wireless networks, due to shadowing, high interference and fading. In this situation, players will need to transmit several subsequent packets without having any information (or partial information) about their reception status at the other players. Moreover, this scenario introduce a new dimension to the problem since the information at the different players is no longer common knowledge. When IDNC based CDE is used in such uncertainties and non symmetric information, the players will no longer be certain on the outcome of their actions, and thus will not be certain about the resulting completion time and decoding delays. 

\subsection{Contributions}

In this paper, we introduce a game theoretic framework for studying the problem of minimizing the delay of instantly decodable network coding for cooperative data exchange in decentralized wireless network. The problem is modeled as cooperative control problem using game theory as a tool for improving the distributed solution by overcoming the need for a central controller or additional signaling in the system.

Cooperative control problems entail numerous autonomous players seeking to collectively achieve a global objective. The network coding problem is one example of a cooperative control problems, in which the global objective is for all players to efficiently use a common resource by opportunistically taking advantage of the possible coding occasions. The central challenge in cooperative control problems is to derive a local control mechanism for the individual players such that the players operate in a manner that collectively serves the desired global objective.

In this paper, we derive expressions for the individual utility functions in such way that increasing individual payoff results in a collective behavior achieving a desirable system performance (minimizing one of the delay aspects) in a shared network environment and achieving the Nash bargaining solution. We then improve these game formulations to include punishment policy and reduce the set of equilibrium to the one dimensional line containing the Nash bargaining solution of our interest. To the best of our knowledge, using game theory tools to model IDNC-based CDE has not been addressed in the literature and only heuristic algorithm were proposed to solve the problem in a centralized fashion \cite{6620795}. Moreover, this work can serve as a background to build more complicated system such as the multicast in which the packet demand of each player can differ and players are not all in the transmission range of each other.

The rest of this paper is divided as follows: Background about game theory and specially potential games is briefly recalled in \sref{sec:back}. In \sref{sec:model}, we present our network model and protocol. The game parameters, formulations and equilibrium investigation are presented in \sref{sec:game1}. The punishment policy and the new game formulations with their equilibrium are provided in \sref{sec:game2}. In \sref{sec:algo}, we present the algorithm used to simulate the system. \sref{sec:ext} presents an extension of our study to the lossy feedback scenario. Before concluding in \sref{sec:concl}, simulations results are illustrated in \sref{sec:simul}.

\section{Background: Non-cooperative Games}\label{sec:back}

\subsection{Definitions and Notations}

We define a finite stochastic non-cooperative game, like in \cite{Lasaulce:2011:GTL:2086746}, with a common state by the 6-uplet:
\begin{align}
\mathcal{G}= (\mathcal{M},\{\overline{\mathcal{A}}_i\}_i,\Omega,\{\mathcal{A}_i\}_i,\{\alpha_i\}_i,q,\{\mathcal{U}_i\}_i) ,
\end{align} 
where:
\begin{itemize}
\item $\mathcal{M}=\{1,...,M\}$ is the set of players,
\item $\{\overline{\mathcal{A}}_i\}_i$ is the set of all possible actions during the course of the game,
\item $\Omega$ is the set of possible states of the game,
\item $\mathcal{A}_i(\omega)$ is the set of possible action for player $i$ in the state $w \in \Omega$ of the game,
\item $\alpha_i: \Omega \longrightarrow 2^{\overline{\mathcal{A}}_i}$ is the correspondence determining the possible actions at a given state of a game,
\item $q_t$ is the conditional distribution of the transition probability from state to state. For independent states games, $q$ is the distribution over the set $\Omega$ and can be ignored in the definition of the game. Otherwise the game is called competitive Markov decision process \cite{filar-competitiveMDP},
\item $\mathcal{U}_i$ is the utility function of player $i$, which will be defined further in the paper.
\end{itemize}

Let $\omega(t) \in \Omega$ be the state of the game at the stage $t$. For notation simplicity, the set of actions of player $i$ at stage $t$ will be denoted by $\mathcal{A}_i(t)$ instead of $\mathcal{A}_i(\omega(t))$. Let $\mathcal{A}(t) = \mathcal{A}_1(t) \times ... \times \mathcal{A}_M(t)$ be the set of all possible actions that can be taken by all the players at the stage $t$ of the game. 

For an action profile $\underline{a}_t=(a_1(t),...,a_M(t))^{\textbf{T}} \in \mathcal{A}(t)$, let $\underline{a}_{t,-i}$ denote the profile of players other than player $i$. In other words, $\underline{a}_{t,-i}=(a_1(t),...,a_{i-1}(t),a_{i+1}(t),...a_M(t))^{\textbf{T}}$. The subscript $^{\textbf{T}}$ denote the transpose operator. We can write a profile $\underline{a}_t$ of actions as $(\underline{a}_{t,i},\underline{a}_{t,-i})$. Similarly, the notation $\mathcal{A}_{-i}(t) = \prod\limits_{j \neq i} \mathcal{A}_j(t)$ refers to the set of possible actions of all the player other than player $i$ at stage $t$ of the game. Let $\underline{h}_t = (\boldsymbol{\omega}(1),\underline{a}(1),...,\underline{a}(t-1),\boldsymbol{\omega}(t) )^{\textbf{T}}$ be the history of the game at stage $t$ that lies in the set:
\begin{align}
\mathcal{H}_t = \left( \bigotimes_{t^\prime = 1}^{t-1} \Omega \times \mathcal{A}(t^\prime)  \right) \times \Omega.
\end{align}
The utility function $\mathcal{U}_i$ for player $i$ is defined as:
\begin{align}
\begin{tabular}{c|ccc}
$\mathcal{U}_i:$ & $\mathcal{A}(t) \times \mathcal{H}_t$ & $\longrightarrow$ & $\mathds{R}$  \\
 & $ (\underline{a}_t,\underline{h}_t)$ & $\longmapsto$ & $ \mathcal{U}_i(\underline{a}_t,\underline{h}_t),$
\end{tabular}
\end{align}
where in the notation $X \longrightarrow Y$, $X$ refers to the set of arguments and $Y$ the set of images by the function and $x \longmapsto f(x)$ gives the mapping of each argument.

We may write $\mathcal{U}_i(\underline{a}_t,\underline{h}_t)$ as $\mathcal{U}_i(\underline{a}_{t,i},\underline{a}_{t,-i},\underline{h}_t)$. For clarity purposes, the notation $\textbf{X}$ refers to a matrix whose $i$th column  is $\underline{X}_i$. The notation $\underline{x}$ refers to a vector whose $i$th entry is $x_i$. We denote by $\{0,1\}^{x\times y}$ the set of matrices of dimension $x \times y$ containing only $0$s and $1$s. We also use the notation $\{0,1\}^{x}$ to refer to the set $\{0,1\}^{x\times 1}$. The notation $[\underline{X}_1,...,\underline{X}_n]$ refers to the matrix whose $i$th column is the vector $\underline{X}_i$ and the notation $\textbf{X}=[x_{ij}]$ refers to a matrix $\textbf{X}$ whose $i$th row and $j$ column is the element $x_{ij}$. The game $\mathcal{G}$ is said to be finite if the number of times it is played is finite. For such games, let $T$ be the final stage of the game.

\subsection{Potential Games}

Potential games have been introduced in \cite{Monderer1996124}. The definition of a potential game $\mathcal{G}$ is given by:
\begin{definition}
The game $\mathcal{G}$ is an \emph{exact} potential game if there exists a function $\phi$ such that:
\begin{align}
&\hspace{2cm}\forall~ i \in \mathcal{M}, \forall~ t , \forall~ \underline{a}_t, \underline{a}_t^\prime \in \mathcal{A}(t) ,  \\
&\mathcal{U}_i(\underline{a}_t,\underline{h}_t)-\mathcal{U}_i(\underline{a}_{t,i}^\prime,\underline{a}_{t,-i},\underline{h}_t) = \nonumber \\
& \hspace{3.5cm}\phi(\underline{a}_t,\underline{h}_t)-\phi(\underline{a}_{t,i}^\prime,\underline{a}_{t,-i},\underline{h}_t).\nonumber
\end{align}
In other words, a game $\mathcal{G}$ is an \emph{exact} potential game if there is a function $\phi$ that measures exactly the difference in the utility due to unilaterally deviation of each player \cite{Lasaulce:2011:GTL:2086746}.
\end{definition}

Such a function $\phi$ is called the \emph{exact} potential of the game. Note that such potential does not directly guarantee the \textit{Pareto optimality} of the \textit{Nash Equilibrium} (see Cournot oligopolies \cite{Monderer1996124}). Both \textit{Pareto optimality} and \textit{Nash Equilibrium} will be defined in the next section. Instead of being a warranty of Pareto efficiency, the potential function can be seen as a quantification of the disagreement among the players \cite{Monderer1996124}. In the dynamic system \cite{1660950}, the potential represents a Lyapunov function of the game.

Note that other type of potential games can be found in the literature: the weighted, ordinal, generalized and best-response potential games \cite{Voorneveld2000289}. The theorems stated in the next section will not depend on the nature of the potential game considered and hold for all of them \cite{Lasaulce:2011:GTL:2086746} as far as the game is potential.

\subsection{Equilibrium Existence and Pareto Optimality}

The definition of the Pure Nash Equilibrium (NE) is the following:
\begin{definition}
An action profile $\underline{a}_t^* \in \mathcal{A}(t)$ is called a Pure Nash equilibrium if:
\begin{align}
\forall~i \in \mathcal{M},~ \mathcal{U}_i(\underline{a}_{t}^*,\underline{h}_t) = \underset{\underline{a}_{t,i} \in \mathcal{A}_i(t)}{\text{max}} \mathcal{U}_i(\underline{a}_{t,i},\underline{a}_{t,-i}^*,\underline{h}_t).
\end{align}
In other words, a NE is an action profile $\underline{a}_t^*$ in which no player can increase his utility by unilateral deviation. In engineering system, the NE is a stable point to operate \cite{jstor}.
\end{definition}

An attractive property of the NE is called the Pareto-Optimum Nash Equilibrium (PONE) which is defined as:
\begin{definition}
An action profile $\underline{a}_t^* \in \mathcal{A}(t)$ is called a Pareto-Optimum Nash Equilibrium if for all NE action profile $\underline{b}_{t}^*$, we have:
\begin{align}
\forall~i \in \mathcal{M},~ \mathcal{U}_i(\underline{a}_{t}^*,\underline{h}_t) \geq \mathcal{U}_i(\underline{b}_{t}^*,\underline{h}_t).
\end{align}
In other words, the PONE is the NE that achieves the highest utility for all the players among all the other NE.
\end{definition}

General results about equilibrium existence and uniqueness are provided in \cite{1965}. Since our problem is a cooperative control game, then we can make use of the results of the potential games \cite{4814554}. Before stating the main results concerning equilibrium of potential games, we first define the coordination game \cite{1660950}:
\begin{definition}
Let $\mathcal{G}= (\mathcal{M},\{\overline{\mathcal{A}}_i\}_i,\Omega,\{\mathcal{A}_i\}_i,\{\alpha_i\}_i,q,\{\mathcal{U}_i\}_i)$ be a potential game with a potential function $\phi$. The game $\mathcal{G}^\prime= (\mathcal{M},\{\overline{\mathcal{A}}_i\}_i,\Omega,\{\mathcal{A}_i\}_i,\{\alpha_i\}_i,q,\phi)$ is called the coordination game of $\mathcal{G}$.
\end{definition}
The following theorem gives the relationship in terms of equilibrium between the potential game and its associated coordination game:
\begin{theorem}
Let $\mathcal{G}$ be a potential game with potential $\phi$ and $\mathcal{G}^\prime$ its associated coordination game. Then the set of NEs of $\mathcal{G}$ coincides with the set of NEs of $\mathcal{G}^\prime$. Moreover, the actions profile $\underline{a}_t^* \in \mathcal{A}(t)$ maxima of $\phi$ are NE of $\mathcal{G}$.
\label{th1}
\end{theorem}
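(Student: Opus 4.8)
The plan is to prove both assertions by directly unwinding the definitions of Nash equilibrium and of the exact potential, exploiting the fact that in the coordination game $\mathcal{G}^\prime$ every player shares the single common utility $\phi$. Throughout I would fix a stage $t$ and a history $\underline{h}_t$, since the equilibrium notion is defined for a given state and history.

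For the coincidence of the two NE sets, I would first rewrite the Nash equilibrium condition in its equivalent deviation form: a profile $\underline{a}_t^*$ is a NE of $\mathcal{G}$ if and only if, for every player $i\in\mathcal{M}$ and every admissible unilateral deviation $\underline{a}_{t,i}\in\mathcal{A}_i(t)$,
\[
\mathcal{U}_i(\underline{a}_t^*,\underline{h}_t)-\mathcal{U}_i(\underline{a}_{t,i},\underline{a}_{t,-i}^*,\underline{h}_t)\ge 0 .
\]
The key step is to invoke the exact potential property with $\underline{a}_t=\underline{a}_t^*$ and $\underline{a}_t^\prime=(\underline{a}_{t,i},\underline{a}_{t,-i}^*)$, which replaces this utility gap by the corresponding potential gap
\[
\phi(\underline{a}_t^*,\underline{h}_t)-\phi(\underline{a}_{t,i},\underline{a}_{t,-i}^*,\underline{h}_t).
\]
Since in $\mathcal{G}^\prime$ the utility of every player is exactly $\phi$, nonnegativity of this potential gap for all $i$ and all deviations is precisely the NE condition for $\mathcal{G}^\prime$. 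Because the potential property is an equality and therefore holds in both directions, each implication is an equivalence, so the two conditions characterize the same profiles and the two NE sets coincide.

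For the second claim, that maximizers of $\phi$ are NE of $\mathcal{G}$, I would start from a global maximizer $\underline{a}_t^*$ of $\phi$ over $\mathcal{A}(t)$. By definition of a maximum, no profile yields a larger value of $\phi$; restricting to profiles differing from $\underline{a}_t^*$ only in the $i$th coordinate gives
\[
\phi(\underline{a}_t^*,\underline{h}_t)\ge \phi(\underline{a}_{t,i},\underline{a}_{t,-i}^*,\underline{h}_t)
\]
for every $i$ and every $\underline{a}_{t,i}\in\mathcal{A}_i(t)$. This is exactly the statement that $\underline{a}_t^*$ is a NE of the coordination game $\mathcal{G}^\prime$, and by the first part it is therefore a NE of $\mathcal{G}$.

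I do not expect a substantial obstacle: the whole argument is a definitional unwinding driven by the potential identity. The only point requiring care is the bookkeeping of the ``other players'' subscript $-i$ when applying the potential identity to a unilateral deviation, so that the deviating profile $(\underline{a}_{t,i},\underline{a}_{t,-i}^*)$ is matched correctly against $\underline{a}_t^*$ in the definition of the exact potential. I would also note that global maximality is used only through its consequence for unilateral deviations, so the converse statement (that a NE need maximize $\phi$) is neither claimed nor true in general, consistent with the earlier remark on Cournot oligopolies.
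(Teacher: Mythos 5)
Your proof is correct. The paper itself does not prove this theorem internally; it simply defers to the literature (\cite{1660950}), so your argument supplies exactly the content the paper leaves implicit, and it is the standard one: the exact-potential identity turns each unilateral-deviation utility gap in $\mathcal{G}$ into the identical potential gap, which is the deviation gap in $\mathcal{G}^\prime$, giving the equivalence of the two NE conditions; and a global maximizer of $\phi$ satisfies the unilateral-deviation inequality trivially, hence is a NE of $\mathcal{G}^\prime$ and therefore of $\mathcal{G}$. Your handling of the $-i$ bookkeeping and your closing remark that the converse fails (consistent with the paper's Cournot comment) are both accurate. One small point worth flagging: the paper remarks just before this theorem that the results hold for weighted, ordinal, generalized and best-response potential games as well, whereas your key step uses the potential identity as an \emph{equality}, which is specific to exact potentials. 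Since the NE condition only involves the \emph{sign} of the deviation gap, your argument adapts immediately to weighted and ordinal potentials (where the gap's sign, rather than its value, is preserved), but it is worth stating that weakening explicitly if you want the full generality the paper claims.
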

\begin{proof}
The proof of this theorem can be found in \cite{1660950}.
\end{proof}
Note that the converse of this theorem is not generally true i.e. not all the NEs of $\mathcal{G}$ are maxima of $\phi$ \cite{1660950}. The existence of a NE in potential game is ensured by this corollary:
\begin{corollary}
Every finite potential game admits at least one NE. 
\label{cor}
\end{corollary}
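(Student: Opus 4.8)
The plan is to reduce the existence of a Nash equilibrium to the existence of a maximizer of the potential function $\phi$, exploiting the second assertion of \thref{th1} that every maximizer of $\phi$ is a NE of $\mathcal{G}$. First I would unpack the finiteness hypothesis: since $\mathcal{G}$ is a finite game, at every stage $t$ each individual action set $\mathcal{A}_i(t)$ is finite, and hence the joint action set $\mathcal{A}(t)=\mathcal{A}_1(t)\times\cdots\times\mathcal{A}_M(t)$ is a finite (and, assuming each player always has at least one available action, nonempty) set.

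Next I would fix the history $\underline{h}_t$ and regard the potential as a real-valued map $\phi(\cdot,\underline{h}_t):\mathcal{A}(t)\longrightarrow\mathds{R}$ defined on this finite nonempty domain. Any real-valued function on a nonempty finite set attains its maximum, so there exists an action profile $\underline{a}_t^*\in\mathcal{A}(t)$ such that
\begin{align}
\phi(\underline{a}_t^*,\underline{h}_t)\geq\phi(\underline{a}_t,\underline{h}_t)\quad\text{for all }\underline{a}_t\in\mathcal{A}(t).
\end{align}
Finally, invoking \thref{th1}, which guarantees that the maxima of $\phi$ are NE of $\mathcal{G}$, I would conclude that $\underline{a}_t^*$ is a pure Nash equilibrium, establishing existence.

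I do not anticipate a genuine obstacle, as the argument is an existence-by-extreme-value statement specialized to the discrete setting rather than a deep structural result. The only points requiring care are ensuring the domain of $\phi$ is nonempty and finite: finiteness is immediate from the definition of a finite game, while nonemptiness follows from the implicit assumption that the correspondence $\alpha_i$ assigns each player at least one feasible action in every reachable state. With these in hand the conclusion is an immediate consequence of \thref{th1}, so the entire weight of the corollary rests on that theorem together with the finiteness of $\mathcal{A}(t)$.
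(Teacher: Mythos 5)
Your proof is correct and follows essentially the same route as the paper's: both arguments observe that the potential $\phi$ attains a maximum over the finite action space and then invoke \thref{th1} to conclude that any maximizer is a NE. Your version is simply a more careful unpacking of the paper's one-line argument (noting nonemptiness of $\mathcal{A}(t)$ and that finiteness of the joint action set is what makes the maximum attained), so there is nothing substantively different to compare.
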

\begin{proof}
The proof comes directly from Theorem \ref{th1}. For a finite game, the potential function is finite and therefore have at least one maximum and thus the game admits at least one NE.
\end{proof}
The following theorem characterize the PONE in a coordination game:
\begin{theorem}
Let $\mathcal{G}$ be a potential game with potential $\phi$ such that its corresponding coordination game is itself i.e. $\mathcal{G}=\mathcal{G}^\prime$ then the maximum of $\phi$ is the PONE of $\mathcal{G}$.
\label{th2}
\end{theorem}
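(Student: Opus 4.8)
The plan is to use Theorem~\ref{th1} for the equilibrium-membership part of the claim, and then to exploit the precise meaning of the hypothesis $\mathcal{G}=\mathcal{G}^\prime$ to obtain Pareto dominance almost for free. First I would make the hypothesis explicit. By the definition of the coordination game, $\mathcal{G}^\prime$ is obtained from $\mathcal{G}$ by replacing every individual utility $\mathcal{U}_i$ with the single common potential $\phi$. Hence the assumption $\mathcal{G}=\mathcal{G}^\prime$ is equivalent to the identity
\begin{align}
\mathcal{U}_i(\underline{a}_t,\underline{h}_t) = \phi(\underline{a}_t,\underline{h}_t), \quad \forall~ i \in \mathcal{M},~ \forall~ \underline{a}_t \in \mathcal{A}(t),
\end{align}
that is, all players share one and the same payoff, namely $\phi$.

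Next, let $\underline{a}_t^*$ denote a maximizer of $\phi$ over $\mathcal{A}(t)$, which exists because the game is finite (this is exactly the mechanism behind the proof of Corollary~\ref{cor}). By Theorem~\ref{th1}, every maximum of the potential is a NE of $\mathcal{G}$, so $\underline{a}_t^*$ is already a NE; this discharges the equilibrium requirement in the definition of a PONE. It then remains only to verify the Pareto-dominance inequality against every competing equilibrium.

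For this, I would take an arbitrary NE $\underline{b}_t^*$ and compare payoffs through $\phi$. Since $\underline{a}_t^*$ globally maximizes $\phi$, we have $\phi(\underline{a}_t^*,\underline{h}_t) \geq \phi(\underline{b}_t^*,\underline{h}_t)$, and substituting the identity above turns this into $\mathcal{U}_i(\underline{a}_t^*,\underline{h}_t) \geq \mathcal{U}_i(\underline{b}_t^*,\underline{h}_t)$ for every player $i$. As $\underline{b}_t^*$ was an arbitrary NE, this is precisely the defining property of a PONE, which finishes the argument.

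I expect the only genuinely substantive step to be the first one, namely reading $\mathcal{G}=\mathcal{G}^\prime$ correctly as the collapse of all utility functions onto the single common function $\phi$. Once that is recognized, the conclusion is immediate: a profile that maximizes $\phi$ simultaneously maximizes each player's utility, and therefore dominates every other equilibrium. No case analysis and no compactness or continuity argument beyond the existence of a maximizer is required, and I would note in passing that the argument in fact yields the stronger fact that $\underline{a}_t^*$ dominates \emph{all} profiles, not merely the Nash equilibria.
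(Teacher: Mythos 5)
Your proof is correct and follows essentially the same route as the paper's: invoke Theorem~\ref{th1} to certify that a maximizer of $\phi$ is a NE, then use the hypothesis $\mathcal{G}=\mathcal{G}^\prime$ (all utilities coincide with $\phi$) to conclude it Pareto-dominates every other NE. Your version is merely more explicit about unpacking $\mathcal{G}=\mathcal{G}^\prime$ and about the existence of the maximizer, and your closing remark that the maximizer dominates all action profiles (not only equilibria) is a harmless strengthening already implicit in the paper's argument.
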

\begin{proof}
According to Theorem \ref{th1}, the maximum $\underline{a}_t^*$ of $\phi$ is a NE of $\mathcal{G}$ and since $\phi$ is the utility function of $\mathcal{G}$ therefore $\underline{a}_t^*$ is a NE that yields the highest utility. More specifically $\underline{a}_t^*$ yields the highest utility among all the NE of $\mathcal{G}$ and thus it is the PONE.
\end{proof}

\section{Network Model and Protocol} \label{sec:model}

\subsection{Network Model}

The network we consider in this paper consists of a set $\mathcal{M}=\{1,...,M\}$ geographically close clients (players) that require the reception of source packets that the base station (BS) holds. Each player is interested in receiving the frame $\mathcal{N}=\{1,...,N\}$ of source packets regardless of the order. 

In the first $n$ time slots, the BS broadcasts the $N$ source packets of the frame $\mathcal{N}$ uncoded. Each player $i$ is experiencing a packet erasure probability $q_i$ assumed to be constant during this phase. Each player listens to the transmitted packets and sends an acknowledgement (ACK) upon each successful reception of each packet. We assume that at the end of this \emph{initial} phase, each packet of the frame is at least acknowledged by one of the players. Otherwise, this packet is re-transmitted by the BS.

After this initialization phase, for each player $i$, the packets of the frame $\mathcal{N}$ can be in one of the following sets:
\begin{itemize}
\item The Has set (denoted by $\mathcal{H}_i$): The sets of packets successfully received by player $i$.
\item The Wants set (denoted by $\mathcal{W}_i$): The sets of packets that were erased at player $i$. Clearly, we have $\mathcal{W}_i = \mathcal{N} \setminus \mathcal{H}_i$.
\end{itemize}

In this configuration, we assume a perfect reception of the acknowledgement by all the players and that each player knows the packets sets of all the other players. Each player stores the information obtained after the transmission at time $(t-1)$ in a \emph{state matrix} (SM) $\textbf{S}(t) = [s_{ij}(t)],~ \forall~ i \in \mathcal{M},~ \forall~j \in \mathcal{N}$ such that:
\begin{align}
s_{ij}(t) =
\begin{cases}
0 \hspace{0.9 cm}& \text{if } j \in \mathcal{H}_i(t) \\
1 \hspace{0.9 cm}& \text{if } j \in \mathcal{W}_i(t)  .
\end{cases}
\end{align}

Note that the system model presented at the end of the \emph{initial phase} can be seen as a wireless sensor network that need to exchange a set $\mathcal{N}$ of packets and each node holds a subset (maybe overlapping) of the frame. An application of in wireless sensor network is presented in \sref{sec:algo}. In \sref{sec:ext}, we will extend our study to the limited feedback scenario.

\subsection{Network Protocol}

After the initial transmission, the recovery phase starts. In this phase, the players cooperate to recover their missing packets by transmitting to each other binary XOR encoded packets of the source packets they already hold in order to minimize the decoding delay. 

The packet combination is chosen according to the available packets they have, the information available in the SM and the expected erasure patterns of the links. Let $\textbf{P} = [p_{ij}]$, $i,j\in\mathcal{M}$ denote the packet erasure probability (i.e. the probability to loss a packet) from player $j$ to player $i$. All the packet erasure probabilities are assumed to be constant during the transmission of the frame. Since the packet erasure probability depends not only on the link but also on the available power used to transmit, therefore $p_{ij}$ can be different from $p_{ji}$. We assume that each player knows all the packet erasure probabilities linking him to other players (i.e. player $i$ knows $p_{ji},~\forall~ j \in \mathcal{N}$) and that each transmission can be heard by all the players. Therefore only one player will transmit a packet combination at each time slot. Otherwise, due to interference between transmissions, none of the players will be able to decode a packet.

In this phase, the transmitted coded packets can be one of the following three options for each player $i$:
\begin{itemize}
\item \emph{Instantly Decodable:} A packet is instantly decodable for player $i$ if the encoded packet contain at most one packet the player does not have so far. In other words, it contains \emph{only one packet} from $\mathcal{W}_i$.
\item \emph{Non-Instantly Decodable:} A packet is non instantly decodable for player $i$ if it contains more than one packet missing for that player. In other words, it contains at least two packets from $\mathcal{W}_i$.
\item \emph{Non-innovative:} A packet is non-innovative for player $i$ if it do not allow him to reduce its Wants set. In other words, it does not contains packets from $\mathcal{W}_i$.
\end{itemize}
We define the conventional decoding delay \cite{ref2,ref3} as follows:
\begin{definition}
At any cooperative phase transmission, a player $i$, with non-empty Wants set, experiences a one unit increase of decoding delay if it successfully receivers a packet that is either non-innovative or non-instantly decodable.
\end{definition}
The cooperation decoding delay can be defined as:
\begin{definition}
At any cooperative phase transmission, a player $i$, with non-empty Wants set, experiences a one unit increase of decoding delay if not exactly one player transmitted or only one player transmitted and its conventional decoding delay increases.
\end{definition}

In other words, if more than one or none players transmits, all the players will experience a decoding delay and if player $i$ is the only transmitting player, he will experience a delay along with all players that successfully received a packet that is either non-innovative or non-instantly decodable. In the rest of the paper, we will use the term decoding delay to refer to the cooperative decoding delay. We define the targeted players by a transmission as the players that can instantly decode a packet from that transmission. After each transmission in the recovery phase, its targeted players send acknowledgements consisting of one bit indicating the successful reception. This process is repeated until all players report that they obtained all the packets. Let $T$ be final stage of the game. The definition of $T$ can be written as:
\begin{align}
T = \text{min }\{ t \in \mathds{N}^* \text{ such that } \textbf{S}(t) = \textbf{0} \}.
\end{align}

Since we assume single hop transmissions, which means that all the players are in the transmission range of each other, each of them can already overhear all the feedback sent by the other players and thus the system does not require any additional feedback load.

\section{Game Formulation}\label{sec:game1}

In this section, we first introduce the game parameter to be able to model the problem of minimizing the delay in IDNC as a non-cooperative potential game. We then provide the expression of the utilities for the completion time, the sum decoding delay and the maximum decoding delay games. As first formulation of the problem, we will consider the delay as the natural cost in our games. We finally analyze the different NE equilibrium present in these games. 

\subsection{Game Parameters}

Let $\underline{\kappa}^i(t)$ be the optimal packet combination that player $i$ can generate at the stage $t$ of the game. We have $\underline{\kappa}^i(t) \in \{0,1\}^{N}$ with $\kappa^i_j(t)=1$ means that packet $j$ is included in the packet combination and $0$ otherwise. The mathematical expression of these combination can be found in \cite{jsaccompletion} for the completion time, in \cite{vtc} for the maximum decoding delay and in \cite{ref2} for the sum decoding delay. Since the SM is known by all players, therefore each player can compute the optimal packet combination (or a sub-optimal since the computation of the optimal was shown to be NP-hard) of all the other players.

At each stage of the game, each player has two possible actions: either he transmits or he listens. Therefore, we define the action space of player $i$ at each stage $t$ of the game as $\mathcal{A}_i(t) = \{ \text{transmit }\underline{\kappa}^i(t) , \text{remain silent}\}$. Note that the action space of the players are not symmetric since each player can transmit a different packet combination at each stage. Let $\underline{a}_t$ be the actions taken by all the users at the stage $t$. For simplicity of notation, we will define $\underline{a}_t$ as the following:
\begin{align}
\begin{tabular}{c|ccc}
$\underline{a}_t:$ & $\mathcal{A}(t)$ & $\longrightarrow$ & $\{0,1\}^{M}$  \\
 & $(a_1(t),...,a_M(t))$ & $\longmapsto$ & $ \underline{a}_t = (b_1(t),...,b_M(t))^{\textbf{T}},$
\end{tabular}
\end{align}
where,
\begin{align}
b_i(t) = 
\begin{cases}
0 \hspace{0.5cm}& \text{if } a_i(t) = \{\text{remain silent}\} \\
1 \hspace{0.5cm}& \text{otherwise}.
\end{cases}
\end{align}

The set of targeted players by a packet combination are those that can instantly decode an innovative packet from the combination. Let $\underline{\tau}_{\underline{\kappa}(t)}$ be the set of targeted player by the packet combination $\underline{\kappa}$ at the stage $t$ of the game. The mathematical definition of this set is given by:
\begin{align}
\begin{tabular}{c|ccc}
$\underline{\tau}:$ & $\{0,1\}^{N}$ & $\longrightarrow$ & $\{0,1\}^{M}$\\
 & $\underline{\kappa}(t)$ & $\longmapsto$ & $\underline{\tau}_{\underline{\kappa}(t)} = (\tau_1(\underline{\kappa}(t)),...,\tau_M(\underline{\kappa}(t))$,
\end{tabular}
\end{align}
where:
\begin{align}
\tau_i(\underline{\kappa}(t)) = 
\begin{cases}
1 \hspace{0.5cm}& \text{if } \sum_{j=1}^N s_{ij}(t)\kappa_j(t)=1 \\
0 \hspace{0.5cm}& \text{otherwise}.
\end{cases}
\end{align}

The players that experience a conventional decoding delay after a transmission are those with non-empty Wants set and are not targeted by the transmission. Define $\underline{\overline{\tau}}_{\underline{\kappa}(t)} = 1 - \underline{\tau}_{\underline{\kappa}(t)}$ as the set of non targeted players and let $\underline{M}^w(t)$ be the set of players with non-empty Wants set defined as follows:
\begin{align}
\underline{M}^w_i(t) = 
\begin{cases}
0 \hspace{0.5cm}& \text{if } \sum_{j=1}^N s_{ij}(t)=0 \\
1 \hspace{0.5cm}& \text{otherwise}.
\end{cases}
\end{align}
The state $\boldsymbol{\omega}$ of the game is the erasure patterns of the links between each couple of players. These states can be described by the following formula:
\begin{align}
\begin{tabular}{c|ccc}
$\boldsymbol{\omega}:$ & $\mathds{N}^*$ & $\longrightarrow$ & $\Omega = \{0,1\}^{M \times M}$  \\
 &  $t$ & $\longmapsto$ & $\boldsymbol{\omega}(t) = [X_{ij}], 1 \leq i,j \leq M$,
\end{tabular}
\end{align}
where $X_{ij}$ is a Bernoulli random variable defined as follows:
\begin{align}
X_{ij} = 
\begin{cases}
0 \hspace{0.5cm}& \text{with probability } p_{ij} \\
1 \hspace{0.5cm}& \text{with probability } 1-p_{ij}.
\end{cases}
\end{align}

The $X_{ij}$ are independent of each other and therefore the $\boldsymbol{\omega}(t)$ are independent identically distributed (iid). The game can be seen as a random matrix game. We now can compute the decoding delay $\underline{\mathcal{D}}_{w,\kappa}$ experienced by all the users when user $i$ sends the packet combination $\kappa$ at the stage $t$ using the following expression:
\begin{align}
\begin{tabular}{c|ccc}
$\underline{\mathcal{D}}_{w_i,\kappa}(t):$ & $\{0,1\}^{M+N}$ & $\longrightarrow$ & $\{0,1\}^{M}$  \\
 & $(\underline{\omega}_i(t),\underline{\kappa}(t))$ & $\longmapsto$ & $\underline{\omega}_i(t) \circ \underline{\overline{\tau}}_{\underline{\kappa}(t)} \circ \underline{M}^w(t)$,
\end{tabular}
\label{qwer}
\end{align}
where $ \circ $ is the Hadamard product. We also define the cumulative decoding delay experienced by all players since the beginning of the recovery phase (beginning of the game) until the stage $t$ of the game:
\begin{align}
\begin{tabular}{c|ccc}
$\underline{\mathds{D}}:$ & $\{0,1\}^{M} \times \mathcal{H}_t$ & $\longrightarrow$ & $\{0,1\}^{M}$  \\
 & $ (\underline{a}_t,\underline{h}_t)$ & $\longmapsto$ & $ \underline{\mathds{D}}(\underline{a}_t,\underline{h}_t) $,
\end{tabular}
\end{align}
where:
\begin{align}
&\underline{\mathds{D}}(\underline{a}_t,\underline{h}_t) = \nonumber \\
&\begin{cases}
\underline{\mathds{D}}(\underline{a}_{t-1},\underline{h}_{t-1}) + \underline{M}^w(t) \hspace{0.2cm}& \text{if } ||\underline{a}_t||_1 \neq 1\\
\underline{\mathds{D}}(\underline{a}_{t-1},\underline{h}_{t-1}) + \underline{\mathcal{D}}_{w_i,\kappa^i}(t) \hspace{0.2cm}& \text{otherwise } \\
 \text{with } i \text{ such that } a_i(t) = ||\underline{a}_t||_1. &
\end{cases}
\end{align}

In the case of the Point to Multipoint (PMP) recovery process (when only the base station is transmitting), the completion time $\mathcal{C}_i$ can be approximated \cite{jsaccompletion} by the following expression:
\begin{align}
\tilde{\mathcal{C}}_i = \cfrac{\mathcal{W}_i + \mathds{D}_i - q_i}{1-q_i}.
\end{align}
Since in the CDE, all users may be transmitting to each other, then the completion time can be approximated by:
\begin{align}
\mathcal{C}_i = \cfrac{\mathcal{W}_i + \mathds{D}_i - \overline{p}_i}{1-\overline{p}_i},
\end{align}
where $\overline{p}_i$ is the average erasure probability linking the player $i$ to the other players. This average erasure probability can be expressed in terms of the erasure matrix as follows: $\overline{p}_i= \cfrac{||\underline{P}_i||_1}{M}$. Let $\underline{\mathcal{C}}$ be the vector of the completion times, $\underline{\mathcal{W}}$ the vector of the size of the Wants sets and $\underline{\overline{p}}$ the one of the average erasures. Therefore, we define the expected completion time of each player as follows:
\begin{align}
\begin{tabular}{c|ccc}
$\underline{\mathcal{C}}:$ & $\{0,1\}^{M} \times \mathcal{H}_t$ & $\longrightarrow$ & $\{0,1\}^{M}$  \\
& $(\underline{a}_t,\underline{h}_t)$&$\longmapsto$&$(\underline{\mathcal{W}} + \underline{\mathds{D}}(\underline{a}_t,\underline{h}_t) - \underline{\overline{p}})./(1-\underline{\overline{p}}) $,
\end{tabular} 
\end{align}
where the operator $\underline{x}./\underline{y}$ refers to the division of each element of vector $\underline{x}$ by the element of vector $\underline{y}$. 

\subsection{Utility Functions}

In this section, we give the formulation of the completion time, maximum decoding delay and sum decoding delay games. In this first formulation, we take the cost (-utility) function to be the natural delay in the network.

\textbf{Game 1 (Completion time Game):}
\begin{itemize}
\item \textit{Players}: Users in set $\mathcal{M}$
\item \textit{History }: $\underline{h}_t=$ Channel realization $\boldsymbol{\omega}(t)$ and players' action $\underline{a}_t$ at each stage $t \geq 1$.
\item \textit{Strategies}: Contingency plans for selection transmission policy at each stage $t \geq 1$ and for any given history $\underline{h}_t$. 
\item \textit{Utilities}: $\mathcal{U}_i^{CT}$ for each player $i$, where at each stage $t \geq 1$ and for any given history $\underline{h}_t$ and action profile $\underline{a}_t$:
\end{itemize}
\begin{align}
\begin{tabular}{c|ccc}
$\mathcal{U}_i^{CT}:$ & $\{0,1\}^{M} \times \mathcal{H}_t$ & $\longrightarrow$ & $\mathds{R}$  \\
 & $  (\underline{a}_t,\underline{h}_t)$ & $\longmapsto$ & $ -||\underline{\mathcal{C}}(\underline{a}_t,\underline{h}_t)||_\infty$.
\end{tabular}
\end{align}
Game 1 is a non-cooperative stochastic game \cite{Lasaulce:2011:GTL:2086746}. Since the utility function do not depend on the player, therefore this game belongs to the potential game class.

\textbf{Game 2 (Maximum decoding delay Game):}
\begin{itemize}
\item \textit{Players}: Users in set $\mathcal{M}$
\item \textit{History }: $\underline{h}_t=$ Channel realization $\boldsymbol{\omega}(t)$ and players' action $\underline{a}_t$ at each stage $t \geq 1$.
\item \textit{Strategies}: Contingency plans for selection transmission policy at each stage $t \geq 1$ and for any given history $\underline{h}_t$. 
\item \textit{Utilities}: $\mathcal{U}_i^{MDD}$ for each player $i$, where at each stage $t \geq 1$ and for any given history $\underline{h}_t$ and action profile $\underline{a}_t$:
\end{itemize}
\begin{align}
\begin{tabular}{c|ccc}
$\mathcal{U}_i^{MDD}:$ & $\{0,1\}^{M} \times \mathcal{H}_t$ & $\longrightarrow$ & $\mathds{R}$  \\
 & $  (\underline{a}_t,\underline{h}_t)$ & $\longmapsto$ & $ -||\underline{\mathds{D}}(\underline{a}_t,\underline{h}_t)||_\infty$.
\end{tabular}
\end{align}
Similar to Game 1, Game 2 is also a non-cooperative stochastic potential game.

\textbf{Game 3 (Sum decoding delay Game):}
\begin{itemize}
\item \textit{Players}: Users in set $\mathcal{M}$
\item \textit{History }: $\underline{h}_t=$ Channel realization $\boldsymbol{\omega}(t)$ and players' action $\underline{a}_t$ at each stage $t \geq 1$.
\item \textit{Strategies}: Contingency plans for selection transmission policy at each stage $t \geq 1$ and for any given history $\underline{h}_t$. 
\item \textit{Utilities}: $\mathcal{U}_i^{SDD}$ for each player $i$, where at each stage $t \geq 1$ and for any given history $\underline{h}_t$ and action profile $\underline{a}_t$:
\end{itemize}
\begin{align}
\begin{tabular}{c|ccc}
$\mathcal{U}_i^{SDD}:$ & $\{0,1\}^{M} \times \mathcal{H}_t$ & $\longrightarrow$ & $\mathds{R}$  \\
 & $  (\underline{a}_t,\underline{h}_t)$ & $\longmapsto$ & $ -||\underline{\mathds{D}}(\underline{a}_t,\underline{h}_t)||_1$.
\end{tabular}
\end{align}
Similar to Game 1 and Game 2, Game 3 is also a non-cooperative stochastic potential game.

\subsection{Game Equilibrium Analysis}

The following Theorem gives the set of the NE of Game 1:
\begin{theorem}
The set of NE of Game 1 is:
\begin{align}
E(t) =  
\begin{cases}
\mathcal{A}(t) \text{ if } Z(t) = \varnothing \\
E_1(t)  \text{ otherwise },
\end{cases} 
\end{align}
with
\begin{align}
&E_1(t) = \{ \underline{a}_{t} \in \mathcal{A}(t)  \text{ such that } ||\underline{a}_{t}||_1=1 \text{ or } ||\underline{a}_{t}||_1>2  \nonumber\\
& \qquad \text{ or } (||\underline{a}_{t}||_1=a_i(t)+a_j(t)=2 \text{ and } i,j \notin Z(t))\}.
\end{align}
where the set $Z$ is the set defined by
\begin{align}
Z(t) = \{ j \in \mathcal{M} \text{ such that }  Y_j(t) < Y_0(t) \},
\end{align}
and $Y_0(t) = \underset{i \in Q(t)}{max }\cfrac{1}{1-\overline{p}_i}$ is the increase in the cost function when not exactly one player is transmitting and $Y_j(t) = \underset{i \in Q(t) \cap \underline{\mathcal{D}}_{w_j,\kappa^j}(t)}{max }\cfrac{1}{1-\overline{p}_i}$ the cost when only player $j$ is transmitting with $Q(t)$ defined as:
\begin{align}
&Q(t) = \{ i \in \mathcal{M} \text{ such that }  \\
& \qquad \mathcal{C}_i(\underline{a}_{t-1},\underline{h}_{t-1}) + 1/(1-\overline{p}_i) > ||\underline{\mathcal{C}}(\underline{a}_{t-1},\underline{h}_{t-1})||_\infty \nonumber \\
& \hspace{2cm} \text{ and } M^w_i = 1\}. \nonumber
\end{align} 
\end{theorem}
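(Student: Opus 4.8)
The plan is to exploit that Game 1 is a common-payoff game: the utility $\mathcal{U}_i^{CT}(\underline{a}_t,\underline{h}_t) = -||\underline{\mathcal{C}}(\underline{a}_t,\underline{h}_t)||_\infty$ does not depend on $i$, so it coincides with the exact potential $\phi$ and all players share the same objective. By the definition of the NE, an action profile $\underline{a}_t$ is therefore a NE if and only if no single player, by toggling between ``transmit'' and ``remain silent,'' can strictly decrease the attained maximum completion time $||\underline{\mathcal{C}}(\underline{a}_t)||_\infty$. Since each $\mathcal{A}_i(t)$ has only two actions, the whole problem reduces, for a fixed history $\underline{h}_{t-1}$, to comparing $||\underline{\mathcal{C}}(\underline{a}_t)||_\infty$ across profiles differing in exactly one coordinate.

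The central step is to express the stage cost $||\underline{\mathcal{C}}(\underline{a}_t)||_\infty$ purely as a function of the number of transmitters $k = ||\underline{a}_t||_1$ (together with the identity of the sole transmitter when $k=1$). Writing $M_0 = ||\underline{\mathcal{C}}(\underline{a}_{t-1},\underline{h}_{t-1})||_\infty$ for the incumbent maximum and using that $\mathcal{C}_i$ rises by exactly $1/(1-\overline{p}_i)$ per unit of decoding delay, I would first show that only players in $Q(t)$ can attain the new maximum: a player with $M^w_i=0$, or with $M^w_i=1$ but $\mathcal{C}_i + 1/(1-\overline{p}_i)\le M_0$, keeps completion time at most $M_0$ after any admissible one-unit update. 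Consequently, when $k\neq 1$ the recursion for $\underline{\mathds{D}}$ adds $\underline{M}^w(t)$, every player of $Q(t)$ is delayed, and the stage cost equals a single value governed by $Y_0(t)$ that is the same for all colliding/silent profiles; when $k=1$ with sole transmitter $j$ the update is $\underline{\mathcal{D}}_{w_j,\kappa^j}(t)$, only the players of $Q(t)\cap\underline{\mathcal{D}}_{w_j,\kappa^j}(t)$ are delayed, and the cost is governed by $Y_j(t)$. The inclusion $Q(t)\cap\underline{\mathcal{D}}_{w_j,\kappa^j}(t)\subseteq Q(t)$ yields the monotonicity $Y_j(t)\le Y_0(t)$, so that $j\in Z(t)$ is exactly the event ``transmitting alone, player $j$ strictly lowers the stage cost.''

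With the cost tabulated I would split into the two announced cases. If $Z(t)=\varnothing$, then $Y_j(t)\ge Y_0(t)$ for all $j$, hence $Y_j(t)=Y_0(t)$; the stage cost is the same constant for every profile, every player is indifferent to every deviation, and $E(t)=\mathcal{A}(t)$. If $Z(t)\neq\varnothing$, I enumerate by $k$. For $k=0$, a player in $Z(t)$ switching to transmit strictly lowers the cost, so no all-silent profile is a NE. For $k=1$, the transmitter switching to silence gives the (weakly larger) collision cost and any listener switching to transmit merely creates a collision, so every such profile is a NE. For $k\ge 3$, no single toggle reduces the transmitter count below two, the collision cost is unchanged, and every such profile is a NE. For $k=2$ with transmitters $i,j$, the only candidate improving move is one transmitter going silent, leaving the other alone, which helps strictly iff that remaining player lies in $Z(t)$; hence the profile is a NE precisely when $i,j\notin Z(t)$. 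Collecting the four sub-cases reproduces $E_1(t)$ exactly.

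The main obstacle is the second paragraph: rigorously certifying that the attained maximum is controlled solely by $Q(t)$ and that the collision cost is profile-independent while the sole-transmitter cost is captured by $Y_j(t)$. This rests on the monotonicity of completion times across a stage, on the fact that players outside $Q(t)$ can never overtake $M_0$ under a single unit of delay, and on a careful reading of the two-branch recursion for $\underline{\mathds{D}}$ so that the Hadamard-product delay vector $\underline{\mathcal{D}}_{w_j,\kappa^j}(t)$ is intersected correctly with $Q(t)$; the slight subtlety that $Y_0(t)$ and $Y_j(t)$ are read off the increments $1/(1-\overline{p}_i)$ of the critical players must be reconciled with the definition of $Q(t)$. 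Once the stage cost is pinned down as a function of $k$ and of transmitter membership in $Z(t)$, the deviation analysis of the third paragraph is entirely routine.
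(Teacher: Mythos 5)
Your proposal is correct and follows essentially the same route as the paper's Appendix~A: decomposing the cost as the previous cost plus a stage increment, isolating the critical set $Q(t)$ so that the increment is $Y_0(t)$ for $||\underline{a}_t||_1\neq 1$ and $Y_j(t)$ for a sole transmitter $j$, and then enumerating deviations by $||\underline{a}_t||_1\in\{0,1,2,>2\}$ with membership in $Z(t)$ deciding the two-transmitter case. Your common-payoff observation and the $k=2$ analysis match the paper's argument step for step, so no gap to report.
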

\begin{proof}
The proof of this Theorem can be found in Appendix A.
\end{proof}
The set of NE of Game 2 are given in the following Theorem:
\begin{theorem}
The set of NE of Game 2 is:
\begin{align}
E(t) =  
\begin{cases}
\mathcal{A}(t) \text{ if } Z(t) = \varnothing \\
E_1(t)  \text{ otherwise },
\end{cases} 
\end{align}
with
\begin{align}
&E_1(t) = \{ \underline{a}_{t} \in \mathcal{A}(t)  \text{ such that } ||\underline{a}_{t}||_1=1 \text{ or } ||\underline{a}_{t}||_1>2  \nonumber\\
& \qquad \text{ or } (||\underline{a}_{t}||_1=a_i(t)+a_j(t)=2 \text{ and } i,j \notin Z(t))\}.
\end{align}
where the set $Z$ is the set defined by
\begin{align}
Z(t) = \{i \in \mathcal{M} \text{ such that } \underline{\mathcal{D}}_{\omega_i,\kappa^i} \circ \underline{q}_t = \underline{0 }\},
\end{align}
with $q_i(t)=1$ if $\mathds{D}_i(\underline{a}_{t-1},\underline{h}_{t-1}) = ||\underline{\mathds{D}}(\underline{a}_{t-1},\underline{h}_{t-1})||_\infty$ and $M^w_i = 1$, otherwise $0$.
\end{theorem}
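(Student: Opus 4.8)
The plan is to exploit the fact that, because $\mathcal{U}_i^{MDD}$ does not depend on the index $i$, Game 2 coincides with its own coordination game ($\mathcal{G}=\mathcal{G}'$) under the potential $\phi(\underline{a}_t,\underline{h}_t)=-||\underline{\mathds{D}}(\underline{a}_t,\underline{h}_t)||_\infty$. Hence, by Theorem \ref{th1}, characterizing the pure NE reduces to deciding, for each profile $\underline{a}_t$, whether some single player can strictly decrease the common cost $||\underline{\mathds{D}}(\underline{a}_t,\underline{h}_t)||_\infty$ by a unilateral change of action. Throughout I treat the history $\underline{h}_t$ as fixed, so that $\underline{\mathds{D}}(\underline{a}_{t-1},\underline{h}_{t-1})$, the channel $\boldsymbol{\omega}(t)$, the wants indicator $\underline{M}^w(t)$, and every candidate combination $\underline{\kappa}^i$ are given data; I abbreviate $D^\star=||\underline{\mathds{D}}(\underline{a}_{t-1},\underline{h}_{t-1})||_\infty$.

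First I would establish the key structural fact: the cost $||\underline{\mathds{D}}(\underline{a}_t,\underline{h}_t)||_\infty$ takes only the two values $D^\star$ and $D^\star+1$. From the recursive definition of $\underline{\mathds{D}}$, each per-stage increment ($\underline{M}^w(t)$ or $\underline{\mathcal{D}}_{w_i,\kappa^i}(t)$) is $\{0,1\}$-valued and non-negative, so the accumulated delay is componentwise non-decreasing and every player gains at most one unit at stage $t$; thus $D^\star\le||\underline{\mathds{D}}(\underline{a}_t,\underline{h}_t)||_\infty\le D^\star+1$. The crux is to show that the value $D^\star$ is attained \emph{exactly} when precisely one player $i$ transmits and $i\in Z(t)$. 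Indeed, when exactly player $i$ transmits the new delay is $\underline{\mathds{D}}(\underline{a}_{t-1},\underline{h}_{t-1})+\underline{\mathcal{D}}_{w_i,\kappa^i}(t)$, so a critical player $k$ (one with $q_k(t)=1$, i.e. maximal accumulated delay and non-empty Wants set) is pushed to $D^\star+1$ iff $(\underline{\mathcal{D}}_{w_i,\kappa^i}(t))_k=1$; therefore the maximum stays at $D^\star$ iff $\underline{\mathcal{D}}_{w_i,\kappa^i}(t)\circ\underline{q}_t=\underline{0}$, which is precisely the membership condition $i\in Z(t)$. In every other profile ($0$, $2$, or more transmitters) the update is $+\underline{M}^w(t)$, which raises every critical player to $D^\star+1$ and yields cost $D^\star+1$.

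With this two-value cost in hand, I would conclude by checking the no-deviation condition case by case on $||\underline{a}_t||_1$. If $Z(t)=\varnothing$, no single transmitter can ever reach cost $D^\star$, so every profile has cost $D^\star+1$, no deviation can help, and $E(t)=\mathcal{A}(t)$. If $Z(t)\neq\varnothing$: a silent profile ($||\underline{a}_t||_1=0$) is not an equilibrium because any player in $Z(t)$ gains by transmitting alone; a single-transmitter profile is always an equilibrium, since its only reachable neighbours have $0$ or $2$ transmitters and hence cost $\ge$ the current one; a profile with $||\underline{a}_t||_1>2$ is always an equilibrium because adding or removing one transmitter still leaves $\neq 1$ transmitters, keeping the cost at $D^\star+1$; and a two-transmitter profile supported on $\{i,j\}$ is an equilibrium iff neither player can drop out to transmit alone profitably, i.e. iff $i,j\notin Z(t)$. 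Collecting these cases yields exactly $E_1(t)$.

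The step I expect to be the main obstacle is the cost characterization of the second paragraph, specifically translating the Hadamard condition $\underline{\mathcal{D}}_{w_i,\kappa^i}(t)\circ\underline{q}_t=\underline{0}$ into the statement ``player $i$ leaves every maximal-delay wanting player undelayed,'' and handling the degenerate situation in which no maximal-delay player wants a packet (so $\underline{q}_t=\underline{0}$ and $Z(t)=\mathcal{M}$): there the two candidate cost values collapse and the deviation analysis must be revisited to confirm the claimed $E_1(t)$. Once the cost is pinned down the remaining deviation bookkeeping is routine, and it mirrors the argument for Game 1 given in Appendix A.
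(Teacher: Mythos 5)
Your proof is correct and follows essentially the same route as the paper's Appendix B: write the cost as its previous value plus a $\{0,1\}$-valued increment, identify the critical set of maximal-delay players with non-empty Wants sets, show the maximum stays at $D^\star$ exactly when a sole transmitter $i$ satisfies $\underline{\mathcal{D}}_{w_i,\kappa^i}(t)\circ\underline{q}_t=\underline{0}$ (i.e.\ $i\in Z(t)$), and then check unilateral deviations case by case on $||\underline{a}_{t}||_1\in\{0,1,2,>2\}$, reaching the same conclusions in every case. The degenerate situation you flag (no maximal-delay player wants a packet, so $\underline{q}_t=\underline{0}$ and $Z(t)=\mathcal{M}$) is a genuine wrinkle in the \emph{statement} rather than in your argument: the paper's own proof treats this case separately (its set $Q(t)=\varnothing$) and concludes $E(t)=\mathcal{A}(t)$ there, which conflicts with the theorem's literal formula since $Z(t)\neq\varnothing$, so the intended reading is that $Z(t)$ is only meaningful when some maximal-delay player still has a non-empty Wants set.
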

\begin{proof}
The proof of this Theorem can be found in Appendix B.
\end{proof}
The following Theorem introduces the set of NE of Game 3:
\begin{theorem}
The set of NE of Game 3 is:
\begin{align}
E(t) = \{ \underline{a}_{t} \in \mathcal{A}(t) \text{ such that } ||\underline{a}_{t}||_1=1 \text{ or } ||\underline{a}_{t}||_1>2\}.
\end{align}
\end{theorem}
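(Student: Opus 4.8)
The plan is to exploit the fact that Game 3 is an exact potential game whose potential coincides with the common utility $\phi(\underline{a}_t,\underline{h}_t)=-||\underline{\mathds{D}}(\underline{a}_t,\underline{h}_t)||_1$; since all players share this utility, $\mathcal{G}=\mathcal{G}^\prime$, and by Theorem \ref{th1} a profile is a NE precisely when no single player can strictly increase $\phi$, i.e.\ no unilateral deviation strictly decreases the total decoding delay. At stage $t$ the history $\underline{h}_t$ is fixed, so $\underline{\mathds{D}}(\underline{a}_{t-1},\underline{h}_{t-1})$ is a constant with non-negative entries; combined with the fact that every stage increment is non-negative, this lets me split the $\ell_1$ norm additively and reduce the NE test to minimizing the single-stage increment $\delta(\underline{a}_t):=||\underline{\mathds{D}}(\underline{a}_t,\underline{h}_t)-\underline{\mathds{D}}(\underline{a}_{t-1},\underline{h}_{t-1})||_1$. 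It is exactly this additivity of the sum norm, absent for the $\ell_\infty$ norms of Games 1 and 2, that I expect to make the equilibrium set clean and channel-independent.

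Reading $\delta$ off the recursion for $\underline{\mathds{D}}$: whenever $||\underline{a}_t||_1\neq 1$ the increment is $\underline{M}^w(t)$, so $\delta=W(t):=||\underline{M}^w(t)||_1$, the number of players with non-empty Wants set, independently of which players act; whenever exactly one player $i$ transmits, $\delta=||\underline{\mathcal{D}}_{w_i,\kappa^i}(t)||_1\le W(t)$. The proof then proceeds by case analysis on $||\underline{a}_t||_1$. For $||\underline{a}_t||_1=1$, every unilateral deviation moves the count to $0$ or $2$, hence into the collision regime with increment $W(t)\ge\delta$; no player can improve, so the profile is a NE, using no channel information. For $||\underline{a}_t||_1\ge 3$, the key observation is that a single deviation changes the number of transmitters by one, leaving it at $\ge 2$ (a transmitter falls silent) or $\ge 4$ (a silent player transmits); the count stays $\neq 1$, so $\delta$ remains exactly $W(t)$ and no deviation changes the utility, making every such profile a NE.

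It remains to show that the profiles with $||\underline{a}_t||_1\in\{0,2\}$ are \emph{not} equilibria, which I expect to be the main obstacle. For the all-silent profile, a deviation to a single transmitter $j$ yields increment $||\underline{\mathcal{D}}_{w_j,\kappa^j}(t)||_1$, and for the two-transmitter profile $\{i,j\}$ a deviation in which one of them falls silent leaves the other transmitting alone with the same type of increment; in each case I must prove this increment is \emph{strictly} below $W(t)$. The clean fact I would establish is that if the optimal combination $\underline{\kappa}^j$ targets at least one other player $k_0\neq j$ with non-empty Wants set, then $k_0$ contributes $1$ to $W(t)$ but $0$ to $||\underline{\mathcal{D}}_{w_j,\kappa^j}(t)||_1$, because its entry $\omega_{k_0 j}(t)\,\overline{\tau}_{k_0}(\underline{\kappa}^j(t))\,M^w_{k_0}(t)$ vanishes through $\overline{\tau}_{k_0}=1-\tau_{k_0}=0$, \emph{regardless} of whether $k_0$ actually receives the packet; hence $||\underline{\mathcal{D}}_{w_j,\kappa^j}(t)||_1\le W(t)-1<W(t)$, a strict gap independent of the realization $\boldsymbol{\omega}(t)$. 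The remaining (degenerate) point is the existence of such a beneficial transmitter: as long as the recovery phase is not over some player still wants a packet, the initialization assumption guarantees that packet is held by another player, and the optimal combination targets at least one such player, so a strictly improving deviation always exists. This rules out $||\underline{a}_t||_1\in\{0,2\}$ and, together with the two NE cases above, yields exactly $E(t)=\{\underline{a}_t\in\mathcal{A}(t):||\underline{a}_t||_1=1\text{ or }||\underline{a}_t||_1>2\}$.
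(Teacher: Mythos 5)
Your proof is correct and follows essentially the same route as the paper's own proof in Appendix C: the same telescoping decomposition of the cost into the previous cost plus a per-stage increment ($||\underline{M}^w(t)||_1$ whenever $||\underline{a}_t||_1\neq 1$, and $||\underline{\mathcal{D}}_{w_i,\kappa^i}(t)||_1$ when only player $i$ transmits), followed by the same case analysis on $||\underline{a}_t||_1\in\{0,1,2,>2\}$, with profiles of weight $1$ and weight $>2$ surviving as NE and weights $0$ and $2$ eliminated by a strictly improving unilateral deviation. The only difference is that where the paper asserts the key strict inequality $||\underline{M}^w(t)||_1>||\underline{\mathcal{D}}_{w_i,\kappa^i}(t)||_1$ ``by definition,'' you supply an explicit, realization-independent justification (a targeted player $k_0$ has $\overline{\tau}_{k_0}=0$ and so contributes $1$ to $||\underline{M}^w(t)||_1$ but $0$ to the delay increment, whatever the erasure pattern), which is a welcome sharpening of the same argument rather than a different method.
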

\begin{proof}
The proof of this Theorem can be found in Appendix C.
\end{proof}

These games formulation, thought well defined for a system with perfect and complete information (i.e. all the information in the system are common knowledge by all the players), suffers from many flaws when the information is incomplete. First, by inspection of the NEs of the games, we clearly can see that most of the NEs yield the worst payoff. In all these game formulations, the action profile $\underline{a}_{t} \in \mathcal{A}(t)$ such that $||\underline{a}_{t}||_1>2$ is a NE of the game. The number of this type of action is $\binom{M}{3}$ compared to the $M$ actions profile of our interest (i.e. action profiles in which only one player is transmitting). Secondly, in incomplete information scenario and without a punishment policy, the game can loop infinitely without reducing the Wants set of any player. For these reasons, the overall performance of the games will be very poor in the incomplete information scenario and a more robust definition of the game must be addressed.

\section{Punishment and Bargaining}\label{sec:game2}

In this section, we first build up a punishment policy to prevent multiple collision to occur in the network. We, then, reformulate the problem as a Nash bargaining problem in order to provide a more efficient solution and we provide the expression of the utility of the new version of the completion time, the maximum decoding delay and the sum decoding delay games. We reformulate the games in such a way that only the actions profile of our interest (those that can reduce a Wants set of at least one of the players) are NE of the games. This reformulation as shown at the end of the section, will provide more robust NE and allow the system to operate with more efficiently incomplete information.

\subsection{Punishment and Back-off Function}

In the first definition of the game, after a collision occurs, each of the players that transmitted can re-transmit in the next stage of the game. In order to overcome the scenario in which multiple consecutive collisions occur, we impose a punishment period of $V$ to every player responsible of a collision. In other words, players responsible of a collision will back-off and will not be able to transmit during the next $V$ transmissions. Let $\underline{c}_t = (c_1(t),...,c_M(t))^{\textbf{T}}\in \{0,1\}^M$ be the collision indicator defined as follows:
\begin{align}
c_i(t) =
\begin{cases}
1 \hspace{0.5cm}& \text{if } a_i(t) = 1 \text{ and } ||\underline{a}_t||_1>1 \\
0 \hspace{0.5cm}& \text{otherwise}.
\end{cases}
\end{align}

Let $\textbf{C}$ be the collision history over the last $V$ stage of the game. The mathematical definition of this variable is:
\begin{align}
\begin{tabular}{c|ccc}
$\textbf{C}:$ & $ \mathcal{H}_t$ & $\longrightarrow$ & $\{0,1\}^{M \times V}$  \\
 & $\underline{h}_t$ & $\longmapsto$ & $\textbf{C}(\underline{h}_t) = [\underline{c}_{t-V},...,\underline{c}_{t-1}]$.
\end{tabular}
\end{align}

For notation consistency, the collision indicator for a non positive time index is taken 0 i.e. $\underline{c}_{-t} = \underline{0},~\forall~t \geq 0$. The back-off function indicates at each stage $t$ of the game which players are allowed to transmit. The mathematical definition of this function is:
\begin{align}
\begin{tabular}{c|ccc}
$\underline{\mathcal{B}}:$ & $\{0,1\}^{M \times V}$ & $\longrightarrow$ & $\{0,1\}^{M}$  \\
 & $\textbf{C}(\underline{h}_t)$ & $\longmapsto$ & $ \underline{\mathcal{B}}(\underline{h}_t) = \textbf{C}(\underline{h}_t) \underline{1}$.
\end{tabular}
\end{align}
Let $\mathcal{A}_i^\prime(t)$ be the action space of player $i$ at each stage $t$ of the game defined as follows:
\begin{align}
\mathcal{A}_i^\prime(t) =
\begin{cases}
\mathcal{A}_i(t) \hspace{0.5cm}& \text{if } \underline{\mathcal{B}}_i(\underline{h}_t)=0 \\
\text{\{remain silent\}} \hspace{0.5cm}& \text{otherwise}.
\end{cases}
\end{align}

\subsection{New Game Formulation}

The Nash bargaining solution is a solution concept introduced by Nash initially for the two-player games. Whereas the formulation of the non-cooperative games involves details about the utility functions, the formulation of the cooperative games is generally based on an abstract approach of the problem where only the gain of the collation is presented without stating how this gain is earned. One important idea of the bargaining theory is that solution concepts of cooperative games can be interpreted as non-cooperative games based on individual utility maximization from the knowledge/belief of the players. The construction of non-cooperative bargaining games that sustain multiple cooperative solution concepts as their equilibrium outcomes is mainly based on the idea of having a solution which is fair and more efficient than the one obtained without bargaining or agreement. The bargaining solution is characterized by its Pareto optimality, individual rationality, invariance to positive affine transformation, independence of irrelevant alternatives and symmetry. The following theorem links the bargaining solution to the Pareto-optimal action profile:
\begin{theorem}
For games with unique Pareto-optimal action profile, we can construct a game in which there exists a unique Nash bargaining solution verifying the axioms of the bargaining solution. This solution is the unique Pareto-optimal action profile.
\end{theorem}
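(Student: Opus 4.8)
The plan is to invoke Nash's axiomatic characterization of the bargaining solution and to reduce the statement to the structural observation that, under the uniqueness hypothesis, the single Pareto-optimal action profile dominates every other profile coordinate-wise. I would first make precise what \emph{constructing a game} means here: from the underlying stochastic game I would build the associated bargaining problem $(S,\underline{d})$, where the feasible set $S$ is the convex hull of the achievable utility vectors $\{(\mathcal{U}_1(\underline{a}_t,\underline{h}_t),\dots,\mathcal{U}_M(\underline{a}_t,\underline{h}_t)) : \underline{a}_t \in \mathcal{A}(t)\}$ (the convex hull being taken so that $S$ is convex and compact, as required below, and realized through time-sharing among action profiles), and the disagreement point $\underline{d}$ is the status-quo payoff obtained when no agreement is reached, e.g. the all-silent or collision outcome. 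By Nash's theorem, this bargaining problem admits a \emph{unique} solution satisfying Pareto optimality, individual rationality, symmetry, invariance to positive affine transformations, and independence of irrelevant alternatives, and that solution is the maximizer of the Nash product $\prod_{i \in \mathcal{M}} (u_i - d_i)$ over $\{u \in S : u \geq \underline{d}\}$.

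The core step is to show that this maximizer is exactly the utility vector of the unique Pareto-optimal profile $\underline{a}_t^*$. I would argue structurally: since $\mathcal{A}(t)$ is finite, every profile is either Pareto-optimal or Pareto-dominated by another, and following a chain of dominations (which terminates because the set is finite and weak Pareto dominance is transitive) every profile is dominated by a maximal, i.e. Pareto-optimal, element. As there is only one such element by hypothesis, $\underline{a}_t^*$ weakly dominates every other profile, so that $\mathcal{U}_i(\underline{a}_t^*,\underline{h}_t) \geq \mathcal{U}_i(\underline{a}_t,\underline{h}_t)$ for all $i \in \mathcal{M}$ and all $\underline{a}_t \in \mathcal{A}(t)$. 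Hence the vector $u(\underline{a}_t^*)$ is the coordinate-wise maximum of $S$, and it remains so over the convex hull because every convex combination is coordinate-wise no larger. This vector therefore maximizes each factor $(u_i - d_i)$ of the Nash product simultaneously, which makes it the unique maximizer of the product, strictly so provided $\underline{d} < u(\underline{a}_t^*)$ coordinate-wise, a condition guaranteed by choosing $\underline{d}$ as the worst-case payoff.

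To finish, I would verify that this solution indeed satisfies the five axioms: Pareto optimality holds because $u(\underline{a}_t^*)$ is the unique undominated point of $S$; individual rationality follows from the dominance inequality $u(\underline{a}_t^*) \geq \underline{d}$; and symmetry, affine invariance, and independence of irrelevant alternatives come directly from Nash's characterization of the product maximizer. Uniqueness of the bargaining solution is then immediate, since the coordinate-wise dominant vertex is the only Pareto point of $S$ and hence the only candidate the axioms can select.

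The main obstacle I anticipate is reconciling the discrete action space with the convexity and compactness hypotheses underlying Nash's theorem: the argument requires passing to the convex hull of the achievable utility vectors, and then checking that the coordinate-wise dominant vertex is still the unique Nash-product maximizer over this hull — which holds precisely because it dominates every point of $S$, but this needs to be stated carefully rather than assumed. A secondary technical point is the choice of the disagreement point $\underline{d}$: it must be fixed strictly below $u(\underline{a}_t^*)$ in every coordinate so that the Nash product is positive at $\underline{a}_t^*$, the maximizer is genuinely unique, and individual rationality is non-vacuous.
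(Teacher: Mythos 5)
Your argument is correct, but note that the paper never actually proves this theorem in-text: its ``proof'' is a single citation to \cite{Lasaulce:2011:GTL:2086746}, so your proposal differs from the paper simply by supplying a self-contained argument where the paper delegates everything to the reference. What you add is worthwhile: (i) an explicit construction of the bargaining problem $(S,\underline{d})$ via time-sharing over the finite set of achievable utility vectors; (ii) the key structural lemma that a unique Pareto-optimal profile $\underline{a}_t^*$ coordinate-wise weakly dominates \emph{every} profile --- your chain argument is sound, though the reason chains terminate is not transitivity alone but that weak Pareto dominance is a strict partial order (each domination step strictly increases the utility sum, so in a finite $\mathcal{A}(t)$ every chain ends at a maximal, hence Pareto-optimal, element, which is $\underline{a}_t^*$ by uniqueness); and (iii) the observation that the coordinate-wise dominant feasible point simultaneously maximizes every factor of the Nash product, hence is its unique maximizer, after which Nash's characterization delivers the five axioms. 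This buys a checkable proof and surfaces a hypothesis the citation hides, namely $d_i < \mathcal{U}_i(\underline{a}_t^*,\underline{h}_t)$ strictly in each coordinate. One caveat: your claim that choosing $\underline{d}$ as the worst-case payoff ``guarantees'' this strict gap is not true in full generality --- if some player's utility is constant over all of $\mathcal{A}(t)$ (which is compatible with a unique Pareto-optimal profile), then every feasible disagreement point ties $\mathcal{U}_i(\underline{a}_t^*,\underline{h}_t)$ in that coordinate, the Nash product vanishes identically on the individually rational set, and uniqueness of the maximizer degenerates; one must then drop that player or perturb $\underline{d}$ outside $S$. For the games of this paper the issue is vacuous, since all players share the same utility (each game coincides with its coordination game), so a unique Pareto-optimal profile forces a strictly lower common utility at every other profile and your worst-case choice of $\underline{d}$ does work; it would be cleaner to state this restriction, or the perturbation fix, explicitly rather than assert the guarantee.
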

\begin{proof}
The proof of this Theorem can be found in \cite{Lasaulce:2011:GTL:2086746}.
\end{proof}

According to Theorem \ref{th2}, the Pareto-optimal action profile of our game is the PONE. Therefore, if the PONE of the game is unique, the bargaining solution is the PONE of the game. We now design the new game to fulfill this condition.

\textbf{Game 4 (New completion time Game):}
\begin{itemize}
\item \textit{Players}: Users in set $\mathcal{M}$
\item \textit{History }: $\underline{h}_t=$ Channel realization $\boldsymbol{\omega}(t)$ and players' action $\underline{a}_t$ at each stage $t \geq 1$.
\item \textit{Strategies}: Contingency plans for selection of a transmission policy at each stage $t \geq 1$ and for any given history $\underline{h}_t$. 
\item \textit{Utilities}: $\mathcal{U}_i^{CT}$ for each player $i$, where at each stage $t \geq 1$ and for any given history $\underline{h}_t$ and action profile $\underline{a}_t$:
\end{itemize}
In this version of the game, we encourage sparsity by the $\ell_1$ regularizer. Moreover, prioritization between players is added when all action profiles will yield a higher maximum excepted completion time than in the previous stage of the game. This additional term is scaled by the number of player to not change the original game. The mathematical definition of the new utility is the following:
\begin{align}
\mathcal{U}_i^{CT}(\underline{a}_t,\underline{h}_t) &= -||\underline{\mathcal{C}}(\underline{a}_t,\underline{h}_t)||_\infty - ||\underline{a}_t||_1 \nonumber \\
& - \cfrac{||\underline{\mathds{D}}(\underline{a}_t,\underline{h}_t)-\underline{\mathds{D}}(\underline{a}_{t-1},\underline{h}_{t-1})||_1}{M}.
\end{align}

\textbf{Game 5 (New maximum decoding delay Game):}
\begin{itemize}
\item \textit{Players}: Users in set $\mathcal{M}$.
\item \textit{Actions Space}: Action in the set $\{\overline{\mathcal{A}^\prime}_i\}_i$.
\item \textit{History }: $\underline{h}_t=$ Channel realization $\boldsymbol{\omega}(t)$ and players' action $\underline{a}_t$ at each stage $t \geq 1$.
\item \textit{Strategies}: Contingency plans for selection of a transmission policy at each stage $t \geq 1$ and for any given history $\underline{h}_t$. 
\item \textit{Utilities}: $\mathcal{U}_i^{MDD}$ for each player $i$, where at each stage $t \geq 1$ and for any given history $\underline{h}_t$ and action profile $\underline{a}_t$.
\end{itemize}
As in the previous game. we encourage sparsity and add a scaled term to take into account the cases when the maximum decoding delay cannot be kept at the same level in the next stage of the game. The new payoff function is the following:
\begin{align}
\mathcal{U}_i^{MDD}(\underline{a}_t,\underline{h}_t) &= -||\underline{\mathds{D}}(\underline{a}_t,\underline{h}_t)||_\infty - ||\underline{a}_t||_1 \nonumber \\
& - \cfrac{||\underline{\mathds{D}}(\underline{a}_t,\underline{h}_t)-\underline{\mathds{D}}(\underline{a}_{t-1},\underline{h}_{t-1})||_1}{M}.
\end{align}

\textbf{Game 6 (New sum decoding delay Game):}
\begin{itemize}
\item \textit{Players}: Users in set $\mathcal{M}$.
\item \textit{Actions Space}: Action in the set $\{\overline{\mathcal{A}^\prime}_i\}_i$.
\item \textit{History }: $\underline{h}_t=$ Channel realization $\boldsymbol{\omega}(t)$ and players' action $\underline{a}_t$ at each stage $t \geq 1$.
\item \textit{Strategies}: Contingency plans for selection of a transmission policy at each stage $t \geq 1$ and for any given history $\underline{h}_t$. 
\item \textit{Utilities}: $\mathcal{U}_i^{SDD}$ for each player $i$, where at each stage $t \geq 1$ and for any given history $\underline{h}_t$ and action profile $\underline{a}_t$.
\end{itemize}
In order to encourage sparsity of the action profile vector $\underline{a}_t$, the $\ell_1$ regularizer is added to the previous definition of the utility function.
\begin{align}
\mathcal{U}_i^{SDD}(\underline{a}_t,\underline{h}_t) = -||\underline{\mathds{D}}(\underline{a}_t,\underline{h}_t)||_1 - ||\underline{a}_t||_1.
\end{align}
As for Game 1, Game 2, and Game 3, Games 4, 5 and 6 are non-cooperative stochastic potential games. In these game formulations, we reduced the set of equilibrium to the one dimensional line (in which only one player is transmitting) containing the Nash bargaining solution of our interest. In the next section, we further proof that the improved definitions of the games have a more stable equilibrium.

\subsection{Equilibrium Stability}

In virtue of Corollary \ref{cor}, in all the games, there exist at least one NE. Moreover, according to Theorem \ref{th2}, the maximum of the utility function is the PONE of the game. However, the existence of the NE or the PONE is not sufficiency. We first define the price of anarchy, introduced in \cite{koutsoupias1999worst} to be able to characterize the equilibrium in a game. 
\begin{definition}
The price-of-anarchy (PoA), at stage $t$, is the worst-case efficiency of a Nash Equilibrium among all possible strategies. In other words, the PoA is defined as:
\begin{align}  
PoA(t) = \cfrac{{\text{max}_{s \in S(t)}} W(s)}{\text{min}_{s \in E(t)} W(s)},
\end{align}
where:
\begin{itemize}
\item $S(t)$ is the set of all possible strategies at stage $t$,
\item $E(t)$ is the set of all NE at stage $t$,
\item $W:S\longrightarrow\mathds{R}$ is a fairness function .
\end{itemize}
\end{definition}

The PoA is a concept that measures how the efficiency of a game degrades due to selfish behavior of its players. Since we have in our game the Pareto Equilibrium is a Nash equilibrium and the utility function is the same for all the players (the corresponding coordination game is the game itself), then the previous definition reduces to the following:
\begin{align}  
PoA(t) = \cfrac{{\text{max}_{s \in E(t)}} \phi(s)}{\text{min}_{s \in E(t)} \phi(s)}.
\end{align}

The utility function is always negative in our context. Therefore, the PoA is a well defined quantity. In this paper, since the utility function is strictly negative, we will compute the PoA using the cost function $\phi^\prime=-\phi$. The PoA can be expressed in terms of the cost function as follows:
\begin{align}  
PoA(t) = \cfrac{{\text{min}_{s \in E(t)}} \phi^\prime(s)}{\text{max}_{s \in E(t)} \phi^\prime(s)}.
\end{align}

Note that, since the Pareto-Optimal action profile is a NE of the game, the price of stability \cite{Nisan:2007:AGT:1296179} in our game is equal to $1$ and can not be used to characterize efficiency. The following theorem gives the \emph{PoA} of the Game 1 and Game 4:
\begin{theorem}
The \emph{PoA} of Game 1 can be expressed as follows:
\begin{align}
&PoA(t) = \nonumber \\
&\begin{cases}
1 \hspace{0.5cm} &\text{if } Z(t) = \varnothing \\
1 - \cfrac{Y_0(t) - \underset{j \in Z(t)}{min }(Y_j(t))}{\phi^\prime (\underline{a}_{t-1},\underline{h}_{t-1})+Y_0(t)} \hspace{0.5cm} &\text{otherwise },
\end{cases}
\end{align}
where $\phi^\prime=-\mathcal{U}_i^{CT},~\forall~i\in \mathcal{M}$ is the cost of Game 1.
The \emph{PoA} of Game 4 can be expressed as follows:
\begin{align}
&PoA^\prime(t) = \\
&\begin{cases}
\cfrac{\underset{||\underline{a}_t||_1=a_i(t)=1}{\text{min}}\phi^\prime (\underline{a}_{t-1},\underline{h}_{t-1})+\cfrac{||\underline{\mathcal{D}}_{\omega_i,\kappa^i}||_1}{M} +1 + Y_i(t) }{\underset{||\underline{a}_t||_1=a_i(t)=1}{\text{max}}\phi^\prime (\underline{a}_{t-1},\underline{h}_{t-1})+\cfrac{||\underline{\mathcal{D}}_{\omega_i,\kappa^i}||_1}{M} +1 + Y_i(t)} \\
\hspace{6cm} \text{ if } Z(t) \neq \varnothing \\
\cfrac{\underset{||\underline{a}_t||_1=a_i(t)=1}{\text{min}}\phi^\prime (\underline{a}_{t-1},\underline{h}_{t-1})+\cfrac{||\underline{\mathcal{D}}_{\omega_i,\kappa^i}||_1}{M}  +1 + Y_0(t) }{\underset{||\underline{a}_t||_1=a_i(t)=1}{\text{max}}\phi^\prime (\underline{a}_{t-1},\underline{h}_{t-1})+\cfrac{||\underline{\mathcal{D}}_{\omega_i,\kappa^i}||_1}{M} +1 + Y_0(t)} \\
\hspace{6cm} \text{ otherwise }.
\end{cases} \nonumber
\end{align}
where $\phi^\prime=-\mathcal{U}_i^{CT},~\forall~i\in \mathcal{M}$ is the cost of Game 4.
\end{theorem}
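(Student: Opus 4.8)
The plan is to start from the simplified expression $PoA(t)=\frac{\min_{s\in E(t)}\phi^\prime(s)}{\max_{s\in E(t)}\phi^\prime(s)}$ derived just above the statement, and to evaluate the cost $\phi^\prime$ on the equilibrium set $E(t)$ that the earlier NE theorems already characterize for each game. For Game~1 the cost is $\phi^\prime=||\underline{\mathcal{C}}||_\infty$, and the state-update rule for $\underline{\mathds{D}}$ shows that a single stage changes this cost additively: a lone transmission by player $j$ raises it by $Y_j(t)$, while any profile with $||\underline{a}_t||_1\neq 1$ (a collision or full silence) raises it by $Y_0(t)$. I would therefore first write every equilibrium cost as $\phi^\prime(\underline{a}_{t-1},\underline{h}_{t-1})$ plus one of these two increments, reducing the whole problem to comparing increments.

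The key structural observation is that $Y_j(t)\le Y_0(t)$ for every $j$, because $Y_j$ is a maximum of $\tfrac{1}{1-\overline{p}_i}$ taken over $Q(t)\cap\underline{\mathcal{D}}_{w_j,\kappa^j}(t)$, a subset of the index set $Q(t)$ that defines $Y_0$; equality holds exactly when $j\notin Z(t)$. Consequently, among the equilibria of Game~1 the smallest cost is attained by the single transmitter $j\in Z(t)$ with the least increment, giving $\min_{s}\phi^\prime=\phi^\prime(\underline{a}_{t-1},\underline{h}_{t-1})+\min_{j\in Z(t)}Y_j(t)$, whereas every admissible collision and every single transmitter with $j\notin Z(t)$ attains the largest cost $\phi^\prime(\underline{a}_{t-1},\underline{h}_{t-1})+Y_0(t)$. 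Substituting these two values into the ratio and rearranging yields $1-\frac{Y_0(t)-\min_{j\in Z(t)}Y_j(t)}{\phi^\prime(\underline{a}_{t-1},\underline{h}_{t-1})+Y_0(t)}$. The degenerate case $Z(t)=\varnothing$ forces all increments equal to $Y_0$, so numerator and denominator coincide and $PoA(t)=1$.

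For Game~4 the argument is the same in spirit but over a smaller equilibrium set: the $\ell_1$ regularizer $-||\underline{a}_t||_1$ makes any collision strictly worse than the single transmission it contains, so the equilibrium set collapses to the single-transmitter profiles. On such a profile the cost reads $\phi^\prime(\underline{a}_{t-1},\underline{h}_{t-1})+1+\frac{||\underline{\mathcal{D}}_{\omega_i,\kappa^i}||_1}{M}+Y_i(t)$, where the $+1$ comes from $||\underline{a}_t||_1=1$ and the third term from the delay-difference penalty. I would then take the minimum and maximum of this quantity over the transmitting index $i$ to obtain $PoA^\prime(t)$; when $Z(t)\neq\varnothing$ the distinct $Y_i(t)$ are retained, giving the first branch, and when $Z(t)=\varnothing$ all $Y_i$ equal $Y_0$, which is precisely the second branch of the stated formula.

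The main obstacle is the increment bookkeeping underlying both cases: I must justify that the max-completion-time cost moves by exactly $Y_0$ or $Y_j$ per stage, which hinges on the definition of the critical set $Q(t)$ (the players whose single extra unit of delay would exceed the current maximum) and on confirming that silence is \emph{not} an equilibrium once $Z(t)\neq\varnothing$, since otherwise a spurious profile of cost $\phi^\prime(\underline{a}_{t-1},\underline{h}_{t-1})+Y_0(t)$ could enter the set over which the extrema are taken. For Game~4 the comparable delicate point is proving that precisely the single-transmitter profiles, and neither collisions nor silence, survive as equilibria under the regularized utility, so that the min and max are legitimately taken over exactly the index set appearing in the formula.
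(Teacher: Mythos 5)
Your Game~1 half is essentially the paper's own proof (Appendix~A): the same additive decomposition $\phi^\prime(\underline{a}_t,\underline{h}_t)=\phi^\prime(\underline{a}_{t-1},\underline{h}_{t-1})+\xi(\underline{a}_t,\underline{h}_t)$ with increment $Y_j(t)$ for a lone transmitter $j$ and $Y_0(t)$ otherwise, the same subset observation $Y_j(t)\leq Y_0(t)$ with equality exactly when $j\notin Z(t)$, the same NE classification by $||\underline{a}_t||_1$, and the same min/max ratio. One remark: your worry that silence might sneak into the extrema is harmless here even before checking it, since the all-silent profile would have cost $\phi^\prime(\underline{a}_{t-1},\underline{h}_{t-1})+Y_0(t)$, which is already the maximum attained by the admitted equilibria; the paper nevertheless rules it out directly via a deviation by any $i\in Z(t)$.

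The genuine gap is in Game~4, where your equilibrium characterization is wrong as stated. You claim the $\ell_1$ term collapses $E(t)$ to \emph{all} single-transmitter profiles and then take the extrema over the transmitting index $i$. But when $Z(t)\neq\varnothing$, a lone transmitter $i\notin Z(t)$ is \emph{not} a Nash equilibrium: by deviating to silence he saves the $\ell_1$ penalty of $1$ and replaces the delay-spread term $||\underline{\mathcal{D}}_{w_i,\kappa^i}(t)||_1/M$ by $||\underline{M}^w(t)||_1/M\leq 1$, while the completion-time increment remains $Y_0(t)$ either way (since $Y_i=Y_0$ for $i\notin Z(t)$), so the cost strictly drops. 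The paper's Appendix~D therefore obtains $E(t)=\{\underline{a}_t:\ ||\underline{a}_t||_1=a_i(t)=1,\ i\in Z(t)\}$ in this case, and the min/max in the first branch of $PoA^\prime(t)$ must range over $i\in Z(t)$ only; over your larger set, the maximum would pick up profiles whose increment contains $Y_0(t)$ rather than $Y_i(t)$ with $i\in Z(t)$, yielding a strictly smaller ratio than the stated formula. Two further deviation checks that you flagged but left open are exactly where the work lies and are carried out case by case in the paper: (i) excluding the all-silent profile, which is not automatic in Game~4 because the comparison pits the saved $1+||\underline{\mathcal{D}}_{w_i,\kappa^i}(t)||_1/M$ against the gain $Y_0(t)-Y_i(t)$; and (ii) excluding collisions, where the correct argument is not that a collision is worse than "the single transmission it contains" (a unilateral deviation from a $k$-collision with $k>2$ gives a $(k-1)$-collision, not a single transmission) but that any transmitting player's deviation to silence lowers the common cost by exactly the $\ell_1$ unit, so no collision survives as an equilibrium.
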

\begin{proof}
The proof of this Theorem can be found in Appendix A for Game 1 and Appendix D for Game 4.
\end{proof}
For stage $t$ of the game with $Z(t) = \varnothing$, the expected completion time will increase anyway and thus the \emph{PoA} do not have a practical signification in this case. In stage $t$ where $Z(t) \neq \varnothing$, we have:
\begin{align}
PoA^\prime(t) > PoA(t).
\end{align}
This new definition of the game offers a more efficient equilibrium.

The following theorem gives the \emph{PoA} of the Game 2 and Game 5:
\begin{theorem}
The \emph{PoA} of Game 2 can be expressed as follows:
\begin{align}
PoA(t) = 
\begin{cases}
1 \hspace{0.5cm} &\text{if } Z(t) = \varnothing \\
1 - \cfrac{1}{\phi^\prime (\underline{a}_{t-1},\underline{h}_{t-1})+1} \hspace{0.5cm} &\text{otherwise }.
\end{cases}
\end{align}
where $\phi^\prime=-\mathcal{U}_i^{MDD},~\forall~i\in \mathcal{M}$ is the cost of Game 2. The \emph{PoA} of Game 5 can be expressed as follows:
\begin{align}
&PoA^\prime(t) = \\
&\begin{cases}
\cfrac{\underset{||\underline{a}_t||_1=a_i(t)=1}{\text{min}}\phi^\prime (\underline{a}_{t-1},\underline{h}_{t-1})+\cfrac{||\underline{\mathcal{D}}_{\omega_i,\kappa^i}||_1}{M} +2 }{\underset{||\underline{a}_t||_1=a_i(t)=1}{\text{max}}\phi^\prime (\underline{a}_{t-1},\underline{h}_{t-1})+\cfrac{||\underline{\mathcal{D}}_{\omega_i,\kappa^i}||_1}{M} +2} \\
\hspace{6cm} \text{ if } Z(t) = \varnothing \\
\cfrac{\underset{||\underline{a}_t||_1=a_i(t)=1}{\text{min}}\phi^\prime (\underline{a}_{t-1},\underline{h}_{t-1})+\cfrac{||\underline{\mathcal{D}}_{\omega_i,\kappa^i}||_1}{M} +1 }{\underset{||\underline{a}_t||_1=a_i(t)=1}{\text{max}}\phi^\prime (\underline{a}_{t-1},\underline{h}_{t-1})+\cfrac{||\underline{\mathcal{D}}_{\omega_i,\kappa^i}||_1}{M} +1} \\
\hspace{6cm} \text{ otherwise }.
\end{cases} \nonumber
\end{align}
where $\phi^\prime=-\mathcal{U}_i^{MDD},~\forall~i\in \mathcal{M}$ is the cost of Game 5.
\end{theorem}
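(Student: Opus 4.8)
The plan is to compute the price of anarchy directly from its cost-ratio form $PoA(t)=\frac{\min_{s\in E(t)}\phi'(s)}{\max_{s\in E(t)}\phi'(s)}$ by evaluating $\phi'$ at every Nash equilibrium already characterized for each game and then identifying the cheapest and the most expensive equilibria. The structural fact I would isolate first and reuse throughout is that, from the update rule for $\underline{\mathds{D}}$, every stage adds a $0/1$ vector to the cumulative delay ($\underline{M}^w(t)$ after a collision or silence, and $\underline{\mathcal{D}}_{\omega_i,\kappa^i}(t)$ after a single transmission by player $i$), so $||\underline{\mathds{D}}(\underline{a}_t,\underline{h}_t)||_\infty$ can only stay at $D^\star := ||\underline{\mathds{D}}(\underline{a}_{t-1},\underline{h}_{t-1})||_\infty=\phi'(\underline{a}_{t-1},\underline{h}_{t-1})$ or increase by exactly one. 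Which case occurs is governed precisely by $Z(t)$: a single transmission from $i\in Z(t)$ satisfies $\underline{\mathcal{D}}_{\omega_i,\kappa^i}\circ\underline{q}_t=\underline{0}$, hence leaves every current maximal-delay still-wanting player untouched and keeps the maximum at $D^\star$, whereas any collision, any silent profile, or any single transmission from $i\notin Z(t)$ raises some critical player's delay and pushes the maximum to $D^\star+1$.

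For Game 2 I would invoke the already-proven characterization of its NE set. If $Z(t)=\varnothing$, no single transmission can spare all critical players, so every action in $E(t)=\mathcal{A}(t)$ costs $D^\star+1$; minimum and maximum coincide and $PoA(t)=1$. If $Z(t)\neq\varnothing$, then $E(t)=E_1(t)$, and I would show the minimum cost equals $D^\star$, attained by a single transmission from some $i\in Z(t)$ (which exists and lies in $E_1(t)$), while the maximum cost equals $D^\star+1$, attained for instance by any collision with more than two transmitters or any single transmission from $i\notin Z(t)$, all members of $E_1(t)$. The ratio is then $\frac{D^\star}{D^\star+1}=1-\frac{1}{\phi'(\underline{a}_{t-1},\underline{h}_{t-1})+1}$, as claimed.

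For Game 5 I would first argue that, under the sparsity-regularized utility, the equilibrium set collapses to the one-dimensional family of single (progress-making) transmissions, consistent with the game's construction and mirroring the Game 4 argument in Appendix D: a collision is dominated by the extra $||\underline{a}_t||_1$ penalty together with the larger delay-change term, so no responsible transmitter gains by colliding. On this set the cost of a single transmission by $i$ is $\phi'_5(i)=||\underline{\mathds{D}}(\underline{a}_t)||_\infty+1+\frac{||\underline{\mathcal{D}}_{\omega_i,\kappa^i}||_1}{M}$. I would then split on $Z(t)$: when $Z(t)=\varnothing$ the max-delay term is forced to $D^\star+1$ for every $i$, giving the uniform offset $D^\star+2$ and hence the first branch; when $Z(t)\neq\varnothing$ the surviving equilibria are the transmissions by $i\in Z(t)$, whose max-delay term stays at $D^\star$, giving the offset $D^\star+1$ and the second branch. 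Taking $\min$ and $\max$ of $\phi'_5(i)$ over these single-transmission equilibria yields the stated two-case expression.

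The main obstacle is the Game 5 step: rigorously pinning down the equilibrium set under the modified utility. The delicate point is to verify by unilateral-deviation checks that silent and colliding profiles are not equilibria while the intended single transmissions are, using $\underline{\mathcal{D}}_{\omega_i,\kappa^i}\leq\underline{M}^w$ elementwise (so $||\underline{\mathcal{D}}_{\omega_i,\kappa^i}||_1\leq||\underline{M}^w||_1$) to compare utilities term by term. A secondary, routine obstacle is the bookkeeping guaranteeing that a cost-$(D^\star+1)$ equilibrium actually exists in $E_1(t)$ for Game 2, so that the maximum is $D^\star+1$ rather than $D^\star$; this holds in the generic case $Z(t)\subsetneq\mathcal{M}$, and the degenerate case $Z(t)=\mathcal{M}$ would have to be excluded or treated separately.
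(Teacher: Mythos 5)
Your proposal is correct and follows essentially the same route as the paper's Appendices B and E: the same decomposition $\phi^\prime(\underline{a}_t,\underline{h}_t)=\phi^\prime(\underline{a}_{t-1},\underline{h}_{t-1})+\eta(\underline{a}_{t},\underline{h}_t)$ with a $0/1$ increment of the maximum delay, the same equilibrium characterizations (Game~2's reused from its earlier NE theorem, Game~5's collapse to single transmissions established by unilateral-deviation checks using $||\underline{\mathcal{D}}_{w_i,\kappa^i}(t)||_1\leq||\underline{M}^w(t)||_1$), and the same case split on $Z(t)$ before taking the min/max cost ratio over the NE set. The only caveat is that the paper first disposes of the subcase $Q(t)=\varnothing$ (no still-wanting player currently sits at the maximum delay), in which even collisions leave $||\underline{\mathds{D}}||_\infty$ unchanged and $PoA(t)=1$ although then $Z(t)=\mathcal{M}\neq\varnothing$ --- your blanket claim that any collision or silent profile pushes the maximum to $D^\star+1$ silently assumes $Q(t)\neq\varnothing$; conversely, the degenerate case $Z(t)=\mathcal{M}$ you flag for Game~2 needs no separate treatment, since the $||\underline{a}_t||_1>2$ collision profiles you already list are NEs of cost $D^\star+1$ whenever $M\geq 3$ and $Q(t)\neq\varnothing$.
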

\begin{proof}
The proof of this Theorem can be found in Appendix B for Game 2 and Appendix E for Game 5.
\end{proof}
For stage $t$ of the game with $Z(t) = \varnothing$, the maximum delay will increase anyway and thus the \emph{PoA} do not have a practical signification in this case. In stage $t$ where $Z(t) \neq \varnothing$, we have:
\begin{align}
PoA^\prime(t) > PoA(t).
\end{align}
This new definition of the game offers a more efficient equilibrium.

The following theorem gives the \emph{PoA} of the Game 3 and Game 6:
\begin{theorem}
The \emph{PoA} of Game 3 can be expressed as follows:
\begin{align}
PoA(t) = \cfrac{\underset{||\underline{a}_t||_1=a_i(t)=1}{\text{min}}\phi^\prime (\underline{a}_{t-1},\underline{h}_{t-1})+||\underline{\mathcal{D}}_{\omega_i,\kappa^i}||_1}{\phi^\prime (\underline{a}_{t-1},\underline{h}_{t-1})+||\underline{M}^w(t)||_1}.
\end{align}
where $\phi^\prime=-\mathcal{U}_i^{SDD},~\forall~i\in \mathcal{M}$ is the cost of Game 3.
The \emph{PoA} of Game 6 can be expressed as follows:
\begin{align}
PoA^\prime(t) = \cfrac{\underset{||\underline{a}_t||_1=a_i(t)=1}{\text{min}}\phi^\prime (\underline{a}_{t-1},\underline{h}_{t-1})+||\underline{\mathcal{D}}_{\omega_i,\kappa^i}||_1 +1 }{\underset{||\underline{a}_t||_1=a_i(t)=1}{\text{max}}\phi^\prime (\underline{a}_{t-1},\underline{h}_{t-1})+||\underline{\mathcal{D}}_{\omega_i,\kappa^i}||_1 +1}.
\end{align}
where $\phi^\prime=-\mathcal{U}_i^{SDD},~\forall~i\in \mathcal{M}$ is the cost of Game 6.
\end{theorem}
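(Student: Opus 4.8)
The plan is to evaluate the reduced price-of-anarchy ratio $PoA(t)=\frac{\min_{s\in E(t)}\phi^\prime(s)}{\max_{s\in E(t)}\phi^\prime(s)}$ on the sum-delay cost (with the sparsity term included for Game 6) by first computing $\phi^\prime$ on every equilibrium profile and then reading off the cheapest and most expensive equilibria. The engine of the computation is the update rule for the cumulative delay: whenever $||\underline{a}_t||_1\neq 1$ one has $\underline{\mathds{D}}(\underline{a}_t,\underline{h}_t)=\underline{\mathds{D}}(\underline{a}_{t-1},\underline{h}_{t-1})+\underline{M}^w(t)$, while a lone transmission by player $i$ gives $\underline{\mathds{D}}(\underline{a}_t,\underline{h}_t)=\underline{\mathds{D}}(\underline{a}_{t-1},\underline{h}_{t-1})+\underline{\mathcal{D}}_{w_i,\kappa^i}(t)$. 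Since every vector here is entrywise nonnegative, the $\ell_1$ norm is additive and the current cost equals $\phi^\prime(\underline{a}_{t-1},\underline{h}_{t-1})$ augmented by either $||\underline{M}^w(t)||_1$ or $||\underline{\mathcal{D}}_{w_i,\kappa^i}(t)||_1$; note that $\phi^\prime(\underline{a}_{t-1},\underline{h}_{t-1})$ is constant in the current action, so the $\min$ and $\max$ act only on the increment.

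For Game 3 I would take the equilibrium set from the earlier theorem, $E(t)=\{\underline{a}_t:||\underline{a}_t||_1=1\}\cup\{\underline{a}_t:||\underline{a}_t||_1>2\}$. Every three-or-more collision costs exactly $\phi^\prime(\underline{a}_{t-1},\underline{h}_{t-1})+||\underline{M}^w(t)||_1$, and a lone transmission by $i$ costs $\phi^\prime(\underline{a}_{t-1},\underline{h}_{t-1})+||\underline{\mathcal{D}}_{w_i,\kappa^i}(t)||_1$. Reading off the definition $\underline{\mathcal{D}}_{w_i,\kappa^i}(t)=\underline{\omega}_i(t)\circ\underline{\overline{\tau}}_{\underline{\kappa}(t)}\circ\underline{M}^w(t)$ shows it is dominated entrywise by $\underline{M}^w(t)$, hence $||\underline{\mathcal{D}}_{w_i,\kappa^i}(t)||_1\leq||\underline{M}^w(t)||_1$. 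Thus the maximum-cost equilibrium is any collision (yielding the denominator) and the minimum-cost equilibrium is the best lone transmitter (yielding the numerator), which is exactly the stated ratio.

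For Game 6 the only change is the extra sparsity term $-||\underline{a}_t||_1$ in the utility, and the point is that it deletes every collision from the equilibrium set. In a profile with $||\underline{a}_t||_1>1$, any transmitting player can deviate to silence: if at least two transmitters remain the cumulative delay is unchanged, and if exactly one remains the delay can only drop, while in both cases the regularizer falls by one, so the deviation is strictly profitable and no collision survives as an equilibrium. The surviving equilibria are precisely the lone transmissions, each costing $\phi^\prime(\underline{a}_{t-1},\underline{h}_{t-1})+||\underline{\mathcal{D}}_{w_i,\kappa^i}(t)||_1+1$, where the $+1$ is the value of $||\underline{a}_t||_1$. Now both the minimum and the maximum in the ratio range over the same family of lone transmissions, and substituting gives the announced expression.

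The step I expect to be the main obstacle is the equilibrium-set collapse for Game 6, namely verifying rigorously that silence is a profitable unilateral deviation from \emph{every} collision once the back-off restricted action space $\mathcal{A}_i^\prime(t)$ is in force; the delicate sub-case is $||\underline{a}_t||_1=2$, where dropping one transmitter both lowers the regularizer and (weakly) the delay, and one must also rule out an all-silent profile re-entering as an equilibrium. By contrast, the Game 3 half is comparatively routine once the domination inequality $||\underline{\mathcal{D}}_{w_i,\kappa^i}(t)||_1\leq||\underline{M}^w(t)||_1$ is in hand, since it pins down which equilibrium realizes the numerator and which the denominator.
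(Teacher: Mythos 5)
Your proposal takes essentially the same route as the paper's Appendices C and F: write the cost as $\phi^\prime(\underline{a}_{t-1},\underline{h}_{t-1})$ plus a stage increment ($||\underline{M}^w(t)||_1$ for any non-lone profile, $||\underline{\mathcal{D}}_{w_i,\kappa^i}(t)||_1$ — plus the $\ell_1$ regularizer in Game 6 — for a lone transmission by $i$), characterize the NE sets, and read the PoA off the extremal equilibria. The one obstacle you flag for Game 6 (ruling out the all-silent profile and confirming that lone transmissions survive as NE) is closed in the paper by the sharpened domination $||\underline{\mathcal{D}}_{w_i,\kappa^i}(t)||_1 \leq ||\underline{M}^w(t)||_1 - 1$, which holds because the optimal combination $\underline{\kappa}^i$ targets at least one player with non-empty Wants set, giving precisely the margin $||\underline{M}^w(t)||_1 - ||\underline{\mathcal{D}}_{w_i,\kappa^i}(t)||_1 - 1 \geq 0$ needed for both checks.
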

\begin{proof}
The proof of this Theorem can be found in Appendix C for Game 3 and Appendix F for Game 6.
\end{proof}

Clearly, we have:
\begin{align}
\underset{||\underline{a}_t||_1=a_i(t)=1}{\text{min}}||\underline{\mathcal{D}}_{\omega_i,\kappa^i}||_1  \leq ||\underline{M}^w(t)||_1 -1.
\end{align}
Therefore, we obtain:
\begin{align}
PoA^\prime(t) \geq PoA(t).
\end{align}
This new definition of the game offers a more efficient equilibrium. 

After investigation of the Price-Of-Anarchy of the two versions of the games, the new formulation of the games offers a more efficient equilibrium. Therefore, for any learning algorithm and a long running period, the second formulation will performs better in terms of delay than the first one. 

\section{Distributed Learning Algorithms}\label{sec:algo}

In this section, we introduce the learning algorithm used to simulate the system in two settings. First, we assume complete and perfect information. The former assumption means that the actions available to the players and the utility functions are common knowledge i.e. every player knows the data of the game, every player knows that the other players know the data of the game, every player knows that every player knows that every players knows the data of the game, and so on, ad infinitum. The latter assumption means all the players know the history of the game perfectly. Note that these assumptions do not add extra constraints to the problem but are rather intrinsic to it when the system operate with full feedback. We use the best response algorithm to simulate such a system.

In the second part, we propose a low complexity system in which the players do not compute the packet combination that the other players can make. In term of game theory terms, this assumption means that the system will operate with incomplete information. Although all the information is available at every node, the action set of other player is not common knowledge between players and each player know only its action set. This application is motivated by the wireless sensor network. In this type of network, the nodes do not have the necessary power to do the computation of all the packet combination that other node can make and thus can sacrifice optimality to save energy by making less efficient decision. To simulate this system, we will make use of the reinforcement learning algorithm.

\subsection{Best Response Algorithm}

In the original formulation of the best response algorithm, by Cournot \cite{cournot1838recherches}, players choose their actions sequentially. At each time slot, a player selects the action that is the best response to the action chosen by the other players in the previous time slot. Since the state of the game is not known to players, the utility function will be replaced by the expected utility function. This can be done by replacing the actual state $\underline{\omega}_i(t)$ by its expected value $\underline{P}_i(t)$ in \eref{qwer}. The following theorem characterize the outcome of the best-response algorithm for our games:

\begin{theorem}
For our games with perfect and complete information, the best-response algorithm will make the system operate in the PONE of the game.
\end{theorem}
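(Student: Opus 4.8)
The plan is to combine three facts already established in the excerpt: that each of our games is a \emph{common-interest} potential game (the utility $\mathcal{U}_i$ does not depend on $i$, so $\phi=\mathcal{U}_i$ and the coordination game equals the game itself), the convergence guarantee for best-response dynamics in finite potential games, and the characterization of the PONE as the maximizer of $\phi$ given in \thref{th2}. First I would make precise the observation that, because the utility is identical across players, the best response of any player is literally a maximization of the common potential $\phi$ over that player's available actions in $\mathcal{A}_i^\prime(t)$. Consequently, by the defining identity of the exact potential (applied with player $i$'s action as the deviation), whenever a player switches to a strictly better response the value of $\phi$ strictly increases by exactly the same amount. I would state this as a one-line identity so that every improving step of the sequential best-response iteration strictly raises $\phi$.

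Second, I would invoke the \emph{finite improvement property}. Since the game is finite, the potential $\phi$ takes finitely many values and is bounded above, so no value can recur along a strictly improving path; hence the best-response iteration must halt after finitely many updates. By construction it halts exactly at a profile where no player has a profitable unilateral deviation, i.e. at a pure NE, which is precisely the content of \thref{th1} and Corollary~\ref{cor}. I would also note that replacing the unknown realization $\underline{\omega}_i(t)$ by its expectation $\underline{P}_i(t)$ in \eref{qwer}, as the algorithm prescribes, leaves this argument intact, since the expected utility remains a common potential for all players.

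The main obstacle, and the step I would treat most carefully, is upgrading \emph{converges to a NE} to \emph{converges to the PONE}, i.e. to the global maximizer of $\phi$ rather than to an arbitrary local maximum. Here I would exploit the specific equilibrium structure proved for the reformulated games: the $\ell_1$ regularizer $-\|\underline{a}_t\|_1$ together with the back-off mechanism strictly penalizes any collision (a second active transmitter yields no decoding benefit yet costs a unit of potential), so every collision profile is strictly dominated and the NE set collapses to the one-dimensional family of single-transmitter profiles that contains the Nash bargaining solution. I would then argue that, under complete and perfect information, every player can evaluate $\phi$ at all single-transmitter profiles and hence identify the PONE $\underline{a}_t^*$ supplied by \thref{th2}; equipping the best-response correspondence with the natural tie-break toward the globally $\phi$-maximizing action makes $\underline{a}_t^*$ the unique stable fixed point, since at $\underline{a}_t^*$ the designated optimal transmitter's best response is to transmit and every other player's best response is silence.

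I expect this coordination step to be the delicate point. Because best response updates a single coordinate at a time, it cannot on its own move between two distinct single-transmitter profiles (that would require two simultaneous changes), so a priori the iteration could stall at a strictly sub-optimal single-transmitter equilibrium. Closing this gap is exactly where the identical-interest assumption and the completeness of information must be used: one has to show that the dynamics first leave the collision region (via the strict $\ell_1$ penalty and the back-off function) and then, because each player can compute every candidate $\underline{\kappa}^j(t)$ and the common value of $\phi$, the best-response selection reaches the globally optimal transmitter and fixes there. Without both assumptions it is not generally true that coordinate-wise best response finds the global optimum of $\phi$, so the proof must lean squarely on them to exclude stalling at a worse equilibrium; verifying this exclusion rigorously is the heart of the argument.
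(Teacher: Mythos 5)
Your reduction to a common-interest potential game, the finite-improvement-property argument for convergence to some pure NE, and the remark that replacing $\underline{\omega}_i(t)$ by $\underline{P}_i(t)$ in \eref{qwer} preserves the common potential are all sound and consistent with the paper. The genuine gap is exactly the step you flag and then leave open: upgrading ``converges to a NE'' to ``converges to the PONE.'' Your proposed fix --- equipping the best-response correspondence with a tie-break toward the globally $\phi$-maximizing action --- cannot work, because the sub-optimal single-transmitter profiles are \emph{strict} equilibria of the common-payoff stage game: if $\underline{a}_t^\prime$ has only player $j \neq i$ transmitting (with $i$ the PONE transmitter), then $i$ strictly lowers $\phi$ by deviating to transmit (collision plus the $\ell_1$ penalty) and $j$ strictly lowers $\phi$ by going silent (nobody transmits), so there is no tie to break and myopic coordinate-wise best response stalls there under any tie-breaking rule. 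Indeed, starting from the all-silent profile, a myopic Cournot iteration lets whichever player moves first transmit and everyone after stay silent --- landing on the first mover, not the PONE, which confirms your own stalling worry rather than resolving it.

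The paper closes this gap by a device absent from your proposal: it fixes a sequential move order and lets each player best-respond \emph{with anticipation of the later movers}, which complete and perfect information together with the identical utilities make feasible. Concretely, writing $\underline{a}_t^*$ for the PONE with $||\underline{a}_t^*||_1 = a_i(t) = 1$, the paper argues by contradiction on an outcome $\underline{a}_t^\prime$ with single transmitter $j \neq i$: if $j < i$, then transmitting was not $j$'s best response, since by remaining silent he secures the strictly higher common payoff realized when $i$ transmits at his later turn; if $j > i$, then $i$ transmits at his own turn, after which $j$'s best response is silence, since transmitting would contradict the NE property guaranteed by Theorem~\ref{thhhh}. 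Hence $j = i$ and the unique outcome is the PONE. So the missing idea in your write-up is precisely this farsighted, order-indexed case analysis; without it (or an equivalent device), the claim that coordinate-wise best response reaches the global maximum of the potential is simply false for these games, as the strictness of the sub-optimal equilibria shows.
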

\begin{proof}
To proof this theorem, we fist introduce the following theorem:
\begin{theorem}
Let $\mathcal{G}$ be a best-reply potential game with $\mathcal{V}$ a best response potential. If the action $\underline{a}_t^*$ maximizes $\mathcal{V}$, then $\underline{a}_t^*$ is a NE.
\label{thhhh}
\end{theorem}
\begin{proof}
The proof of this theorem can be found in \cite{tembine2012distributed}.
\end{proof}
From our previous analysis, the NE of the games are located on the one dimensional line in which only one player is transmitting. Let $\underline{a}_t^*$ be the PONE of the game such that $||\underline{a}_t^*||_1=a_i(t)=1$. Assume that the outcome of the best-response algorithm is the action profile $\underline{a}_t^\prime \neq \underline{a}_t^*$. In virtue of Theorem \ref{thhhh} and our previous analysis, the action profile will have $||\underline{a}_t^\prime||_1=a_j(t)=1, j\neq i$. For simplicity, assume that players take action sequentially in order. Assume first that $j<i$ i.e. player $j$ will take action before player $i$. Therefore, player $j$ did not take its best-response game because he can insure better payoff by choosing not to transmit. Hence, we obtain $j\geq i$. Assume now $j>i$. Since player $i$ is taken its best action then it will transmit. According to theorem \ref{thhhh} player $j$ is unable to transmit otherwise the outcome will not be a NE of the game. Therefore we obtain $i=j$. In other words, the only outcome of the best-response algorithm is the PONE of the game.
\end{proof}

In this setting, every node compute the packet combination that the other nodes can make, allowing it to completely compute the payoff function. Therefore, this setting is equivalent to the centralized setting since all the computation is done at each and every node. Note that with complete and perfect information, as stated previously, the system will work on the PONE of the game. Therefore, in such a system, no collisions will occurs. In that case, the cooperative decoding delay will be equivalent to the conventional decoding delay and the system will not suffer from any additional delay due to the cooperation between players.

\subsection{Reinforcement Learning Algorithm}

Reinforcement learning \cite{35451549} was originally studied in the context of a single player game. It was formalized in \cite{158125469} to study the behavior of animals. The idea behind this algorithm is that the players interact with their environment and depending on the outcome of the past actions, they decide to choose or avoid certain actions. Actions, that led to a high (satisfactory) payoff by the past, are repeated if the same situation occurs. Actions that were not satisfactory tend to be avoided. In a game theory context, reinforcement learning is implemented by associating a probability distribution over the set of possible actions for all players and update these distributions according to the perceived payoff at each stage of the game. In this paper, we will use the reinforcement algorithm introduced by Bush and Mosteller \cite{55651} for stochastic games with two-players two-actions. In this algorithm, players select stochastically the actions to take according to their probabilities. Formally, the probability $x_j$ that player $j$ take the action $s_j$ at stage $t+1$ is calculated as follows:
\begin{align}
x_{j,t+1} (s_j) = 
\begin{cases}
x_{j,t} (s_j) + \lambda_j \overline{s}_{j,t}(1-x_{j,t} (s_j)) \text{ if } &\overline{s}_{j,t}\geq 0 \\
x_{j,t} (s_j) + \lambda_j \overline{s}_{j,t}x_{j,t} (s_j) \text{ if } &\overline{s}_{j,t} < 0,
\end{cases}
\end{align}
where $\lambda_j$ is the learning rate of player $j$ ($0<\lambda_j<1$), and $\overline{s}_{j,t}$ is the stimulus of action $a_i$ ($-1<\overline{s}_{j,t}<1$) defined as follows:
\begin{align}
\overline{s}_{j,t} = \cfrac{\phi_j(a_i,t)-M_j}{\underset{a \in \mathcal{A}}{max}|\phi_j(a,t)-M_j|},
\end{align}
with $\phi_j(a,t)$ is the payoff perceived by player $j$ at time $t$ after taking the action $a$ and $M_j$ is a satisfactory level for player $j$.

For the potential game class, the reinforcement algorithm converge to a NE of the game with probability $1$ \cite{tembine2012distributed} for any learning rate and satisfactory threshold. For simplicity purposes, we will not study, in this paper, the optimal choice of the learning rate or the satisfactory threshold. The learning rate of a player $i$ will be chosen to be the ratio of its Has set $|\mathcal{H}_i|$ by the total number of packets $N$ and the satisfactory criterion will be $-|M_w|_1+\epsilon$ for some $\epsilon > 0$. In other words, the player is satisfied if at least one player is targeted (any action profile that will not yield the worst payoff).

This algorithm will be fluctuating between a stabilization and a stable states. In the stabilization state, players with the biggest Has set will be more likely to transmit. If a collision occurs due to multiple simultaneous transmissions, the responsible players will back-off. This process is repeated until only one player transmits. When this occurs, the system reaches a stable state in which this player will continue to transmit in the following stages until he become enable to satisfy any other player. When this occur, the payoff will not be satisfactory any longer and the system switches to a new stabilization phase. Note that in the stable phase only one player will be computing the optimal packet combination that it can make and all the other player will save energy since they will not transmit in the next stage.

\section{Extension To The Limited Feedback Scenario}\label{sec:ext}

In this section, we extend our previous study to the limited feedback scenario. Without loss of generality, we will focus our attention on the lossy feedback scenario in this section. In this configuration, the feedback can be erased at some players and thus can create uncertainties in the feedback matrix and the overall system. Other limited feedback scenarios can be considered and the same analysis holds. In game theory terms, the lossy feedback scenarios means that neither the information of the game nor the action sets are common knowledge between players. To model this scenario, we use a local system model \cite{letterarxiv} at each player to take account of the uncertainties. At each stage of the game, for every player $j$, three sets of packets are attributed to each player $i\neq j$:
\begin{itemize}
\item The Has set (denoted by $\mathcal{H}_i^j$): the set of packets received and acknowledged by player $i$ to player $j$.
\item The Wants set (denoted by $\mathcal{W}_i^j$): the set of the lost packets or those which are not acknowledged by user $i$ to player $j$.
\item The Uncertain set (denoted by $\mathcal{U}_i^j$) is defined as the set of packets of player $i$ whose state is uncertain at player $j$. We have $\mathcal{U}_i^j \subseteq \mathcal{W}_i^j$.
\end{itemize}

Each player $k$ stocks these information in a \emph{local feedback matrix} $\mathbf{F}^k = [f_{ij}^k],~ \forall~ i \in \mathcal{M},~ \forall~j,k \in\mathcal{N}$ as follows:
\begin{align}
f_{ij}^k =
\begin{cases}
0 \hspace{0.9 cm}& j \in \mathcal{H}_i^k \\
1 \hspace{0.9 cm}& j \in \mathcal{W}_i^k \setminus \mathcal{U}_i^k \\
x \hspace{0.9 cm}& j \in \mathcal{U}_i^k.
\end{cases}
\end{align} 

Using its local feedback matrix, each player $k$ can compute the optimal packet combination that it can encode. The expressions of these optimal packet combination can be found in \cite{jsaccompletion} for the completion time and the maximum decoding delay and in \cite{refjournal} for the sum decoding delay. In this paper, we will assume reciprocal channel i.e. the erasure probability in the forward link (transmission link) is equal to the erasure on the backward link (feedback link). Note that the assumption is made without loss of generality since the expressions of the optimal packet combination are available for general channels.

To simulate the system, we will use a modified version of the reinforcement algorithm introduced in \sref{sec:algo}. Note that each player knows its Has set and thus the learning rate is a well defined quantity. However, the stimulus is not since the players do no have access to the real payoff their action generated.

The reinforcement learning \cite{556dv51} consists in updating the probability distribution of action $s_j$ to be taken in time $t+1$ as follows:

\begin{align}
x_{j,t+1} (s_j) = x_{j,t}(s_j) + \lambda_j \tilde{s}_{j,t}(\mathds{1}_{\{a_{j,t=s_j}\}}-x_{j,t}(s_j)) ,
\end{align}
where $\tilde{s}_{j,t}$ is an estimate of the payoff at stage $t$ of the game and $\mathds{1}$ is the step function. In our context, we will estimate the payoff by its expected value computed when computing the packet combination of each of the delay aspects.

\section{Simulation Results}\label{sec:simul}

In this section, we present the simulation results comparing the delay encountered by players when applying the PMP system against the distributed cooperative data exchange system in different scenarios. We first compare our optimal games (complete and perfect information) of the completion time, the maximum decoding delay and the sum decoding delay against the centralized PMP solution proposed in \cite{confarxiv} for the completion time, in \cite{vtc} for the maximum decoding delay and in \cite{ref2} for the sum decoding delay. We, then, do the same comparison for our low complexity scenarios and finally for the lossy feedback scenario. We also compare the delay experienced by players against the player-player packet erasure probability relatively to the base station-player packet erasure probability since the short range communications are more reliable than the base station-player communications \cite{6620795,6404743}.

In these simulations, the delay is computed over a large number of iterations and the average value is presented. We assume that the packet erasure probability remains constant during a delivery period and change from iteration to iteration while keeping its mean, $P$ of the player-player and $Q$ for the base BS-players, constant. We also assume that each player have perfect knowledge of the packet erasure probabilities linking him to the other player (i.e. the estimation of this probability is perfect).

\begin{figure}[t]
\centering
  \includegraphics[width=1\linewidth]{./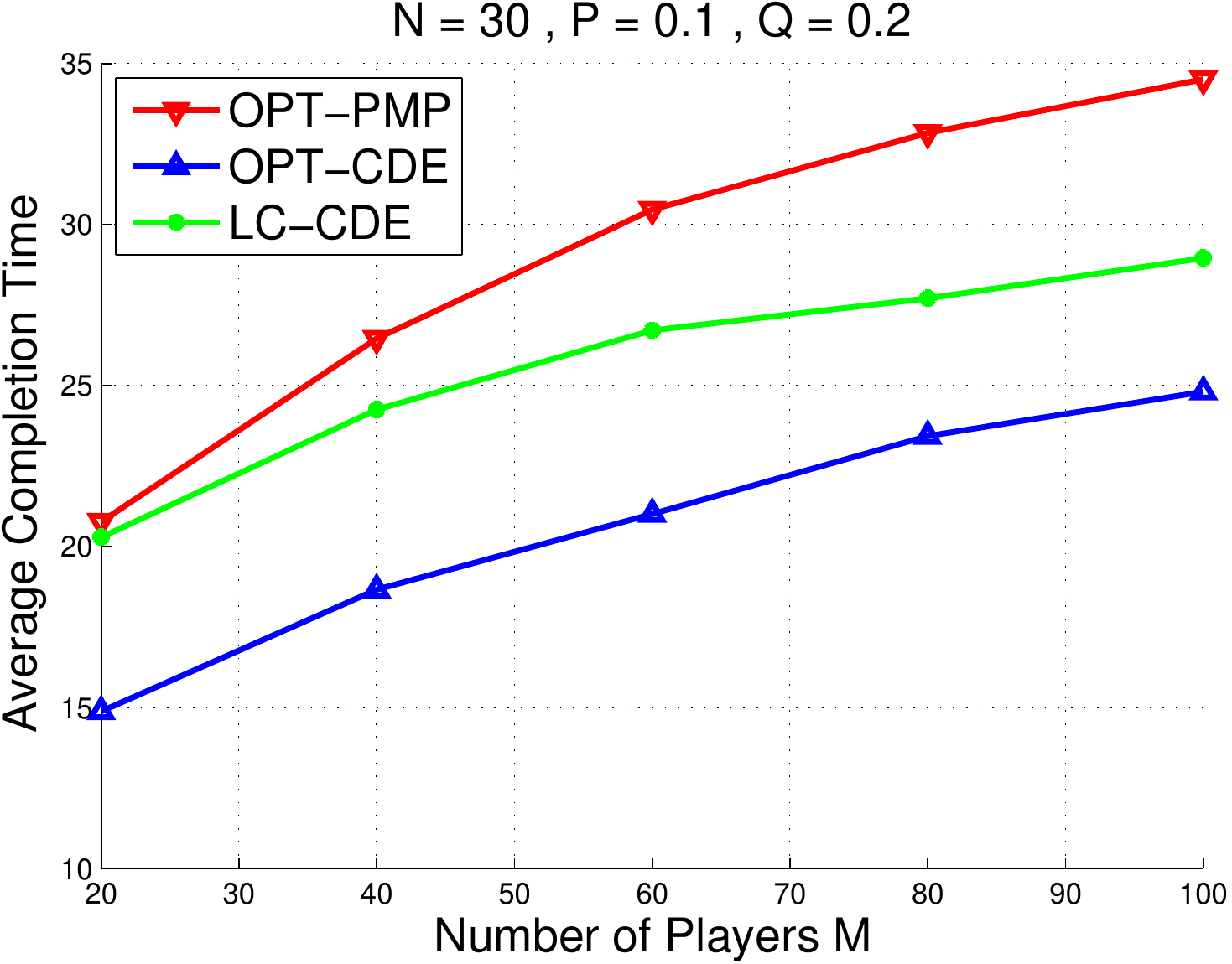}\\
  \caption{Mean completion time versus number of players $M$.}\label{fig:MCT}
\end{figure}

\begin{figure}[t]
\centering
  \includegraphics[width=1\linewidth]{./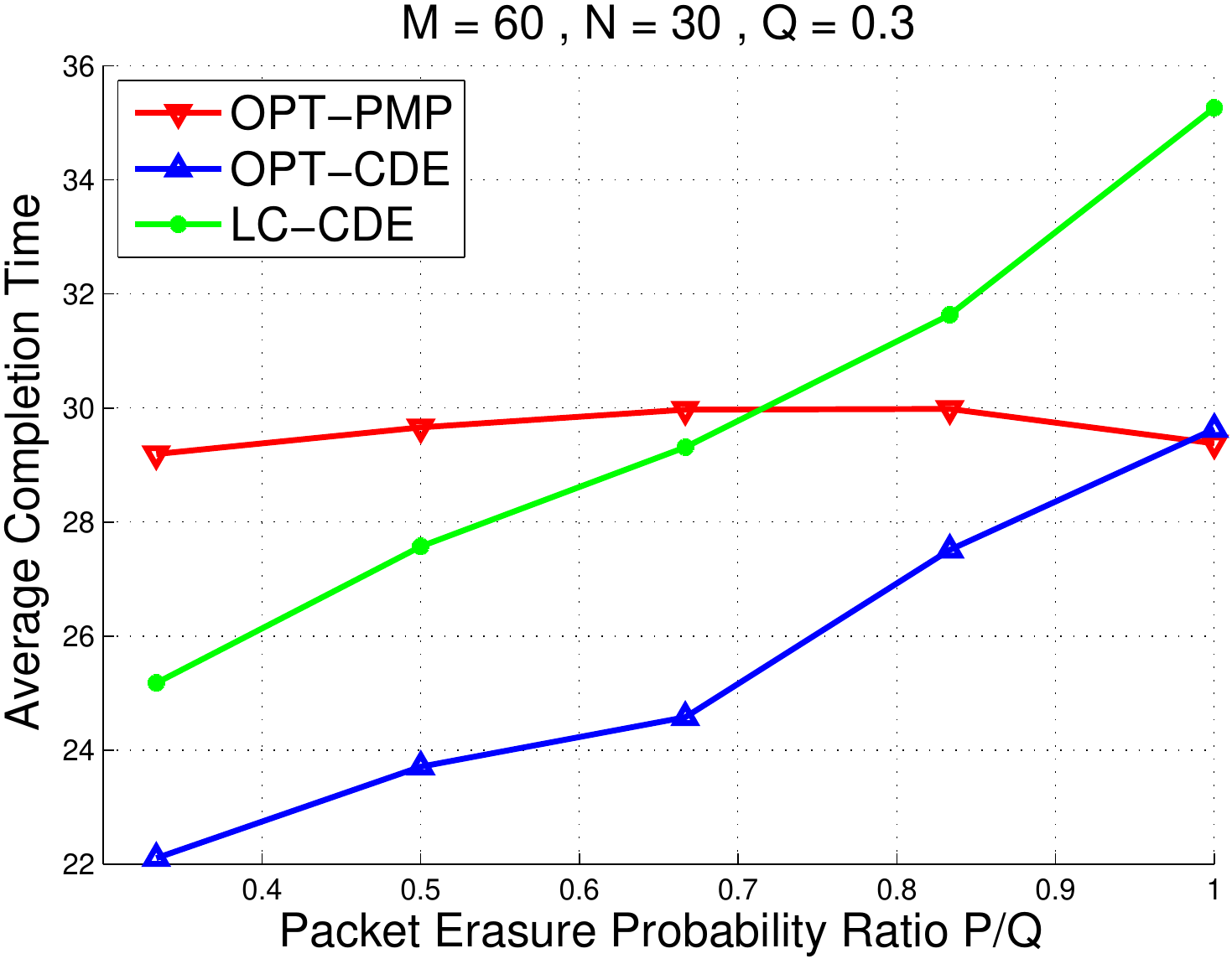}\\
  \caption{Mean completion time versus players erasure $P$.}\label{fig:PQCT}
\end{figure}

\begin{figure}[t]
\centering
  \includegraphics[width=1\linewidth]{./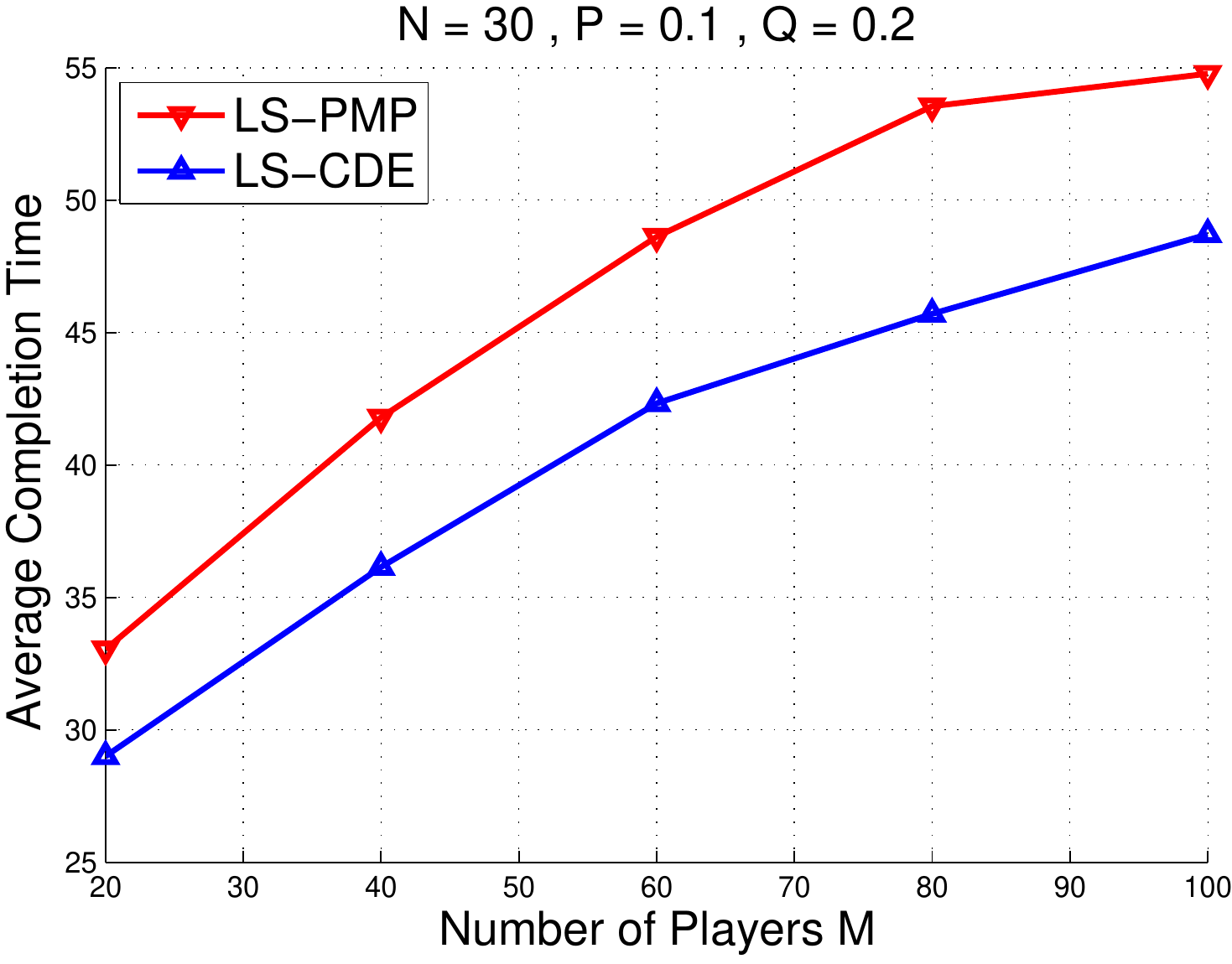}\\
  \caption{Mean completion time in lossy feedback versus number of players $M$.}\label{fig:MCTL}
\end{figure}

\begin{figure}[t]
\centering
  \includegraphics[width=1\linewidth]{./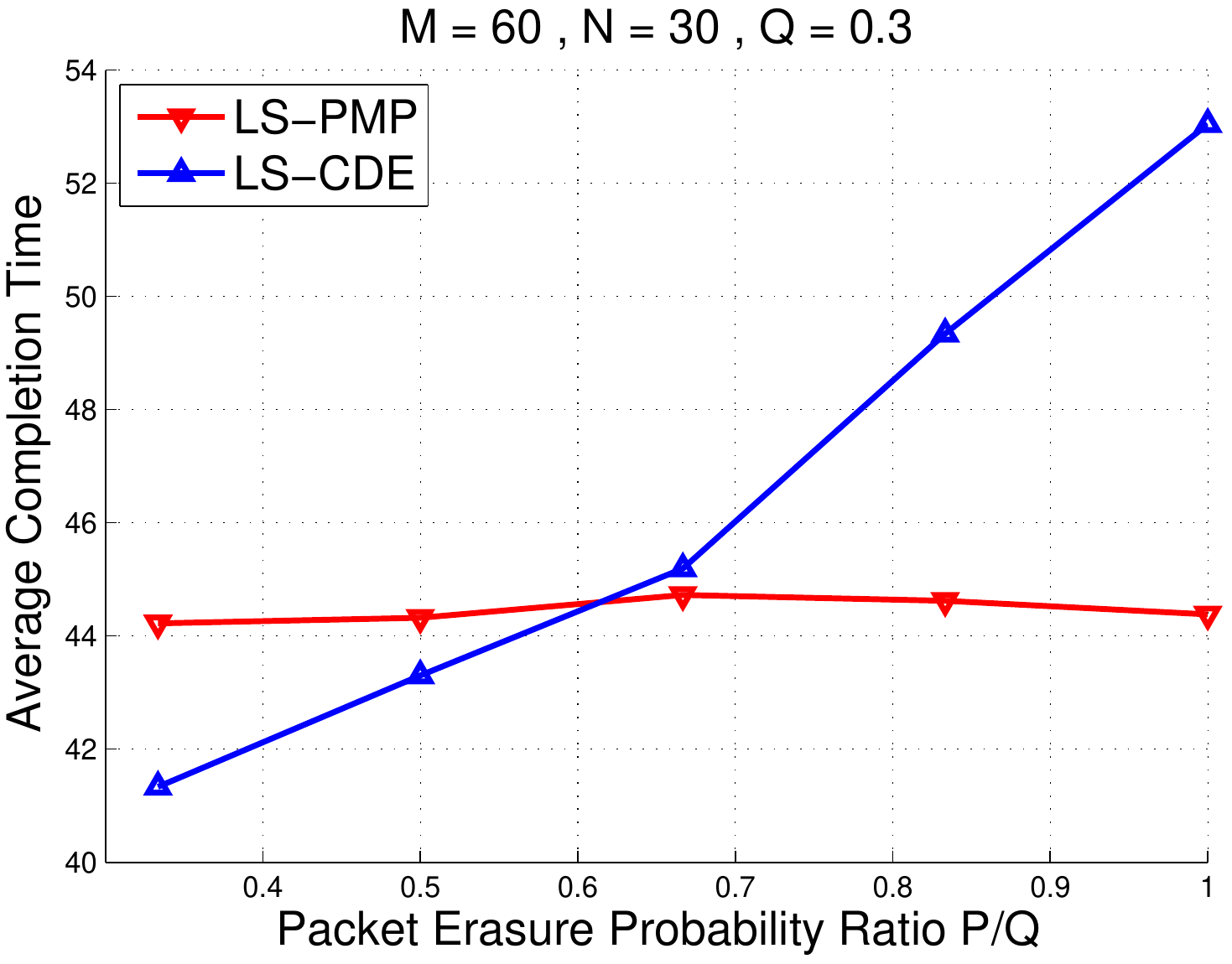}\\
  \caption{Mean completion time in lossy feedback versus players erasure $P$.}\label{fig:PQCTL}
\end{figure}

\begin{figure}[t]
\centering
  \includegraphics[width=1\linewidth]{./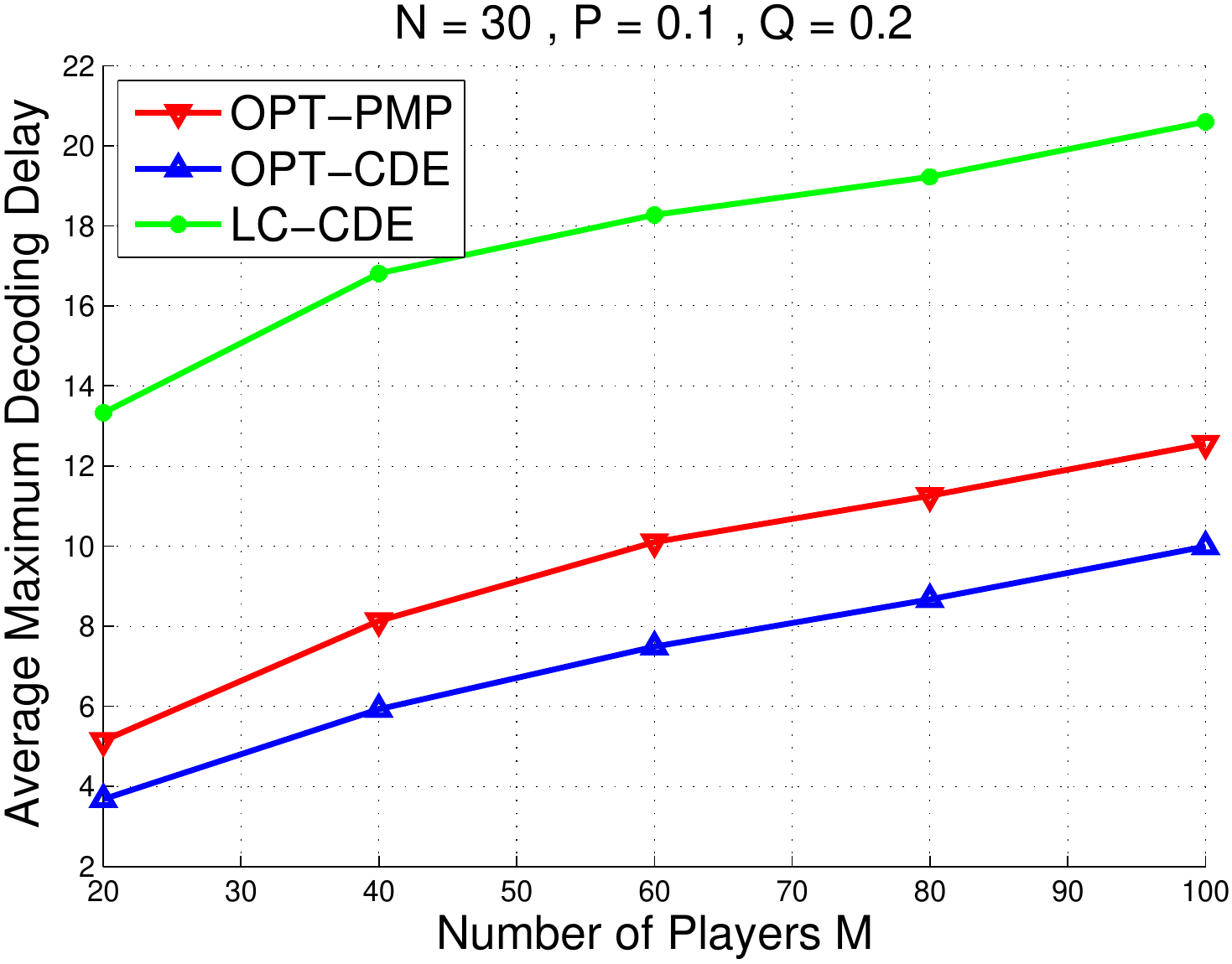}\\
  \caption{Mean maximum delay versus number of players $M$.}\label{fig:MMAX}
\end{figure}

\begin{figure}[t]
\centering
  \includegraphics[width=1\linewidth]{./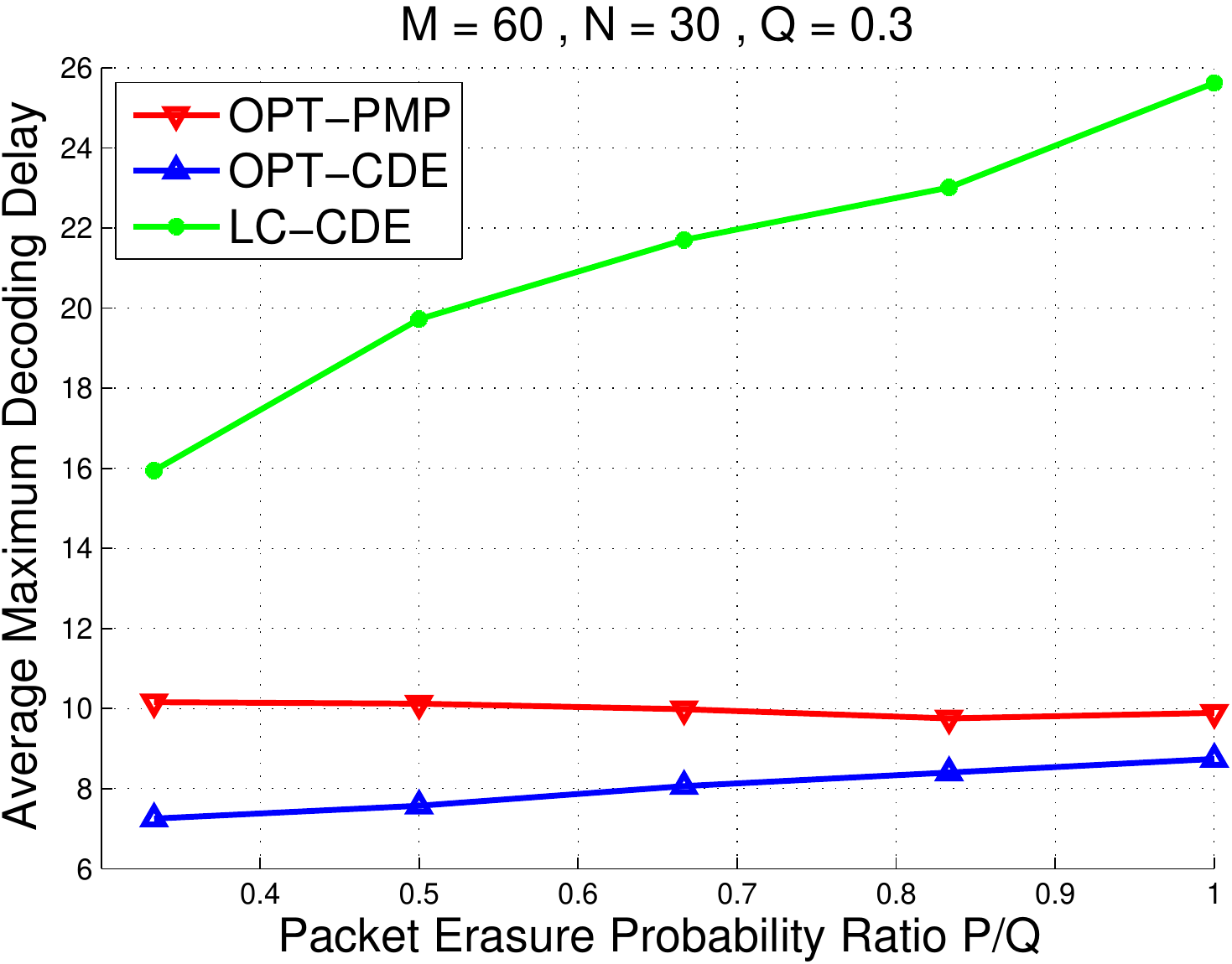}\\
  \caption{Mean maximum delay versus players erasure $P$.}\label{fig:PQMAX}
\end{figure}

\begin{figure}[t]
\centering
  \includegraphics[width=1\linewidth]{./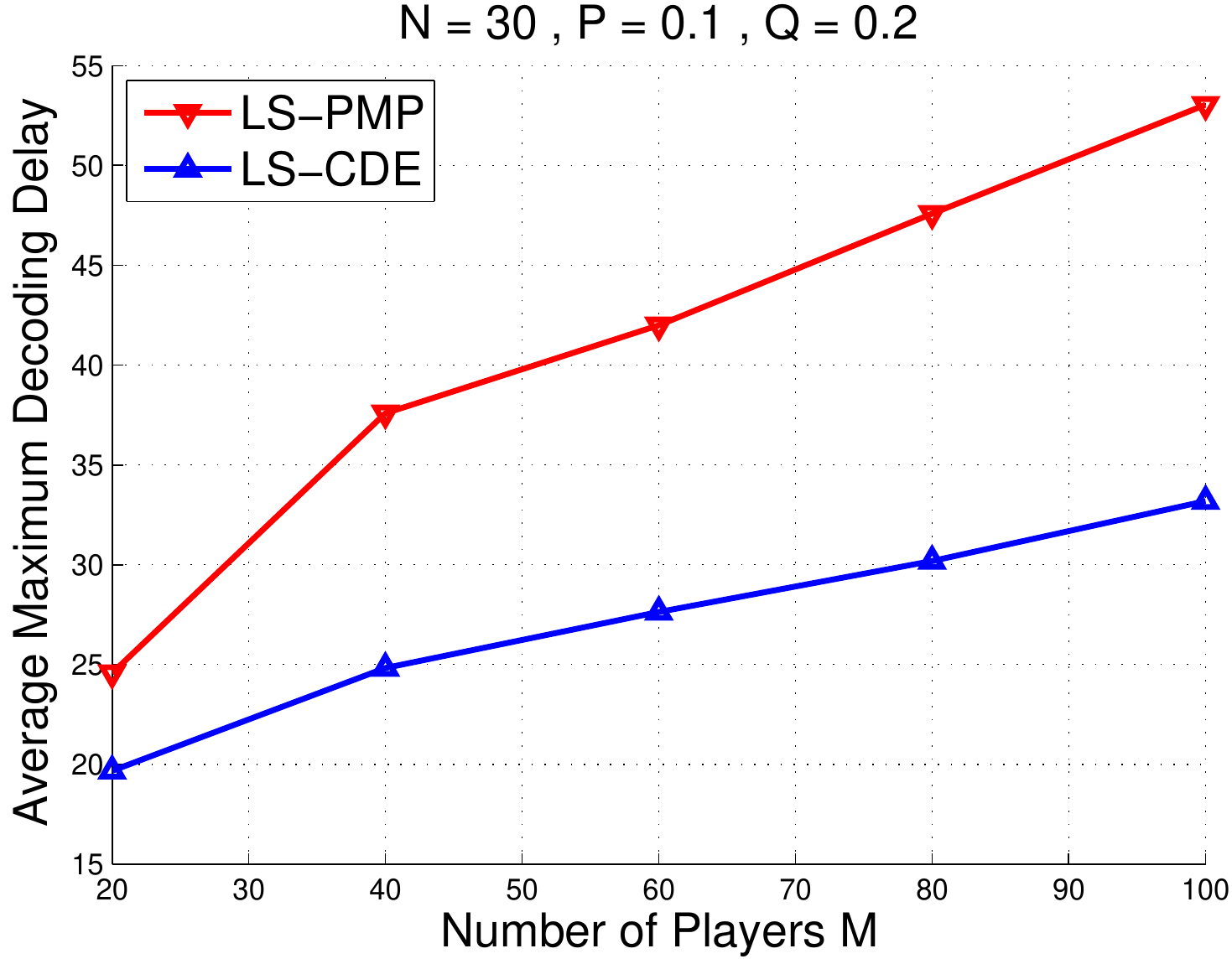}\\
  \caption{Mean maximum delay in lossy feedback versus number of players $M$.}\label{fig:MMAXL}
\end{figure}

\begin{figure}[t]
\centering
  \includegraphics[width=1\linewidth]{./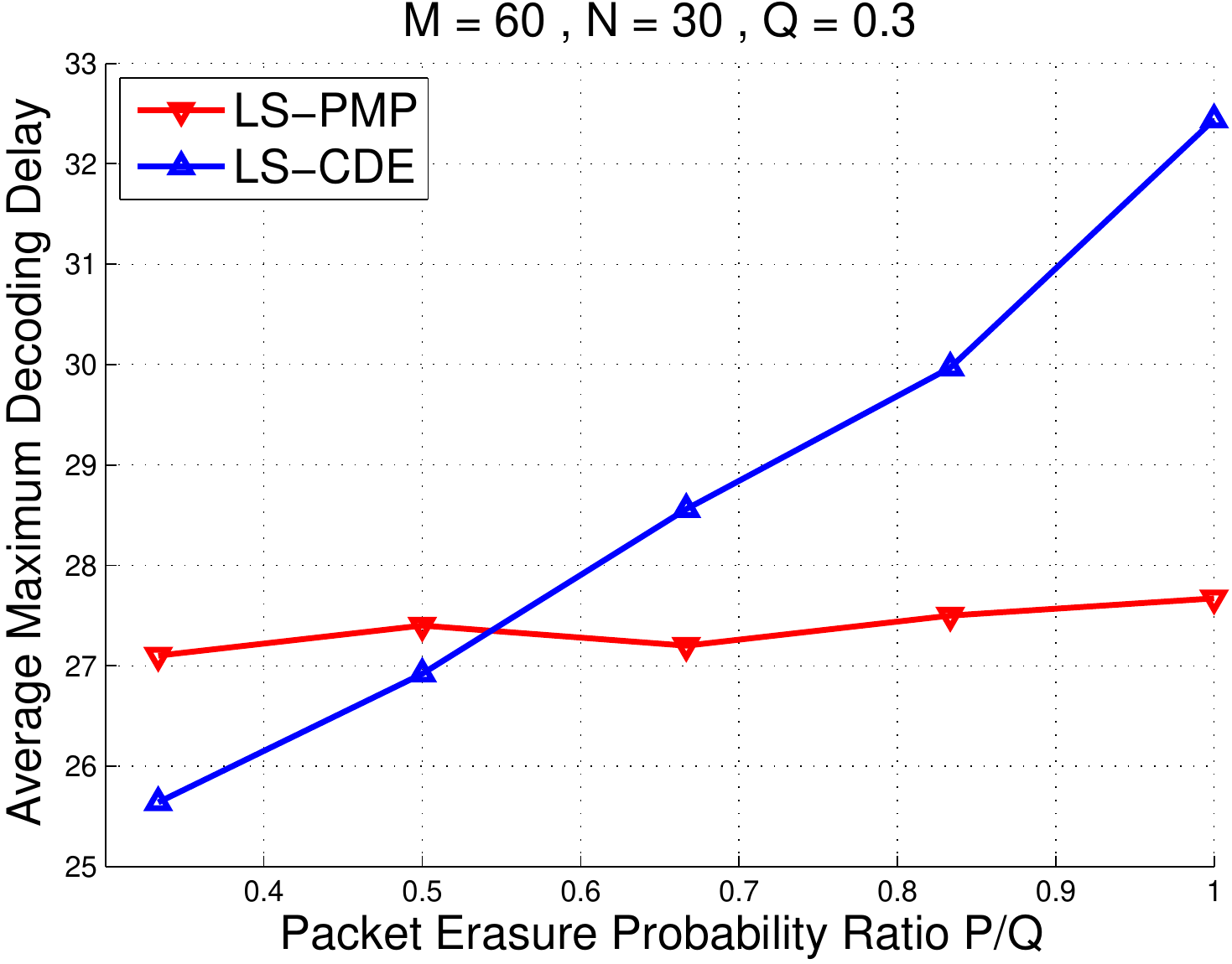}\\
  \caption{Mean maximum delay in lossy feedback versus players erasure $P$.}\label{fig:PQMAXL}
\end{figure}

\begin{figure}[t]
\centering
  \includegraphics[width=1\linewidth]{./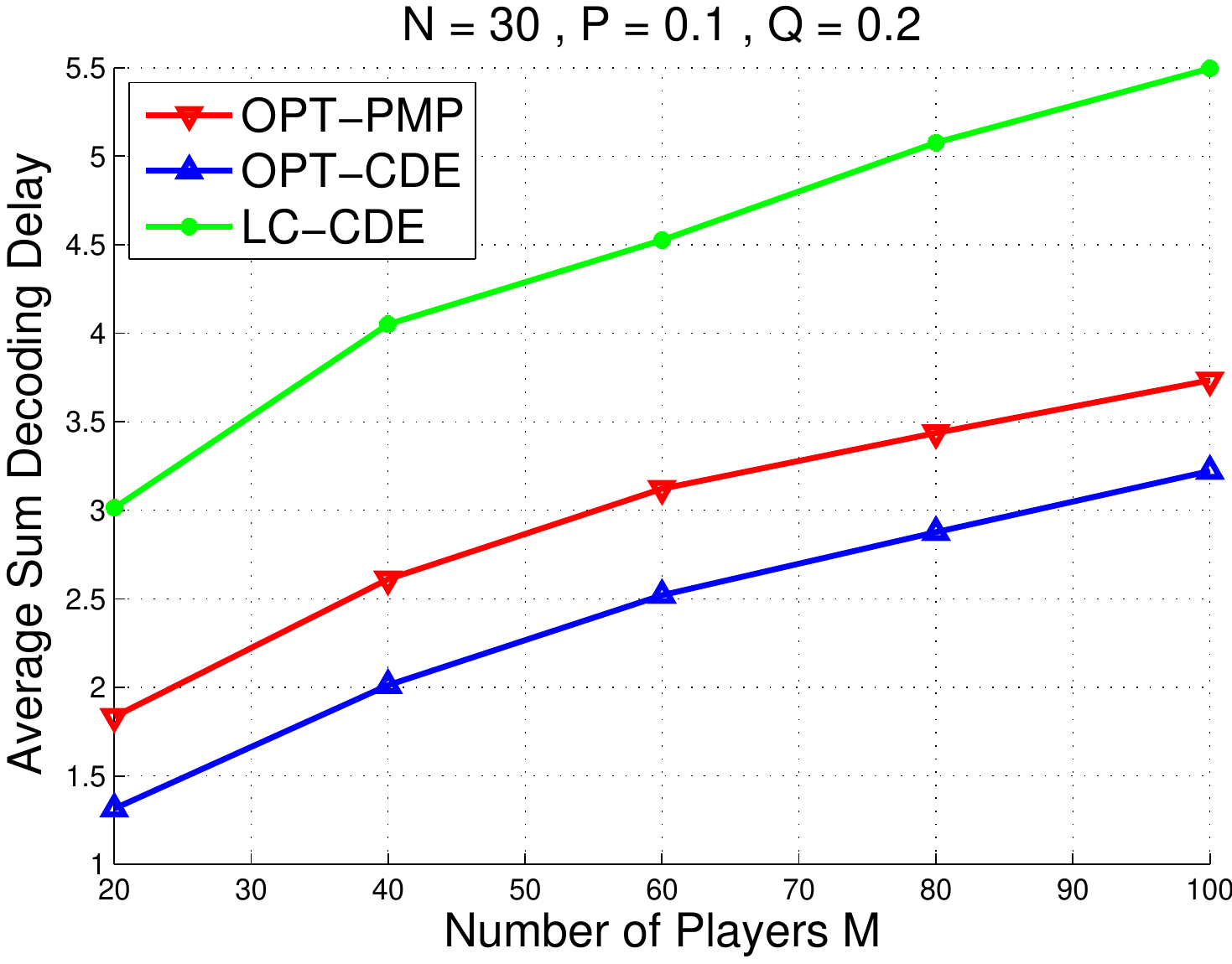}\\
  \caption{Mean sum decoding delay versus number of players $M$.}\label{fig:MSUM}
\end{figure}

\begin{figure}[t]
\centering
  \includegraphics[width=1\linewidth]{./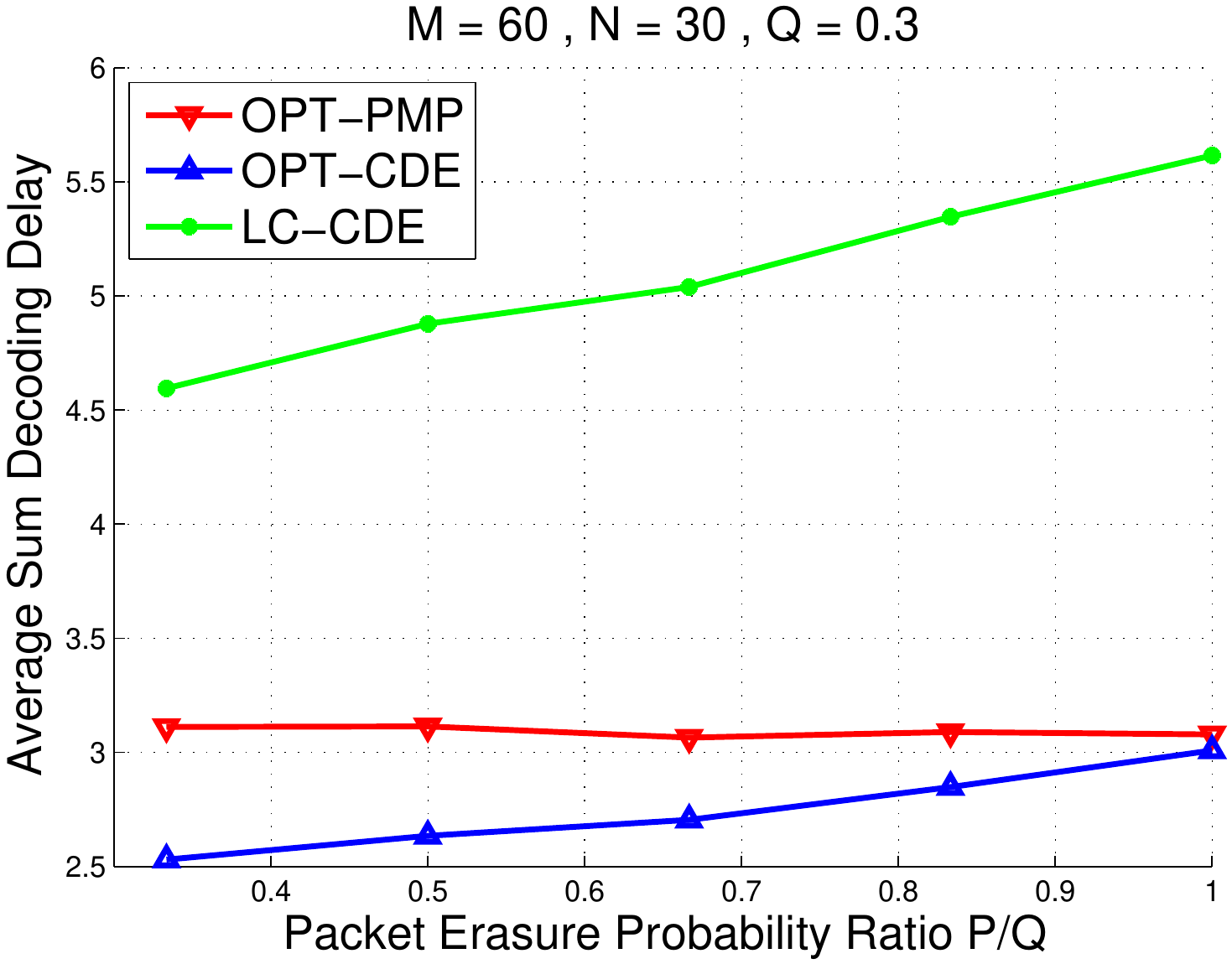}\\
  \caption{Mean sum decoding delay versus players erasure $P$.}\label{fig:PQSUM}
\end{figure}

\begin{figure}[t]
\centering
  \includegraphics[width=1\linewidth]{./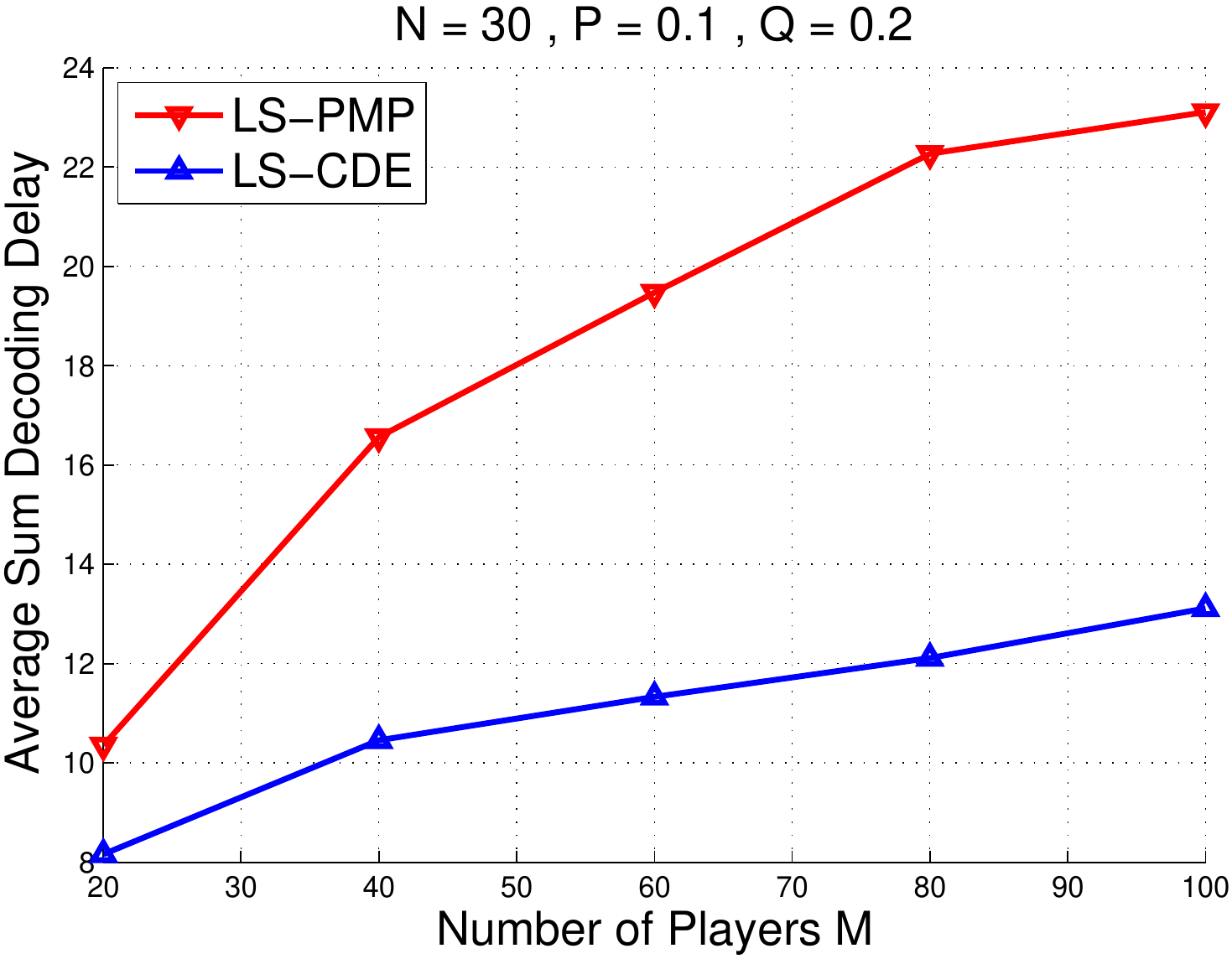}\\
  \caption{Mean sum decoding delay in lossy feedback versus number of players $M$.}\label{fig:MSUML}
\end{figure}

\begin{figure}[t]
\centering
  \includegraphics[width=1\linewidth]{./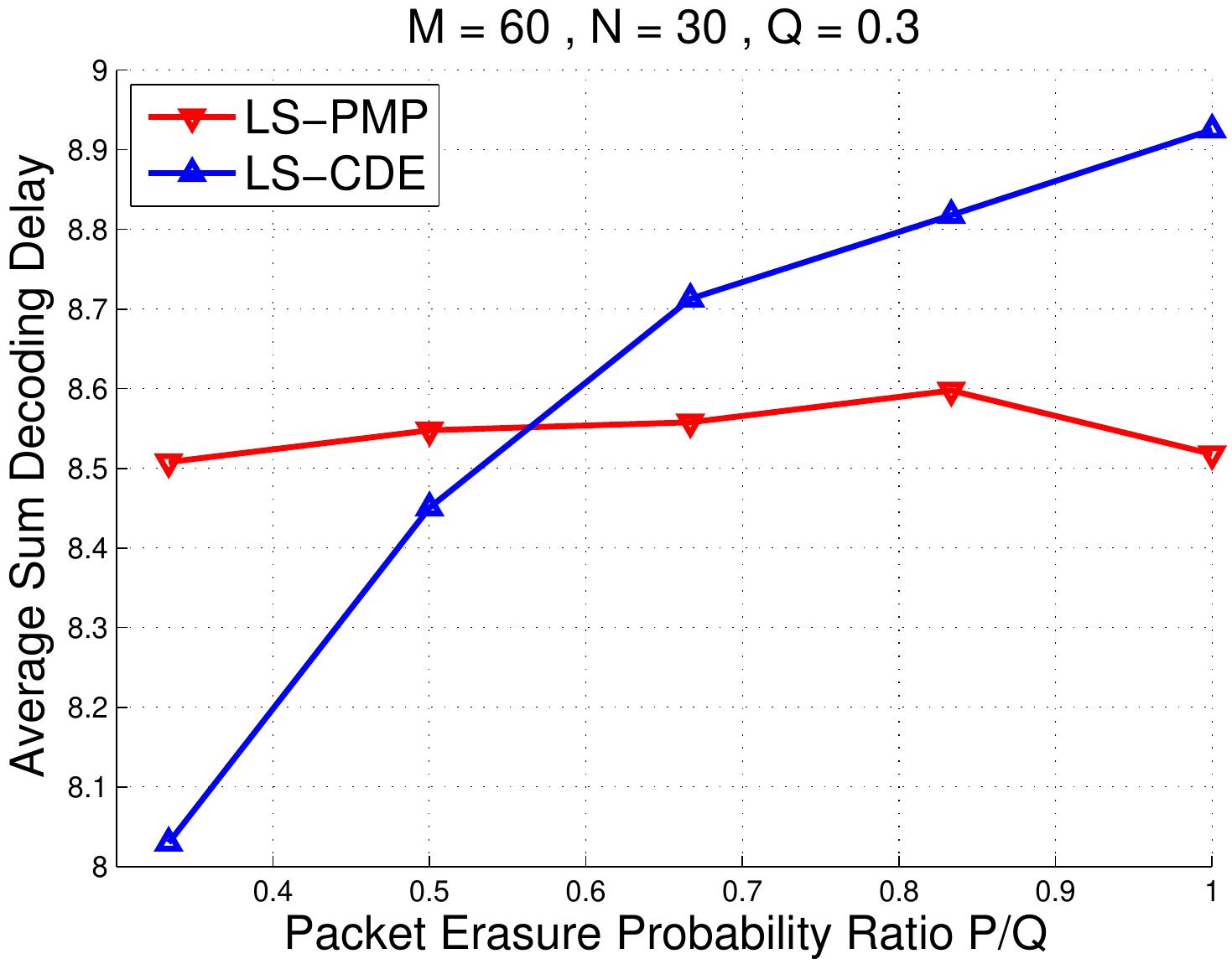}\\
  \caption{Mean sum decoding delay in lossy feedback versus players erasure $P$.}\label{fig:PQSUML}
\end{figure}

\fref{fig:MCT} depicts the comparison of the average completion time achieved by the PMP scheme (denoted by OPT-PMP), our perfect CDE scheme (denoted by OPT-CDE) and our low complexity scenario (denoted by LC-CDE) in perfect feedback scenario against the number of players $M$ for $N=30$, $Q=0.2$, and $P=0.1$. \fref{fig:PQCT} presents the mean completion time against the ratio of the player-player and BS-player erasure probability $P/Q$ for $M=60$, $N=30$, and $Q=0.3$. 

\fref{fig:MCTL} depicts the average completion time achieved by the PMP scheme (denoted by LS-PMP) and our CDE scheme (denoted by LS-CDE) in lossy feedback scenario against the number of players $M$ for $N=30$, $Q=0.2$, and $P=0.1$. \fref{fig:PQCTL} presents the mean completion time in lossy feedback against the ratio of the player-player and BS-player erasure probability $P/Q$ for $M=60$, $N=30$, and $Q=0.3$. 

\fref{fig:MMAX} and \fref{fig:MSUM} illustrate the same comparison in perfect feedback scenario for the maximum decoding delay and the sum decoding delay, respectively, against the number of players $M$ for $N=30$, $Q=0.2$, and $P=0.1$. \fref{fig:PQMAX} and \fref{fig:PQCT} show this comparison in perfect feedback scenario  for the maximum decoding delay and the sum decoding delay, respectively, against the player-player erasure probability $P/Q$ while keeping the BS-players erasure probability $Q$ fixed for $M=60$, $N=30$ and  $Q=0.3$.

\fref{fig:MMAXL} and \fref{fig:MSUML} depict the aforementioned comparison in lossy feedback scenario against number of players $M$ for $N=30$, $Q=0.2$, and $P=0.1$ and \fref{fig:PQMAXL} and \fref{fig:PQCTL} show this comparison in lossy feedback scenario against the player-player erasure probability $P/Q$ while keeping the BS-players erasure probability $Q$ fixed for $M=60$, $N=30$ and  $Q=0.3$.

From all the figures, we can clearly see that our optimal cooperative data exchange algorithms outperform the traditional PMP approach. \fref{fig:MCT}, \fref{fig:MMAX}, and \fref{fig:MSUM} illustrate the gain in using a distributed algorithm when the player-player channel conditions is better than the BS-player channel ($P=0.5Q$).

For a small number of players, the PMP setting is close to our optimal approach. This can be explained by the fact that, for a small number of players, the probability that the union of the Has sets of all the players is equal to $\mathcal{N}$ is low. Thus, the BS carries a good portion of the recovery process until this condition occurs and then CDE can take place. This makes the overall performance of close to the absolute PMP scheme. However, the larger the number of players, there is a much bigger probability that the union of their Has sets is equal to $\mathcal{N}$ and thus the more CDE will be employed, the bigger the gap to the PMP performance. 

\fref{fig:PQCT} illustrates the mean completion time against the player-player erasure probability for a fixed BS-player erasure. In this configuration, for the same erasure probability, the PMP scheme outperform the optimal distributed approach. This can be explained first by the fact that the BS has all the packets whereas any one player has only a subset and thus, in general, it has better ability to form coding combinations that target more users than any single player. It can also be explained by the fact that in our optimal CDE approach, the approximation of the completion time using the decoding delay approach requires the erasure probability between the sender in large sense and the player. For the PMP scheme only the base station is transmitting and therefore this probability is fixed (from BS to players). However in the optimal CDE scheme, all players can transmit and we approximate the probability by the average erasure linking each player to all the other players. This approximation degrades the scheme. However as the channel linking players become better, CDE starts to outperform even if players send less efficient coding combinations and we clearly can see the difference between our scheme and the PMP one. We also note that the low complexity CDE outperforms the PMP scheme for a good player-player channel comparing to the BS-player channel.

From \fref{fig:MCTL} and \fref{fig:PQCTL}, we clearly can see that the CDE outperforms the PMP in the lossy feedback scenario for a good enough player-player channel ($P\approx 0.6Q$). However, the CDE scheme rapidly degrades with the number of uncertainties (player-player erasure probability). This can be explained by the fact that for a low erasure, the uncertainties are relatively rare events and therefore the system is nearly equivalent to the low complexity CDE. However, as the erasure increases, the uncertainties also increases and players are more likely to take less efficient decision since the estimation will degrades. The same analysis is applicable for the maximum decoding delay (\fref{fig:MMAXL} and \fref{fig:PQMAXL}) and the sum decoding delay (\fref{fig:MSUML} and \fref{fig:PQSUML}).

\fref{fig:PQMAX} presents the average maximum decoding delay for a fixed BS-player erasure probability against the player-player erasure probability. We can clearly see that as the player-player communications become more reliable (geocentrically close players), the gap between our optimal CDE scheme and the PMP scheme increases. Note that also for the same erasure probability between player-player and the BS-player, our scheme outperforms the PMP. This can be explained by the fact that the equality of erasure is only in the average sense. Therefore, for a fixed packet combination, the minimum expected delay that can be achieved by one of the players, that can form this combination, is less than the expected delay when the base station sends the combination. The same line of toughs can be applied to explain \fref{fig:PQSUM}. For the maximum and the sum decoding delay, the low complexity algorithm performs worst that the PMP scheme in all scenarios. This can be explained by the fact that in the low complexity CDE, the system is suffering from both the conventional and the cooperative decoding delay. Whereas, the optimal PMP and optimal CDE do no suffer from the cooperative decoding delay.

\section{Conclusion and Future Perspectives}\label{sec:concl}

In this paper, we formulated the problem of minimizing the delay of instantly decodable network coding for cooperative data exchange in decentralized wireless network as cooperative control game. We employed game theory as a tool to improve the distributed solution by overcoming the need for a central controller or additional signaling in the system. We modeled the session by self-interested players in a non-cooperative potential game. The utility functions were designed in such way that increasing individual payoff results in a collective behavior achieving both a desirable system performance and the Nash bargaining solution. We formulated three games to minimize each delay aspect in IDNC and improved these formulations to include punishment policy and achieve Nash bargaining solution. Through extensive simulations, our framework is tested against the best performance that could be found in the conventional point-to-multipoint (PMP) recovery process in numerous cases: first we simulated the problem with complete information. We, then, simulate with incomplete information and finally we tested it in lossy feedback scenario. Numerical results showed that our formulation with complete information largely outperforms the conventional PMP scheme in most practical scenarios and achieved a lower delay. The advantage of this formulation is that it can be easily extended. For example, the extension to the limited feedback scenario is straightforward. As future perspectives, this formulation can be extended to the case where not all the players are in the range of each other. In other words, the utility function will not be completely defined for the players. Moreover, the set of NE of the game will be different from the set derived in this paper. Another interesting research direction is the multicast cast with limited range. In this scenario, the packet demand of each player can differ and players are not all in the transmission range of each other. Finally, the case of imperfect feedback is another important and more practical extension.

\appendices

\numberwithin{equation}{section}

\section{Proof of Theorem 3}
Note that the cost function can be written as:
\begin{align}
\phi^\prime(\underline{a}_t,\underline{h}_t) = \phi^\prime(\underline{a}_{t-1},\underline{h}_{t-1}) + \xi(\underline{a}_{t},\underline{h}_t),
\end{align}
with:
\begin{align}
\xi(\underline{a}_{t},\underline{h}_t) = ||\underline{\mathcal{C}}(\underline{a}_t,\underline{h}_t)||_\infty - ||\underline{\mathcal{C}}(\underline{a}_{t-1},\underline{h}_{t-1})||_\infty .
\end{align}
Let $Q(t)$ be the set of players that can potentially increase the cost function at stage $t$ of the game. The mathematical definition of this set is the following:
\begin{align}
&Q(t) = \{ i \in \mathcal{M} \text{ such that }  \\
& \qquad \mathcal{C}_i(\underline{a}_{t-1},\underline{h}_{t-1}) + 1/(1-\overline{p}_i) > ||\underline{\mathcal{C}}(\underline{a}_{t-1},\underline{h}_{t-1})||_\infty \nonumber \\
& \hspace{2cm} \text{ and } M^w_i = 1\}. \nonumber
\end{align} 

Clearly, if $Q(t)= \varnothing$, then any action profile $\underline{a}_{t}^* \in \mathcal{A}(t)$ is a NE since all the profiles will not change the cost function $\phi^\prime(\underline{a}_t^*,\underline{h}_t) = \phi^\prime(\underline{a}_{t-1},\underline{h}_{t-1})$. In that case, we have $PoA(t) = 1$. Now assume $Q(t) \neq \varnothing$. For action profiles $\underline{a}_{t}$ the cost function will increase according to the norm of the action profile by the quantity:
\begin{align}
&\phi^\prime(\underline{a}_t,\underline{h}_t) - \phi^\prime(\underline{a}_{t-1},\underline{h}_{t-1}) = \nonumber \\
&\begin{cases}
\underset{i \in Q(t)}{max }\cfrac{1}{1-\overline{p}_i} \hspace{0.2cm} &\text{if } ||\underline{a}_t||_1 \neq 1 \\
\underset{i \in Q(t) \cap \underline{\mathcal{D}}_{w_j,\kappa^j}(t)}{max }\cfrac{1}{1-\overline{p}_i} \hspace{0.2cm} &\text{if } ||\underline{a}_t||_1 = a_j(t) = 1.
\end{cases}
\end{align}
Let $Z(t)$ be the set of players that can target players in the critical set $Q(t)$ and reduce the increase in the cost function. This set is defined as:
\begin{align}
Z(t) = \{ &j \in \mathcal{M} \text{ such that } \nonumber \\
& \underset{i \in Q(t) \cap \underline{\mathcal{D}}_{w_j,\kappa^j}(t)}{max }\cfrac{1}{1-\overline{p}_i} < \underset{i \in Q(t)}{max }\cfrac{1}{1-\overline{p}_i}\}.
\end{align}
Two cases can be distinguished:
\begin{itemize}
\item $Z(t) = \varnothing$ : all action profile $\underline{a}_t^*$ will yield the same cast function $\phi^\prime(\underline{a}_t^*,\underline{h}_t) = \phi^\prime(\underline{a}_{t-1},\underline{h}_{t-1}) + \underset{i \in Q(t)}{max }\cfrac{1}{1-\overline{p}_i}$. Therefore all action profile are NE of the game and the PoA is equal to $1$.
\item $Z(t) \neq 0$ : action profile $\underline{a}_t^*=a_j(t)$ such that $j\in Z(t)$ will yield a lower cost function than the other profiles.
\end{itemize}

Define $Y_0(t) = \underset{i \in Q(t)}{max }\cfrac{1}{1-\overline{p}_i}$ as the increase in the cost function when not exactly one player is transmitting and $Y_j(t) = \underset{i \in Q(t) \cap \underline{\mathcal{D}}_{w_j,\kappa^j}(t)}{max }\cfrac{1}{1-\overline{p}_i}$ the increase when only player $j$ is transmitting. Clearly for action profiles $\underline{a}_t^*$ such that $||\underline{a}_t^*||_1=0$ are not NE since the unilateral deviation of player $i \in Z(t)$ will decrease the cost function. For action profile $\underline{a}_t^*$ such that $||\underline{a}_t^*||_1>2$, for any unilateral deviation, we have $||\underline{a}_{t,i},\underline{a}_{t,-i}^*||>1$. Therefore, the cost function is unchanged and all these action profiles are NE. Let $\underline{a}_t^*$ be an action profile such that $||\underline{a}_t^*||_1=a_j(t)=1$, then the difference in the cost function for any unilateral deviation of player is:
\begin{align}
&\phi^\prime(\underline{a}_{t,i},\underline{a}_{t,-i}^,\underline{h}_{t}) - \phi^\prime(\underline{a}_t^*,\underline{h}_t) \nonumber \\
&= \xi(\underline{a}_{t,i},\underline{a}_{t,-i}^*,\underline{h}_t) - \xi(\underline{a}_{t}^*,\underline{h}_t) =Y_0(t) - Y_j(t).
\end{align} 

Thus, any unilateral deviation will yield the same or a higher cost function. These profiles are NE of the game. Let the action profile be $\underline{a}_{t}^*$ with $||\underline{a}_{t}^*||_1=a_i(t)+a_j(t)=2$. Two scenarios can occur:
\begin{itemize}
\item $i \notin Z(t)$ and $j \notin Z(t)$: By the same argument than for the case where $||\underline{a}_{t}^*||_1>2$, any unilateral deviation will yield the same cost $\phi^\prime(\underline{a}_t^*,\underline{h}_t) = \phi^\prime(\underline{a}_{t-1},\underline{h}_{t-1})+Y_0(t)$. Therefore it is a NE of the game.
\item $i \in Z(t)$ or $j \in Z(t)$: By the same argument than for the case $\underline{a}_{t}^*=\underline{0}$, the cost function can be decreased by unilateral deviation of player $i$ if $j \in Z(t)$ and inversely. This scenario is not a NE of the game.
\end{itemize}
We now characterize the set of all NE of the game:
\begin{align}
E(t) =  
\begin{cases}
\mathcal{A}(t) \text{ if } Z(t) = \varnothing \\
E_1(t)  \text{ otherwise },
\end{cases} 
\end{align}
where
\begin{align}
&E_1(t) = \{ \underline{a}_{t} \in \mathcal{A}(t)  \text{ such that } ||\underline{a}_{t}||_1=1 \text{ or } ||\underline{a}_{t}||_1>2  \nonumber\\
& \qquad \text{ or } (||\underline{a}_{t}||_1=a_i(t)+a_j(t)=2 \text{ and } i,j \notin Z(t))\}.
\end{align}
The \emph{PoA} of Game 3 can be expressed as follows:
\begin{align}
&PoA(t) = \nonumber \\
&\begin{cases}
1 \hspace{0.5cm} &\text{if } Z(t) = \varnothing \\
1 - \cfrac{Y_0(t) - \underset{j \in Z(t)}{min }(Y_j(t))}{\phi^\prime (\underline{a}_{t-1},\underline{h}_{t-1})+Y_0(t)} \hspace{0.5cm} &\text{otherwise }.
\end{cases}
\end{align}

\section{Proof of Theorem 4}
Note that the cost function can be written as:
\begin{align}
\phi^\prime(\underline{a}_t,\underline{h}_t) = \phi^\prime(\underline{a}_{t-1},\underline{h}_{t-1}) + \eta(\underline{a}_{t},,\underline{h}_t),
\end{align}
with:
\begin{align}
\eta(\underline{a}_{t},\underline{h}_t) = 
\begin{cases}
1 \hspace{0.1cm}& \text{if } ||\underline{\mathds{D}}(\underline{a}_t,\underline{h}_t)||_\infty >  ||\underline{\mathds{D}}(\underline{a}_{t-1},\underline{h}_{t-1})||_\infty \\
0 \hspace{0.1cm}& \text{otherwise } .
\end{cases}
\end{align}

Let $Q(t)$ be the set of players that can potentially increase the cost function at stage $t$ of the game. The mathematical definition of this set is the following:
\begin{align}
&Q(t) = \{ i \in \mathcal{M} \text{ such that }  \\
& \qquad \mathds{D}_i(\underline{a}_{t-1},\underline{h}_{t-1}) = ||\underline{\mathds{D}}(\underline{a}_{t-1},\underline{h}_{t-1})||_\infty \text{ and } M^w_i = 1\}. \nonumber
\end{align} 

Clearly, if $Q(t)= \varnothing$, then any action profile $\underline{a}_{t}^* \in \mathcal{A}(t)$ is a NE since all the profiles will not change the cost function $\phi^\prime(\underline{a}_t^*,\underline{h}_t) = \phi^\prime(\underline{a}_{t-1},\underline{h}_{t-1})$. In that case, we have $PoA(t) = 1$. Now assume $Q(t) \neq \varnothing$. Let $\underline{q}(t) \in \{0,1\}^M$ such that $q_i(t) = 1$ iff $i \in Q(t)$. Two cases can be distinguished:
\begin{enumerate}
\item $\nexists~ i \in \mathcal{M}$ such that $\underline{\mathcal{D}}_{w_i,\kappa^i}(t) \circ \underline{q}(t) = \underline{0}$.
\item $\exists~ i \in \mathcal{M}$ such that $\underline{\mathcal{D}}_{w_i,\kappa^i}(t) \circ \underline{q}(t) = \underline{0}$. Let $Z(t)$ be the set of such players.
\end{enumerate}

In the first case, all action profile will yield the same value of the cost $\phi^\prime(\underline{a}_t^*,\underline{h}_t) = \phi^\prime(\underline{a}_{t-1},\underline{h}_{t-1})+1$ and thus all of them are NE of the game and $PoA(t) = 1$. In the second case, the cost varies with the chosen action profile. Let $\underline{a}_{t}^*$ be an action profile such that $||\underline{a}_{t}^*||_1>2$. For any unilateral deviation, we have $||(\underline{a}_{t,i},\underline{a}_{t,-i}^*)||_1>1$. By definition of the delay $\underline{\mathds{D}}$, we have:
\begin{align}
\underline{\mathds{D}}(\underline{a}_{t,i},\underline{a}_{t,-i}^*,\underline{h}_t) = \underline{\mathds{D}}(\underline{a}_{t-1},\underline{h}_{t-1}) + \underline{M}^w(t).
\end{align}
Since we assumed $Q(t) \neq \varnothing$ then we have:
\begin{align}
||\underline{\mathds{D}}(\underline{a}_{t,i},\underline{a}_{t,-i}^*,\underline{h}_t)||_\infty &= ||\underline{\mathds{D}}(\underline{a}_{t-1},\underline{h}_{t-1})||_\infty + 1 \nonumber \\
&= ||\underline{\mathds{D}}(\underline{a}_{t}^*,\underline{h}_t)||_\infty.
\end{align}

Thus all action profile $\underline{a}_{t}^*$ with $||\underline{a}_{t}^*||_1>2$ are NE of the game that yields the maximum value of the cost $\phi^\prime(\underline{a}_t^*,\underline{h}_t) = \phi^\prime(\underline{a}_{t-1},\underline{h}_{t-1})+1$. Consider now the action profile $\underline{a}_{t}^*=\underline{0}$, then for some deviation the cost decreases by the quantity:
\begin{align}
&\phi^\prime(\underline{a}_{t}^*,\underline{h}_t) - \phi^\prime(\underline{a}_{t,i},\underline{a}_{t,-i}^*,\underline{h}_t) = 1 \nonumber\\
&,~\forall~\underline{a}_{t,i} \neq \underline{a}_{t,i}^* \in \mathcal{A}_i(t),~\forall~i \in Z(t) .
\end{align}
Therefore, the action profile of this type are not NE of the game. Let the action profile be $\underline{a}_{t}^*$ with $||\underline{a}_{t}^*||_1=a_i(t)+a_j(t)=2$. Two scenarios can occur:
\begin{itemize}
\item $i \notin Z(t)$ and $j \notin Z(t)$: By the same argument than for the case where $||\underline{a}_{t}^*||_1>2$, any unilateral deviation will yield the same cost $\phi^\prime(\underline{a}_t^*,\underline{h}_t) = \phi^\prime(\underline{a}_{t-1},\underline{h}_{t-1})+1$. Therefore it is a NE of the game.
\item $i \in Z(t)$ or $j \in Z(t)$: By the same argument than for the case $\underline{a}_{t}^*=\underline{0}$, the cost function can be decreased by unilateral deviation of player $i$ if $j \in Z(t)$ and inversely. This scenario is not a NE of the game.
\end{itemize}
Finally consider the action profile $\underline{a}_{t}^*$ with $||\underline{a}_{t}^*||_1=a_i(t)=1$. The following options may arise:
\begin{itemize}
\item $i \notin Z(t)$: By the same argument than for the case where $||\underline{a}_{t}^*||_1>2$, any unilateral deviation will yield the same cost $\phi^\prime(\underline{a}_t^*,\underline{h}_t) = \phi^\prime(\underline{a}_{t-1},\underline{h}_{t-1})+1$. Therefore it is a NE of the game.
\item $i \in Z(t)$: All deviation will increase the cost function. Hence it is a NE of the game.
\end{itemize}
We now characterize the set of all NE of the game:
\begin{align}
E(t) =  
\begin{cases}
\mathcal{A}(t) \text{ if } Z(t) = \varnothing \\
E_1(t)  \text{ otherwise },
\end{cases} 
\end{align}
where
\begin{align}
&E_1(t) = \{ \underline{a}_{t} \in \mathcal{A}(t)  \text{ such that } ||\underline{a}_{t}||_1=1 \text{ or } ||\underline{a}_{t}||_1>2  \nonumber\\
& \qquad \text{ or } (||\underline{a}_{t}||_1=a_i(t)+a_j(t)=2 \text{ and } i,j \notin Z(t))\}.
\end{align}
The \emph{PoA} of Game 2 can be expressed as follows:
\begin{align}
PoA(t) = 
\begin{cases}
1 \hspace{0.5cm} &\text{if } Z(t) = \varnothing \\
1 - \cfrac{1}{\phi^\prime (\underline{a}_{t-1},\underline{h}_{t-1})+1} \hspace{0.5cm} &\text{otherwise }.
\end{cases}
\end{align}

\section{Proof of Theorem 5}
Note that the cost function can be written as:
\begin{align}
\phi^\prime(\underline{a}_t,\underline{h}_t) = \phi^\prime(\underline{a}_{t-1},\underline{h}_{t-1}) + \psi(\underline{a}_{t},\boldsymbol{\omega}(t)),
\end{align}
with:
\begin{align}
&\psi((\underline{a}_{t},\boldsymbol{\omega}(t)) =  \\
&\begin{cases}
||\underline{M}^w(t)||_1 \hspace{0.1cm}& \text{if } ||\underline{a}_t||_1 \neq 1\\
||\underline{\mathcal{D}}_{w_i,\kappa^i}(t)||_1 \hspace{0.1cm}& \text{otherwise } \text{with } i \text{ such that } a_i(t) = ||\underline{a}_t||_1 .
\end{cases}\nonumber
\end{align}
By definition of $\underline{\mathcal{D}}_{w_i,\kappa^i}(t)$, we have $||\underline{M}^w(t)||_1 > ||\underline{\mathcal{D}}_{w_i,\kappa^i}(t)||_1,~\forall~i \in \mathcal{M}$ and therefore:
\begin{align}
\underset{\underline{a}_{t} \in \mathcal{A}(t)}{\text{argmax }}\phi^\prime(\underline{a}_t,\underline{h}_t) &= \underset{\underline{a}_{t} \in \mathcal{A}(t)}{\text{argmax }}\psi((\underline{a}_{t},\boldsymbol{\omega}(t))\nonumber \\
&= \underset{||\underline{a}_t||_1 \neq 1}{\text{argmax }} ||\underline{M}^w(t)||_1  \\
\underset{\underline{a}_{t} \in \mathcal{A}(t)}{\text{argmin }}\phi^\prime(\underline{a}_t,\underline{h}_t) &= \underset{\underline{a}_{t} \in \mathcal{A}(t)}{\text{argmin }} \psi((\underline{a}_{t},\boldsymbol{\omega}(t))  \nonumber \\
&=\underset{||\underline{a}_t||_1 =a_i(t)=1}{\text{argmin }} ||\underline{\mathcal{D}}_{w_i,\kappa^i}(t)||_1.
\end{align}

Let $\underline{a}_{t}^*$ be an action profile such that $||\underline{a}_{t}^*||_1>2$. For any unilateral deviation, we have $||(\underline{a}_{t,i},\underline{a}_{t,-i}^*)||_1>1$. Therefore the cost function is unchanged by any unilateral deviation i.e. $\phi(\underline{a}_{t,i},\underline{a}_{t,-i}^*,\underline{h}_t) = \phi(\underline{a}_{t}^*,\underline{h}_t) ,~\forall~\underline{a}_{t,i} \in \mathcal{A}_i(t),~\forall~i \in \mathcal{M}$. Hence all action profile $\underline{a}_{t}^*$ such that $||\underline{a}_{t}^*||_1>2$ are NE. These NE yields the highest cost $(\phi(\underline{a}_{t-1},\underline{h}_{t-1}) - ||\underline{M}^w(t)||_1)$ among all possible action profiles. Then, we have:
\begin{align}
\underset{s \in E(t)}{\text{max}}\phi^\prime(s,\underline{h}_t) = \phi^\prime(\underline{a}_{t-1},\underline{h}_{t-1}) + ||\underline{M}^w(t)||_1.
\end{align}

Similarly, let $\underline{a}_{t}^*$ be an action profile such that $||\underline{a}_{t}^*||_1=a_j(t)=1$. For any unilateral deviation, we have $||(\underline{a}_{t,i},\underline{a}_{t,-i}^*)||_1 \neq 1$. In terms of cost function, we have:
\begin{align}
&\phi^\prime(\underline{a}_{t,i},\underline{a}_{t,-i}^*,\underline{h}_t) = \phi^\prime(\underline{a}_{t-1},\underline{h}_{t-1}) + ||\underline{M}^w(t)||_1  \nonumber\\
& \quad > \phi^\prime(\underline{a}_{t}^*,\underline{h}_t) = \phi^\prime(\underline{a}_{t-1},\underline{h}_{t-1}) + ||\underline{\mathcal{D}}_{w_j,\kappa^j}(t)||_1 \nonumber\\
&,~\forall~\underline{a}_{t,i} \in \mathcal{A}_i(t),~\forall~i \in \mathcal{M}.
\end{align}
Which conclude that all action profile $\underline{a}_{t}^*$ such that $||\underline{a}_{t}^*||_1=1$ are NE. Let $\underline{a}_{t}^*=\underline{0}$, then any deviation decreases the cost by the quantity:
\begin{align}
&\phi^\prime(\underline{a}_{t}^*,\underline{h}_t) - \phi^\prime(\underline{a}_{t,i},\underline{a}_{t,-i}^*,\underline{h}_t) =||\underline{M}^w(t)||_1 - ||\underline{\mathcal{D}}_{w_i,\kappa^i}(t)||_1  \nonumber\\
&,~\forall~\underline{a}_{t,i} \neq \underline{a}_{t,i}^* \in \mathcal{A}_i(t),~\forall~i \in \mathcal{M}.
\label{eqqq}
\end{align}

Clearly, these action profiles are not NE of the game. Finally, consider the action profile $\underline{a}_{t}^*$ such that $||\underline{a}_{t}^*||_1=a_i(t)+a_j(t)=2$. The unilateral deviation of player $i$ or $j$ will decrease the cost by the same quantity than in equation \eref{eqqq} (replace $i$ by $j$ if it is player $j$ that deviates).Which conclude that these type of action profile are not NE of the game. We now can characteristic the set of all possible NE at stage $t$ of the game:
\begin{align}
E(t) = \{ \underline{a}_{t} \in \mathcal{A}(t) \text{ such that } ||\underline{a}_{t}||_1=1 \text{ or } ||\underline{a}_{t}||_1>2\}.
\end{align}
The \emph{PoA} of Game 1 can be expressed as follows:
\begin{align}
PoA(t) = \cfrac{\underset{||\underline{a}_t||_1=a_i(t)=1}{\text{min}}\phi^\prime (\underline{a}_{t-1},\underline{h}_{t-1})+||\underline{\mathcal{D}}_{\omega_i,\kappa^i}||_1}{\phi^\prime (\underline{a}_{t-1},\underline{h}_{t-1})+||\underline{M}^w(t)||_1}.
\end{align}
We can clearly see that $0 \leq ||\underline{\mathcal{D}}_{w_j,\kappa^j}(t)||_1 \leq ||\underline{M}^w(t)||_1-1$. Therefore, the \emph{PoA} can be bound by the following quantities:
\begin{align}
&1-\cfrac{||\underline{M}^w(t)||_1}{\phi^\prime(\underline{a}_{t-1},\underline{h}_{t-1})+||\underline{M}^w(t)||_1}    \leq \nonumber \\
& \qquad PoA(t) \leq 1-\cfrac{1}{\phi^\prime(\underline{a}_{t-1},\underline{h}_{t-1})+||\underline{M}^w(t)||_1}.
\end{align}

\section{Proof of Theorem 7}
As for game 3, the utility function of Game 4 can be written as:
\begin{align}
\phi^\prime(\underline{a}_t,\underline{h}_t) = \phi^\prime(\underline{a}_{t-1},\underline{h}_{t-1}) + \xi(\underline{a}_{t},\underline{h}_t),
\end{align}
with:
\begin{align}
\xi(\underline{a}_{t},\underline{h}_t) &=  ||\underline{a}_t||_1 + \cfrac{||\underline{\mathds{D}}(\underline{a}_t,\underline{h}_t)-\underline{\mathds{D}}(\underline{a}_{t-1},\underline{h}_{t-1})||_1}{M} \nonumber \\
& \quad +  ||\underline{\mathcal{C}}(\underline{a}_t,\underline{h}_t)||_\infty - ||\underline{\mathcal{C}}(\underline{a}_{t-1},\underline{h}_{t-1})||_\infty .
\end{align}
Let $Q(t)$ be the set of players that can potentially increase the expected completion time at stage $t$ of the game. The mathematical definition of this set is the following:
\begin{align}
&Q(t) = \{ i \in \mathcal{M} \text{ such that }  \\
& \qquad \mathcal{C}_i(\underline{a}_{t-1},\underline{h}_{t-1}) + 1/(1-\overline{p}_i) > ||\underline{\mathcal{C}}(\underline{a}_{t-1},\underline{h}_{t-1})||_\infty \nonumber \\
& \hspace{2cm} \text{ and } M^w_i = 1\}. \nonumber
\end{align} 
Assume $Q(t)= \varnothing$, then any action profile $\underline{a}_{t}^* \in \mathcal{A}(t)$ will not increase the maximum decoding delay and we have:
\begin{align}
\xi(\underline{a}_{t},\underline{h}_t) &=  ||\underline{a}_t||_1 + \cfrac{||\underline{\mathds{D}}(\underline{a}_t,\underline{h}_t)-\underline{\mathds{D}}(\underline{a}_{t-1},\underline{h}_{t-1})||_1}{M}  \\
=&\begin{cases}
\cfrac{||\underline{M}^w(t)||_1}{M} \hspace{0.1cm}& \text{if } ||\underline{a}_t||_1=0 \\
1 +  \cfrac{||\underline{\mathcal{D}}_{w_i,\kappa^i}(t)||_1}{M} \hspace{0.1cm}& \text{if } ||\underline{a}_t||_1=a_i(t)=1\\
||\underline{a}_t||_1 + \cfrac{||\underline{M}^w(t)||_1}{M} \hspace{0.1cm}& \text{otherwise } .
\end{cases} \nonumber
\end{align}

Clearly, we can see that all the profiles $\underline{a}_{t}^*$ with $||\underline{a}_{t}^*||_1>1$ or $||\underline{a}_{t}^*||_1=0$ are not anymore NE of the game. Only action profile with one entry not equal to $0$ are NE. Now assume $Q(t) \neq \varnothing$ and let $Z(t)$ be the set of players that can target players in the critical set $Q(t)$ and reduce the increase in the cost function. This set is defined as:
\begin{align}
Z(t) = \{ &j \in \mathcal{M} \text{ such that } \nonumber \\
& \underset{i \in Q(t) \cap \underline{\mathcal{D}}_{w_j,\kappa^j}(t)}{max }\cfrac{1}{1-\overline{p}_i} < \underset{i \in Q(t)}{max }\cfrac{1}{1-\overline{p}_i}\}.
\end{align}
Two cases can be distinguished:
\begin{itemize}
\item $Z(t) = \varnothing$ : the expected completion time will increase by the same amount $\underset{i \in Q(t)}{max }\cfrac{1}{1-\overline{p}_i}$ for all action  profiles.
\item $Z(t) \neq 0$ : some action profiles lead to lower increase of the expected completion time than others.
\end{itemize}

Define $Y_0(t) = \underset{i \in Q(t)}{max }\cfrac{1}{1-\overline{p}_i}$ as the increase in the cost function when not exactly one player is transmitting and $Y_j(t) = \underset{i \in Q(t) \cap \underline{\mathcal{D}}_{w_j,\kappa^j}(t)}{max }\cfrac{1}{1-\overline{p}_i}$ the increase when only player $j$ is transmitting.
If $Z(t) = \varnothing$ for an action profile $\underline{a}_{t}^*$, the cost function can be expressed as:
\begin{align}
&\xi(\underline{a}_{t}^*,\underline{h}_t) = Y_0(t) \nonumber \\
& +
\begin{cases}
\cfrac{||\underline{M}^w(t)||_1}{M} \hspace{0.1cm}& \text{if } ||\underline{a}_t||_1=0 \\
1 +  \cfrac{||\underline{\mathcal{D}}_{w_i,\kappa^i}(t)||_1}{M} \hspace{0.1cm}& \text{if } ||\underline{a}_t||_1=a_i(t)=1\\
||\underline{a}_t||_1 + \cfrac{||\underline{M}^w(t)||_1}{M} \hspace{0.1cm}& \text{otherwise } .
\end{cases}
\end{align}

Clearly action profiles $\underline{a}_{t}^*$ of type $||\underline{a}_{t}^*||_1 \neq 1$ are no more NE of the game. Now assume $Z(t) \neq \varnothing$ the utility varies with the chosen action profile. Let $\underline{a}_{t}^*$ be an action profile such that $||\underline{a}_{t}^*||_1=\alpha>1$. For some unilateral deviation of player that are transmitting, the cost function will decrease. In other words, we have:
\begin{align}
&\phi^\prime(\underline{a}_{t,i},\underline{a}^*_{t,-i},\underline{h}_t) - \phi^\prime(\underline{a}^*_{t},\underline{h}_t)  = \\
&\begin{cases}
1 > 0  \hspace{0.7cm}\text{if } ||\underline{a}_{t,i},\underline{a}^*_{t,-i}||_1 > 1 \\ 
Y_0(t)-Y_i(t) + 1 + \cfrac{||\underline{M}^w(t)|| - ||\underline{\mathcal{D}}_{w_i,\kappa^i}(t)||_1}{M}  > 0 \\
 \hspace{1.5cm}\text{if } ||\underline{a}_{t,i},\underline{a}^*_{t,-i}||_1 = 1 \text{ and } i \in Z(t) \\
1 + \cfrac{||\underline{M}^w(t)|| - ||\underline{\mathcal{D}}_{w_i,\kappa^i}(t)||_1}{M}  > 0 \\
\hspace{1.5cm}\text{if } ||\underline{a}_{t,i},\underline{a}^*_{t,-i}||_1 = 1 \text{ and } i \notin Z(t).
\end{cases}\nonumber
\end{align}

Thus, all action profile $\underline{a}_{t}^*$ with $||\underline{a}_{t}^*||_1>1$ are not NE and it is clear to see that action profile $\underline{a}_{t}^*$ such that $||\underline{a}_{t}^*||_1=0$ are also not NE of the game. Let $\underline{a}_{t}^*$  be an action profile such that $||\underline{a}_{t}^*||_1=a_i(t)=1$ and $i \notin Z(t)$. The difference in the cost when player $i$ deviates is:
\begin{align}
&\phi^\prime(\underline{a}^*_{t},\underline{h}_t) - \phi^\prime(\underline{a}_{t,i},\underline{a}^*_{t,-i},\underline{h}_t)  \nonumber \\
& \qquad = 1 + \cfrac{||\underline{\mathcal{D}}_{w_i,\kappa^i}(t)||_1 - ||\underline{M}^w(t)|| }{M} > 0.
\end{align}

Therefore such action profile is not a NE. Now consider the action profile $\underline{a}_{t}^*$  such that $||\underline{a}_{t}^*||_1=a_i(t)=1$ and $i \in Z(t)$. The difference in the utility if any player deviates:
\begin{align}
&\phi^\prime(\underline{a}_{t,i},\underline{a}^*_{t,-i},\underline{h}_t) - \phi^\prime(\underline{a}^*_{t},\underline{h}_t) =  \\
& \hspace{0.5cm} \begin{cases}
1 + Y_0(t)-Y_i(t) + \cfrac{ ||\underline{M}^w(t)|| - ||\underline{\mathcal{D}}_{w_i,\kappa^i}(t)||_1 }{M} > 0 \\
\hspace{1.5cm}\text{if } ||\underline{a}_{t,j},\underline{a}^*_{t,-j}||_1 = 2\\
Y_0(t)-Y_i(t) + \cfrac{ ||\underline{M}^w(t)|| - ||\underline{\mathcal{D}}_{w_i,\kappa^i}(t)||_1}{M} > 0  \\
\hspace{1.5cm} \text{if } ||\underline{a}_{t,j},\underline{a}^*_{t,-j}||_1 = 0.
\end{cases} \nonumber
\end{align}
The set of all NE of the Game 4 can be expressed as:
\begin{align}
&E(t) =  \\
&\begin{cases}
\{ \underline{a}_{t} \in \mathcal{A}(t)  \text{ s.t. } ||\underline{a}_{t}||_1=a_i(t)=1,i \in Z(t) \} \\
\hspace{6cm} \text{ if } Z(t) \neq \varnothing \\
\{ \underline{a}_{t} \in \mathcal{A}(t)  \text{ s.t. } ||\underline{a}_{t}||_1=1 \} \\
\hspace{6cm}\text{ otherwise}.
\end{cases}  \nonumber
\end{align}
The \emph{PoA} of Game 4 can be expressed as follows:
\begin{align}
&PoA^\prime(t) = \\
&\begin{cases}
\cfrac{\underset{||\underline{a}_t||_1=a_i(t)=1}{\text{min}}\phi^\prime (\underline{a}_{t-1},\underline{h}_{t-1})+\cfrac{||\underline{\mathcal{D}}_{\omega_i,\kappa^i}||_1}{M} +1 + Y_i(t) }{\underset{||\underline{a}_t||_1=a_i(t)=1}{\text{max}}\phi^\prime (\underline{a}_{t-1},\underline{h}_{t-1})+\cfrac{||\underline{\mathcal{D}}_{\omega_i,\kappa^i}||_1}{M} +1 + Y_i(t)}  \\
\hspace{6cm} \text{ if } Z(t) \neq \varnothing \\
\cfrac{\underset{||\underline{a}_t||_1=a_i(t)=1}{\text{min}}\phi^\prime (\underline{a}_{t-1},\underline{h}_{t-1})+\cfrac{||\underline{\mathcal{D}}_{\omega_i,\kappa^i}||_1}{M}  +1 + Y_0(t) }{\underset{||\underline{a}_t||_1=a_i(t)=1}{\text{max}}\phi^\prime (\underline{a}_{t-1},\underline{h}_{t-1})+\cfrac{||\underline{\mathcal{D}}_{\omega_i,\kappa^i}||_1}{M} +1 + Y_0(t)} \\
\hspace{6cm} \text{ otherwise }.
\end{cases} \nonumber
\end{align}
Moreover, the \emph{PoA} can be bounded by the following expressions:
\begin{align}
&1 \geq PoA^\prime(t) \geq 
\begin{cases}
1 - \cfrac{1+Y_0(t)-\underset{j \in Z(t)}{min }(Y_j(t))}{\phi^\prime (\underline{a}_{t-1},\underline{h}_{t-1})+2+Y_0(t)}  \hspace{0.1cm}  \\
\hspace{4cm} \text{if } Z(t) \neq \varnothing\\
1 - \cfrac{1}{\phi^\prime (\underline{a}_{t-1},\underline{h}_{t-1})+2+Y_0(t)} \hspace{0.1cm} \\
\hspace{4cm}\text{otherwise }.\\
\end{cases}
\end{align}

\section{Proof of Theorem 8}
As for game 2, the utility function of Game 4 can be written as:
\begin{align}
\phi^\prime(\underline{a}_t,\underline{h}_t) = \phi^\prime(\underline{a}_{t-1},\underline{h}_{t-1}) + \eta(\underline{a}_{t},,\underline{h}_t),
\end{align}
with:
\begin{align}
&\eta(\underline{a}_{t},\underline{h}_t) =  ||\underline{a}_t||_1 + \cfrac{||\underline{\mathds{D}}(\underline{a}_t,\underline{h}_t)-\underline{\mathds{D}}(\underline{a}_{t-1},\underline{h}_{t-1})||_1}{M} + \nonumber \\ 
&\begin{cases}
1 \hspace{0.1cm}& \text{if } ||\underline{\mathds{D}}(\underline{a}_t,\underline{h}_t)||_\infty >  ||\underline{\mathds{D}}(\underline{a}_{t-1},\underline{h}_{t-1})||_\infty \\
0 \hspace{0.1cm}& \text{otherwise } .
\end{cases}
\end{align}

Let $Q(t)$ be the set of players that can potentially increase the cost function at stage $t$ of the game as defined for Game 2. If $Q(t)= \varnothing$, then any action profile $\underline{a}_{t}^* \in \mathcal{A}(t)$ will not increase the maximum decoding delay and we have:
\begin{align}
\eta(\underline{a}_{t},\underline{h}_t) &=  ||\underline{a}_t||_1 + \cfrac{||\underline{\mathds{D}}(\underline{a}_t,\underline{h}_t)-\underline{\mathds{D}}(\underline{a}_{t-1},\underline{h}_{t-1})||_1}{M}  \\
=&\begin{cases}
\cfrac{||\underline{M}^w(t)||_1}{M} \hspace{0.1cm}& \text{if } ||\underline{a}_t||_1=0 \\
1 +  \cfrac{||\underline{\mathcal{D}}_{w_i,\kappa^i}(t)||_1}{M} \hspace{0.1cm}& \text{if } ||\underline{a}_t||_1=a_i(t)=1\\
||\underline{a}_t||_1 + \cfrac{||\underline{M}^w(t)||_1}{M} \hspace{0.1cm}& \text{otherwise } .
\end{cases} \nonumber
\end{align}

Clearly, we can see that all the profiles $\underline{a}_{t}^*$ with $||\underline{a}_{t}^*||_1>1$ or $||\underline{a}_{t}^*||_1=0$ are not anymore NE of the game. Only action profile with one entry not equal to $0$ are NE. Define $Z(t)$ as the set of players that can transmit a combination such that the maximum decoding delay do not increase (see definition in Game 2). If $Z(t) = \varnothing$ it is clear to see that only the action profile with $||\underline{a}_{t}^*||_1=a_i(t)=1$ are NE and yield the cost $\phi^\prime(\underline{a}_{t-1},\underline{h}_{t-1}) + 2 +  \cfrac{||\underline{\mathcal{D}}_{w_i,\kappa^i}(t)||_1}{M}$.

Now assume $Z(t) \neq \varnothing$ the utility varies with the chosen action profile. Let $\underline{a}_{t}^*$ be an action profile such that $||\underline{a}_{t}^*||_1=\alpha>1$. For some unilateral deviation of player that are transmitting, the cost function will decrease. In other words, we have:
\begin{align}
&\phi^\prime(\underline{a}^*_{t},\underline{h}_t) - \phi^\prime(\underline{a}_{t,i},\underline{a}^*_{t,-i},\underline{h}_t) = \\
&\begin{cases}
1 > 0  \hspace{0.7cm}\text{if } ||\underline{a}_{t,i},\underline{a}^*_{t,-i}||_1 > 1 \\ 
2 + \cfrac{||\underline{M}^w(t)|| - ||\underline{\mathcal{D}}_{w_i,\kappa^i}(t)||_1}{M}  > 0 \\
 \hspace{1.5cm}\text{if } ||\underline{a}_{t,i},\underline{a}^*_{t,-i}||_1 = 1 \text{ and } i \in Z(t) \\
1 + \cfrac{||\underline{M}^w(t)|| - ||\underline{\mathcal{D}}_{w_i,\kappa^i}(t)||_1}{M}  > 0 \\
\hspace{1.5cm}\text{if } ||\underline{a}_{t,i},\underline{a}^*_{t,-i}||_1 = 1 \text{ and } i \notin Z(t).
\end{cases}\nonumber
\end{align}

Thus, all action profile $\underline{a}_{t}^*$ with $||\underline{a}_{t}^*||_1>1$ are not NE and it is clear to see that action profile $\underline{a}_{t}^*$ such that $||\underline{a}_{t}^*||_1=0$ are also not NE of the game. Let $\underline{a}_{t}^*$  be an action profile such that $||\underline{a}_{t}^*||_1=a_i(t)=1$ and $i \notin Z(t)$. The difference in the cost when player $i$ deviates is:
\begin{align}
&\phi^\prime(\underline{a}^*_{t},\underline{h}_t) - \phi^\prime(\underline{a}_{t,i},\underline{a}^*_{t,-i},\underline{h}_t)  \nonumber \\
& \qquad = 1 + \cfrac{||\underline{\mathcal{D}}_{w_i,\kappa^i}(t)||_1 - ||\underline{M}^w(t)||}{M} > 0.
\end{align}

Therefore such action profile is not a NE. Now consider the action profile $\underline{a}_{t}^*$  such that $||\underline{a}_{t}^*||_1=a_i(t)=1$ and $i \in Z(t)$. The difference in the utility if any player deviates:
\begin{align}
&\phi^\prime(\underline{a}_{t,i},\underline{a}^*_{t,-i},\underline{h}_t) - \phi^\prime(\underline{a}^*_{t},\underline{h}_t) =  \\
& \hspace{1.5cm} \begin{cases}
2 + \cfrac{ ||\underline{M}^w(t)|| - ||\underline{\mathcal{D}}_{w_i,\kappa^i}(t)||_1 }{M} > 0 \\
\hspace{1.5cm} \text{if } ||\underline{a}_{t,j},\underline{a}^*_{t,-j}||_1 = 0\\
\cfrac{ ||\underline{M}^w(t)|| - ||\underline{\mathcal{D}}_{w_i,\kappa^i}(t)||_1}{M} > 0  \\
\hspace{1.5cm}\text{if } ||\underline{a}_{t,j},\underline{a}^*_{t,-j}||_1 = 2.
\end{cases} \nonumber
\end{align}
The set of all NE of the Game 4 can be expressed as:
\begin{align}
&E(t) =  \\
&\begin{cases}
\{ \underline{a}_{t} \in \mathcal{A}(t)  \text{ s.t. } ||\underline{a}_{t}||_1=1 \}\\
\hspace{6cm} \text{if } Z(t) = \varnothing \\
\{ \underline{a}_{t} \in \mathcal{A}(t)  \text{ s.t. } ||\underline{a}_{t}||_1=a_i(t)=1,i \in Z(t) \}\\
\hspace{6cm} \text{ otherwise}. 
\end{cases}  \nonumber
\end{align}
The \emph{PoA} of Game 4 can be expressed as follows:
\begin{align}
&PoA^\prime(t) = \\
&\begin{cases}
\cfrac{\underset{||\underline{a}_t||_1=a_i(t)=1}{\text{min}}\phi^\prime (\underline{a}_{t-1},\underline{h}_{t-1})+\cfrac{||\underline{\mathcal{D}}_{\omega_i,\kappa^i}||_1}{M} +2 }{\underset{||\underline{a}_t||_1=a_i(t)=1}{\text{max}}\phi^\prime (\underline{a}_{t-1},\underline{h}_{t-1})+\cfrac{||\underline{\mathcal{D}}_{\omega_i,\kappa^i}||_1}{M} +2} \\
\hspace{6cm} \text{ if } Z(t) = \varnothing \\
\cfrac{\underset{||\underline{a}_t||_1=a_i(t)=1}{\text{min}}\phi^\prime (\underline{a}_{t-1},\underline{h}_{t-1})+\cfrac{||\underline{\mathcal{D}}_{\omega_i,\kappa^i}||_1}{M} +1 }{\underset{||\underline{a}_t||_1=a_i(t)=1}{\text{max}}\phi^\prime (\underline{a}_{t-1},\underline{h}_{t-1})+\cfrac{||\underline{\mathcal{D}}_{\omega_i,\kappa^i}||_1}{M} +1}  \\
\hspace{6cm}\text{ otherwise }.
\end{cases} \nonumber
\end{align}
Moreover, the \emph{PoA} can be bounded by the following expressions:
\begin{align}
&1 \geq PoA^\prime(t) \geq 
\begin{cases}
1 - \cfrac{1}{\phi^\prime (\underline{a}_{t-1},\underline{h}_{t-1})+3}  \hspace{0.1cm} &\text{if } Z(t) = \varnothing\\
1 - \cfrac{1}{\phi^\prime (\underline{a}_{t-1},\underline{h}_{t-1})+2} \hspace{0.1cm} &\text{otherwise }.\\
\end{cases}
\end{align}

\section{Proof of Theorem 9}
Similarly to Game 1, the cost function of Game 3 can be written as:
\begin{align}
\phi^\prime(\underline{a}_t,\underline{h}_t) = \phi^\prime(\underline{a}_{t-1},\underline{h}_{t-1}) + \psi(\underline{a}_{t},\boldsymbol{\omega}(t)),
\end{align}
with:
\begin{align}
&\psi((\underline{a}_{t},\boldsymbol{\omega}(t)) =  \\
&\begin{cases}
||\underline{M}^w(t)||_1+ ||\underline{a}_t||_1 \hspace{0.1cm}& \text{if } ||\underline{a}_t||_1 \neq 1\\
||\underline{\mathcal{D}}_{w_i,\kappa^i}(t)||_1+ 1 \hspace{0.1cm}& \text{otherwise } \text{with } i \text{ s.t. } a_i(t) = ||\underline{a}_t||_1 .
\end{cases}\nonumber
\end{align}

Let $\underline{a}_{t}^*$ be an action profile such that $||\underline{a}_{t}^*||_1>2$. For any unilateral deviation of a player who is transmitting, the cost function increase. In other words, we have:
\begin{align}
&\phi^\prime(\underline{a}^*_{t},\underline{h}_t)- \phi^\prime(\underline{a}_{t,i},\underline{a}^*_{t,-i},\underline{h}_t)  =  \nonumber \\ 
&\hspace{4cm} ||\underline{a}_t^*||_1 - ||(\underline{a}_{t,i},\underline{a}^*_{t,-i})||_1 = 1 \nonumber \\
& \hspace{1cm}\forall~a_i(t) \neq a^*_i(t) \in \mathcal{A}_i(t) \text{ such that } a^*_i(t)=1.
\end{align}

The cost function changes by some unilateral deviation. Hence all action profile $\underline{a}_{t}^*$ such that $||\underline{a}_{t}^*||_1>2$ are no more NE of the game. By the same argument used in Game 1, the action profiles $\underline{a}_{t}^*$ such that $||\underline{a}_{t}^*||_1=0$ or $||\underline{a}_{t}^*||_1=2$ are not NE of the game.

For the action profile $\underline{a}_{t}^*$ such that $||\underline{a}_{t}^*||_1=a_j(t)=1$, the difference of cost function for any unilateral deviation is the following:
\begin{align}
&\phi^\prime(\underline{a}_{t,i},\underline{a}^*_{t,-i},\underline{h}_t) - \phi^\prime(\underline{a}^*_{t},\underline{h}_t) = \nonumber \\
&\begin{cases}
||\underline{M}^w(t)||_1 - ||\underline{\mathcal{D}}_{w_j,\kappa^j}(t)||_1 + 1 > 0 \\
\hspace{1cm}\forall~a_i(t) \neq a^*_i(t) \in \mathcal{A}_i(t),~\forall~i \in \mathcal{M} \setminus \{j\} \\
||\underline{M}^w(t)||_1  - ||\underline{\mathcal{D}}_{w_j,\kappa^j}(t)||_1 - 1 \geq 0  \\
\hspace{1cm}\forall~a_j(t) \neq a^*_j(t) \in \mathcal{A}_j(t).
\end{cases}
\end{align}

Since no unilateral deviation can yield a lower cost, therefore all action profile $\underline{a}_{t}^*$ such that $||\underline{a}_{t}^*||_1=1$ are NE of the game. The set of all possible NE at stage $t$ of the game becomes:
\begin{align}
E(t) = \{ \underline{a}_{t} \in \mathcal{A}(t) \text{ such that } ||\underline{a}_{t}||_1=1 \}.
\end{align}
The \emph{PoA} of Game 3 can be expressed as:
\begin{align}
PoA(t) = \cfrac{\underset{||\underline{a}_t||_1=a_i(t)=1}{\text{min}}\phi^\prime (\underline{a}_{t-1},\underline{h}_{t-1})+||\underline{\mathcal{D}}_{\omega_i,\kappa^i}||_1 +1 }{\underset{||\underline{a}_t||_1=a_i(t)=1}{\text{max}}\phi^\prime (\underline{a}_{t-1},\underline{h}_{t-1})+||\underline{\mathcal{D}}_{\omega_i,\kappa^i}||_1 +1}.
\end{align}
The \emph{PoA} of Game 3 can be bound by:
\begin{align}
& 1 - \cfrac{||\underline{M}^w(t)||_1-1}{\phi^\prime(\underline{a}_{t-1},\underline{h}_{t-1})+||\underline{M}^w(t)||_1} \leq PoA(t) \leq 1 .
\end{align}

\bibliographystyle{IEEEtran}
\bibliography{references}

\end{document}